\documentclass[letterpaper, 10pt, journal, twoside]{IEEEtran}
\IEEEoverridecommandlockouts

\usepackage{cite}

\usepackage{amsthm}
\usepackage{amsmath}
\usepackage{amssymb}
\usepackage{amsfonts}
\usepackage{bm}
\let\labelindent\relax
\usepackage{enumitem}
\usepackage{algorithm}
\usepackage{algpseudocode}
\usepackage{graphicx}
\usepackage[mathscr]{eucal}
\usepackage{xcolor}
\usepackage[hidelinks]{hyperref}

\graphicspath{{figures/}}
\definecolor{gem-1}{rgb}{0.066,0.443,0.745}
\definecolor{gem-2}{rgb}{0.866,0.329,0}
\definecolor{gem-3}{rgb}{0.929,0.694,0.125}
\definecolor{gem-4}{rgb}{0.521,0.086,0.819}
\definecolor{gem-5}{rgb}{0.231,0.666,0.196}
\definecolor{gem-6}{rgb}{0.184,0.745,0.937}
\definecolor{gem-7}{rgb}{0.819,0.015,0.545}
\definecolor{gem-8}{rgb}{1,0.839,0.039}
\definecolor{gem-9}{rgb}{0.396,0.509,0.992}
\definecolor{gem-10}{rgb}{1,0.27,0.227}
\definecolor{gem-11}{rgb}{0,0.639,0.639}
\definecolor{gem-12}{rgb}{0.796,0.517,0.364}
\definecolor{onehalf-0}{HTML}{383a42} 
\definecolor{onehalf-1}{HTML}{e45649} 
\definecolor{onehalf-2}{HTML}{50a14f} 
\definecolor{onehalf-3}{HTML}{c18401} 
\definecolor{onehalf-4}{HTML}{0184bc} 
\definecolor{onehalf-5}{HTML}{a626a4} 
\definecolor{onehalf-6}{HTML}{0997b3} 
\definecolor{onehalf-7}{HTML}{fafafa} 

\providecommand{\changed}[1]{}
\renewcommand{\changed}[1]{\textcolor{onehalf-3}{#1}}

\newcommand{\mat}[1]{\begin{bmatrix} #1 \end{bmatrix}}

\newtheorem{theorem}{Theorem}
\newtheorem{definition}{Definition}
\newtheorem{lemma}{Lemma}
\newtheorem{proposition}{Proposition}
\newtheorem{assumption}{Assumption}
\newtheorem{corollary}{Corollary}
\newtheorem{remark}{Remark}

\newtheorem{property}{Property}
\newtheorem{problem}{Problem}

\DeclareMathOperator{\argmin}{argmin}

\newcommand{\N}{\mathbb{N}}

\newcommand{\R}{\mathbb{R}}

\newcommand{\cK}{\mathscr{K}}
\newcommand{\cL}{\mathscr{L}}

\newcommand{\sC}{\mathcal{C}}
\newcommand{\sD}{\mathcal{D}}

\newcommand{\sI}{\mathcal{I}}

\newcommand{\sK}{\mathcal{K}}

\newcommand{\sN}{\mathcal{N}}

\newcommand{\sR}{\mathcal{R}}

\newcommand{\sT}{\mathcal{T}}
\newcommand{\sU}{\mathcal{U}}
\newcommand{\sV}{\mathcal{V}}
\newcommand{\sW}{\mathcal{W}}
\newcommand{\sX}{\mathcal{X}}

\usepackage{siunitx}
\usepackage[normalem]{ulem}

\begin{document}
\title{Robust Dynamic Safety Margins are\\ Robust Control Barrier Functions}

\author{Victor Freire, Saber Jafarpour, and Marco M. Nicotra \thanks{This research is supported by the NSF-CMMI Award \#2411667. \emph{(Corresponding author: Victor Freire.)}}
\thanks{The authors are with the Department of Electrical, Computer \& Energy Engineering, University of Co\-lorado,  Boulder, CO 80309 USA (email: \{vifr9883, saber.jafarpour, marco.nicotra\}@colorado.edu).}}

\maketitle

\begin{abstract}
  This paper establishes robustness properties for control barrier functions (CBFs) constructed using dynamic safety margins (DSMs). We show that DSMs can act as robust CBFs by leveraging input-to-state stability for Lyapunov-based constructions, and contractivity for trajectory-prediction-based approaches. Additionally, we derive closed-form DSM-based robust CBFs (RDSM-RCBFs) specifically for linear systems. Simulations highlight the improved robustness of the proposed RDSM-RCBFs compared to nominal DSM-CBF baselines.
\end{abstract}

\section{Introduction}
Control barrier functions (CBFs) can be used to design safe controllers by filtering potentially unsafe inputs. They are powerful tools similar in nature to control Lyapunov functions, but with respect to set invariance instead of stability. Robust CBFs (RCBFs) can certify safety even when the system is subject to  unknown external disturbances. One of their main drawbacks is the lack of constructive methods for RCBFs.

\subsection{Related Literature}
To overcome the challenge of constructing CBFs, the literature has explored several frameworks. Hamilton--Jacobi (HJ) reachability analysis and sum-of-squares (SOS) optimization provide principled, systematic approaches for synthesizing control invariant sets and CBFs \cite{choi2021robust, wabersich2023data,wang2018permissive,clark2025semialgebraic}. However, HJ reachability suffers from the curse of dimensionality, restricting its application to low-dimensional systems, while SOS techniques are generally limited to polynomial dynamics. Learning-based techniques have also found success in practice for more general systems \cite{srinivasan2020synthesis, robey2020learning, bajelani2025data}. However, they usually lack formal safety guarantees and rely on accepting mild violations. Backup CBFs are an alternative constructive paradigm that is gaining popularity \cite{gurriet2020scalable, chen2021backup, van2024disturbance, van2026generalizations, dacs2026robust}. Backup CBFs rely on a safe backup controller to enlarge its associated invariant set using trajectory predictions. The approach has shown good performance, but suffers from high computational demand due to the trajectory predictions and usually results in an implicit form for the CBF. In \cite{van2026safety}, a closed-form expression was developed by interpolating between the safe backup controller and the unsafe nominal controller. Backup CBFs are closely related to the trajectory dynamic safety margin (DSM) approach proposed in \cite{freire2026designing}, which is extended to systems subjects to external disturbances in this paper. Dynamic safety margins are a component of the explicit reference governor (ERG) \cite{nicotra2018explicit} offering a measure of trajectory clearance to constraint boundaries under a prestabilizing feedback law. It was shown in \cite{freire2026using} that DSMs are CBFs over an augmented state-reference space. Other works relating the RG and CBF literatures include \cite{freire2023systematic, li2023governor, kaviani2026synthesis}

In parallel to the systematic design efforts, researchers worked on robustifying CBFs against model inaccuracies and external disturbances. Two distinct philosophies emerged to address this problem: On the one hand, works like \cite{jankovic2018robust} enforce worst-case disturbance rejection over bounded uncertainty sets. This is also the philosophy we take on this paper. On the other hand, \cite{kolathaya2019input} introduced input-to-state safety (ISSf) as a counterpart to input-to-state stability to generalize set invariance by guaranteeing that bounded disturbances result in bounded, predictable expansions of the invariant set. ISSf can generally achieve less conservative results but tends to allow small violations of the safety set which may be unacceptable in safety-critical applications. Extensions like tunable ISSf \cite{alan2021safe} further enhance robustness by dynamically adapting parameter estimates or disturbance bounds.

\subsection{Contributions}
The contributions of this paper are:
\begin{itemize}
  \item We formalize the definition of robust dynamic safety 
  margins (RDSMs) and show that they are RCBFs for an augmented system obtained by concatenating the state of the system and the reference to the prestabilizing controller that was used to construct the RDSM.
  \item We formulate a robust DSM-CBF safety filter in the augmented space that yields robust safe inputs. We also highlight that the resulting optimization problem is persistently feasible.
  \item We show how to construct RDSMs using two distinct approaches: One based on ISS Lyapunov functions and another based on trajectory predictions of the prestabilized system and its contraction properties.
  \item We demonstrate the performance of RDSM-RCBFs and their ability to reject disturbances in simulation examples. 
\end{itemize}

The rest of the paper is organized as follows: Section~\ref{sec:prelim} presents the definition of robust control invariance, its connection to robust control barrier functions, and formulates the RCBF safety filter that generates robust safe inputs; Section~\ref{sec:rdsm} defines the robust dynamic safety margin, shows RDSMs are RCBFs, and formulates the RDSM-RCBF safety filter in the augmented state-reference space; Section~\ref{sec:lyap-rdsm} shows how to construct RDSMs using ISS Lyapunov functions; Section~\ref{sec:traj-rdsm} shows how to construct RDSMs using trajectory predictions; Section~\ref{sec:examples} presents the numerical simulations; Section~\ref{sec:rdsm-rcbf-linear} shows how both of these RDSM design approaches simplify for linear systems subject to polyhedral constraints; Section~\ref{sec:conclusion} concludes the paper.

\emph{Notation:} Given a symmetric, positive definite matrix $P \in \R^{n \times n}$, the $P$-weighted norm is $\|\bm{x}\|_P = \|P^{1/2} \bm{x} \| = \sqrt{\bm{x}^\top P \bm{x}}$. With this, the weighted induced matrix norm is $\|A\|_P = \|P^{1/2} A P^{-1/2}\|$ and the weighted logarithmic norm is $\mu_P(A) = \lambda_{\max}(PAP^{-1} + A^\top)$, where $\lambda_{\max}(A)$ returns the largest eigenvalue of the symmetric part of $A$. 

\section{Preliminaries} \label{sec:prelim}
Consider a control- and disturbance-affine model
\begin{equation} \label{eq:sys-disturbance}
  \dot{\bm{x}} = f(\bm{x}) + g(\bm{x}) \bm{u} + p(\bm{x}) \bm{d},
\end{equation}
where $f:\R^n \to \R^n$, $g: \R^n \to \R^{n\times m}$ and $p:\R^n \to \R^{n\times n_d}$ are continuous functions. The system is subject to two types of inputs: $\bm{u} \in \R^m$ is the control action and $\bm{d} \in \R^{n_d}$ is an unknown external disturbance. The system's actuation capabilities are described by a convex, compact set denoted $\sU \subset \R^m$, and we call controls $\bm{u} \in \sU$ \emph{feasible}. Additionally, we limit our study of \eqref{eq:sys-disturbance} to bounded disturbances.
\begin{assumption} \label{ass:compact-convex-disturbance-set}
  The disturbance set $\sW \subset \R^{n_d}$ is compact, convex and $0 \in \sW$.
\end{assumption}
Let $\sX \subset \R^n$ be the state constraint set describing the safe operating region of system \eqref{eq:sys-disturbance} and assume it is closed. An obstacle, for instance, may not intersect $\sX$. We call states $\bm{x} \in \sX$ and sets $\sC \subset \sX$ \emph{admissible}. Along with the broad constrained control literature, this paper addresses the following problem.

\begin{problem}[Robust constrained control] \label{prob:robust-cst-ctrl}
  Given an ad\-mi\-s\-si\-ble initial state $\bm{x}_0 \in \sX$, find a feasible
  control signal $u:[0,\infty) \to \sU$ such that, in spite of any disturbance
  signal $d:[0,\infty) \to \sW$, the trajectory $x(t) \in \sX$ for all $t \geq
  0$; where $x:[0,\infty) \to \R^n$ is the unique solution to the initial
  value problem (IVP)
  \begin{equation}\label{eq:ivp-disturbance}
    \dot{x}(t) \!=\! f\big(x(t)\big) + g\big(x(t)\big) u(t) + p\big(x(t)\big)
    d(t), \; x(0) \!=\! \bm{x}_0.
  \end{equation}
\end{problem}
We call system \eqref{eq:sys-disturbance} \emph{robustly safe} if it admits solutions to Problem~\ref{prob:robust-cst-ctrl} for any $\bm{x}_0 \in \sX$.

\subsection{Robust Control Invariance}
While robust safety is a difficult property to verify, the robust control invariance (RCI) property is closely related.
\begin{definition}\label{def:rci}
  A closed set $\sC \subset \R^n$ is \emph{robust control invariant} (RCI) if for any initial state $ \bm{x}_0 \in \sC$, there exists a horizon $T>0$, and an input signal $u:[0,T] \to \sU$ for which, in spite of  any disturbance signal $d:[0,T] \to \sW$, at least one solution to the initial value problem \eqref{eq:sys-disturbance}, with $x(0) = \bm{x}_0$, remains in $\sC$. That is, $\forall t \in [0,T]$, $x(t) \in \sC$.
\end{definition}
Definition~\ref{def:rci} is usually called local viability \cite{aubin1991viability} or weak control invariance \cite{blanchini1999set} and differs from the usual RCI definition found in the literature in that the solution to the IVP \eqref{eq:ivp-disturbance} may not be unique. While uniqueness is a desirable property, the Lipschitz continuity assumptions it requires are commonly found in the literature and studied in other works. In this paper, we focus on providing a faithful representation of the viability theorems found in \cite{aubin1991viability}. Regarding the finite horizon $T$, note that only two options are possible because $\sC$ is closed \cite[Theorem 3.3.4]{aubin1991viability}: Either $T=\infty$, or the solution exhibits finite escape at $T<\infty$ (which is safe behavior provided $\sC \subset \sX$, but will never happen if $\sC$ is bounded \cite[Theorem 3.3.5]{aubin1991viability}).

The Nagumo theorem provides a characterization of RCI in terms of the Bouligand contingent cone \cite[Definition~4.1.1]{aubin2009set}.
\begin{theorem}[Nagumo \cite{nagumo1942lage}]\label{thm:rnagumo} 
  A closed set $\sC \subset \R^n$ is RCI if and only if $\forall \bm{x} \in \sC$, $\exists \bm{u} \in \sU$ such that $\forall \bm{d} \in \sW$, 
  \begin{equation} \label{eq:rnagumo}
    f(\bm{x}) + g(\bm{x}) \bm{u} + p(\bm{x}) \bm{d} \in \sT_{\sC}(\bm{x}),
  \end{equation}
  where $\sT_{\sC}(\bm{x})$ is the contingent cone to $\sC$ in $\bm{x}$.
\end{theorem}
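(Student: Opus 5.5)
The plan is to reduce this robust version of Nagumo's theorem to the classical viability theorem for differential inclusions in \cite{aubin1991viability}. Define the set-valued map $F(\bm{x}) = \{ f(\bm{x}) + g(\bm{x})\bm{u} + p(\bm{x})\bm{d} : \bm{u} \in \sU, \bm{d}\in\sW\}$. Because $f,g,p$ are continuous and $\sU,\sW$ are convex and compact (Assumption~\ref{ass:compact-convex-disturbance-set}), $F$ is upper semicontinuous with nonempty convex compact values, i.e.\ a Marchaud map. The difficulty is that RCI is a game-type property: $\bm{u}$ is chosen before the unknown $\bm{d}$ is revealed, so we cannot simply apply the viability theorem to $F$. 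Instead, for each $\bm{x}$ we will work with the set of ``robust'' controls
\[
  \sK(\bm{x}) = \{\bm{u}\in\sU : f(\bm{x}) + g(\bm{x})\bm{u} + p(\bm{x})\bm{d} \in \sT_{\sC}(\bm{x})\ \forall \bm{d}\in\sW\},
\]
and the parametrized inclusion $\dot x \in f(x) + g(x)u + p(x)d(t)$.

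\emph{Necessity.} Fix $\bm{x}\in\sC$. By RCI, some $u$ keeps a solution in $\sC$ for every $d$. In particular, take constant disturbances $d\equiv\bm{d}$. Along the viable solution, the difference quotient $(x(h)-\bm{x})/h$ lies in $(\sC-\bm{x})/h$ and, by continuity, converges to $f(\bm{x}) + g(\bm{x})\bar{\bm{u}} + p(\bm{x})\bm{d}$ along a subsequence. Here $\bar{\bm{u}}$ is a cluster point of the averages $\tfrac1h\int_0^h u$, which lie in $\sU$ since $\sU$ is convex and compact. The limit therefore lies in $\sT_{\sC}(\bm{x})$. The subtle point is that $\bar{\bm{u}}$ must not depend on $\bm{d}$. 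We will use that the control signal is chosen independently of the disturbance (it is fixed before $d$), so the same averages, and hence the same cluster point $\bar{\bm{u}}$, serve every $\bm{d}\in\sW$ simultaneously. This is where the open-loop quantifier order in Definition~\ref{def:rci} is essential.

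\emph{Sufficiency.} Assume $\sK(\bm{x})\neq\emptyset$ for all $\bm{x}\in\sC$. The standard route is to build a viable solution for each fixed measurable $d(\cdot)$, using a selection $\bm{u}$ of $\sK$ so that the non-autonomous inclusion $\dot x \in f(x)+g(x)\sK(x)+p(x)d(t)$ satisfies the tangential condition. One then applies the viability theorem (\cite[Theorem 3.3.2]{aubin1991viability}) via Euler polygonal approximations and a compactness (Arzelà--Ascoli and convergence) argument. The main obstacle is regularity: $\sK$ need not be upper semicontinuous or convex-valued, because $\sT_{\sC}$ is not convex in general. I would handle this by replacing $\sT_{\sC}$ with its closed convex hull (the Clarke cone or $\overline{\mathrm{co}}\,\sT_{\sC}$). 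This is permissible because the viability theorem only requires $F(\bm{x})\cap\overline{\mathrm{co}}\,\sT_{\sC}(\bm{x})\neq\emptyset$ for Marchaud maps. With this replacement, $\sK$ becomes convex-valued, and a limit argument gives closed graph. The existence of a feedback $u$ that works against all $d$ is then obtained by fixing the control constructed from the inclusion on a short horizon $T$, as in the local viability statement. This finite horizon is exactly what Definition~\ref{def:rci} permits.
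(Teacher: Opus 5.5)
Your necessity argument is correct. The averages $\frac{1}{h}\int_0^h u$ depend only on the open-loop signal, so a single cluster point $\bar{\bm u}\in\sU$ serves every constant disturbance. The difference quotients of the viable solutions then place $f(\bm x)+g(\bm x)\bar{\bm u}+p(\bm x)\bm d$ in $\sT_\sC(\bm x)$ for all $\bm d\in\sW$. The paper gives no proof of this theorem, only the citation, so your argument is judged here on its own. The genuine gap is the last step of sufficiency. Whatever control you extract from the viability theorem, whether a selection of $\sK$ evaluated along $x(t)$ or a limit of Euler polygons, is generated along a viable trajectory for one particular $d(\cdot)$, and hence depends on $d$. ``Fixing'' it as an open-loop signal on $[0,T]$ says nothing about other disturbances, whereas Definition~\ref{def:rci} demands one $u$ that works against every $d$. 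Shrinking the horizon does not help.

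In fact no argument can close this gap: with the open-loop quantifier order of Definition~\ref{def:rci}, the ``if'' direction is false. Take $\sU=\sW=[-1,1]$, $\dot x_1=d$, $\dot x_2=\tanh(x_1)u-\tanh^2(x_1)$, and $\sC=\{\bm x\in\R^2 : x_2\ge 0\}$. On $\partial\sC$ we have $\sT_\sC(\bm x)=\R\times[0,\infty)$, and $\bm u=\tanh(x_1)\in\sU$ gives $\dot x_2=0$ for every $\bm d$. So the pointwise condition holds on all of $\sC$. Yet from $\bm x_0=0$, for any $T>0$ and any $u:[0,T]\to\sU$, the unique solutions generated by $d\equiv 1$ and $d\equiv -1$ satisfy \[x_2^{+}(T)+x_2^{-}(T)=-2\int_0^T\tanh^2(s)\,ds<0,\] so one of them leaves $\sC$. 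The pointwise condition only yields robust invariance for a feedback $u=\mu\big(x(t)\big)$; here $\mu(\bm x)=\tanh(x_1)$ works. Even then you need a regular (e.g.\ continuous) selection $\mu$ of $\sK_\sC$, which nonemptiness of $\sK_\sC(\bm x)$ alone does not provide. The statement therefore has to be repaired by a closed-loop notion of RCI rather than proved as written.

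There are also secondary problems. The viability theorem you invoke concerns autonomous Marchaud inclusions, not the time-varying inclusion $\dot x\in f(x)+g(x)\sK(x)+p(x)d(t)$ with measurable $d$. Your $\sK$ does not have closed graph even after convexification, because interior points, where the cone is $\R^n$, accumulate at boundary points. Finally, the Clarke cone is contained in $\sT_\sC$; it is not its closed convex hull.
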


While the Nagumo theorem characterization was groundbreaking, working with the contingent cone can be cumbersome. For sets defined as the superlevel of continuously differentiable functions
\begin{equation}\label{eq:set-superlevel}
    \sC = \left\{\bm{x} \in \R^n \;\Big|\; \min_{i \in \{1,\ldots,q\}} h_i(\bm{x}) \geq 0\right\},
\end{equation}
the contingent cone simplifies greatly provided they satisfy a transversality condition captured by the following constraint qualification.
\begin{property}[MFCQ \cite{mangasarian1967fritz}]
  Let $h: \R^n \to \R^{q}$ be a continuously differentiable function with superlevel set $\sC$ given in \eqref{eq:set-superlevel}. We say the \emph{MFCQ} holds if  \begin{equation}\label{eq:MFCQ}
    \forall \bm{x} \in \sC, \quad \exists \bm{z} \in \R^n, \quad \forall i \in \sI(\bm{x}), \quad D h_i(\bm{x}) \bm{z} > 0,      
  \end{equation}
  where $\sI(\bm{x}) = \{i \mid h_i(\bm{x}) = 0\}$ is the set of active constraints.
\end{property}
Sets $\sC$ defined as in \eqref{eq:set-superlevel} for which the MFCQ holds were called \emph{practical} in \cite{blanchini2008set}.

\begin{proposition} \label{prop:contingent-cone-practical}
  Let $h:\R^n \to \R^q$ be a continuously differentiable function and let $\sC \subset \R^n$ be its superlevel set \eqref{eq:set-superlevel}. The following hold:
  \begin{enumerate}
    \item General case:
      \begin{equation*}
        \forall \bm{x} \in \partial \sC, \; \sT_{\sC}(\bm{x}) \subset
        \left\{\bm{z} \in \R^n \;\Big|\; \min_{i \in \sI(\bm{x})} Dh_i(\bm{x})
        \bm{z} \geq 0 \right\}.
      \end{equation*}
    \item Practical set: If the MFCQ holds, then
      \begin{equation*}
        \forall \bm{x} \in \partial \sC, \; \sT_{\sC}(\bm{x}) =
        \left\{\bm{z} \in \R^n \;\Big|\; \min_{i \in \sI(\bm{x})} Dh_i(\bm{x})
        \bm{z} \geq 0 \right\}.
      \end{equation*}
  \end{enumerate}
\end{proposition}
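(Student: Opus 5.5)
The two items split naturally. Item 1 follows from a first-order Taylor argument applied to sequences. Item 2 needs a converse inclusion, which I would obtain from a perturbation argument that uses the MFCQ direction. Throughout I use the sequential definition of the Bouligand contingent cone: $\bm{z} \in \sT_{\sC}(\bm{x})$ if and only if there exist $t_k \downarrow 0$ and $\bm{z}_k \to \bm{z}$ with $\bm{x} + t_k \bm{z}_k \in \sC$ for all $k$.

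\emph{Item 1.} Fix $\bm{x} \in \partial\sC$ and note that $\sI(\bm{x}) \neq \emptyset$. Indeed, by continuity of $h$, if every $h_i(\bm{x}) > 0$ then $\bm{x}$ would be interior to $\sC$. Take $\bm{z} \in \sT_{\sC}(\bm{x})$ with the sequences above, and fix any $i \in \sI(\bm{x})$. Since $\bm{x}+t_k\bm{z}_k \in \sC$ and $h_i(\bm{x}) = 0$, continuous differentiability gives
\begin{equation*}
0 \leq h_i(\bm{x}+t_k\bm{z}_k) = t_k\, Dh_i(\bm{x})\bm{z}_k + o(t_k).
\end{equation*}
Dividing by $t_k$ and letting $k\to\infty$ yields $Dh_i(\bm{x})\bm{z} \geq 0$. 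Taking the minimum over $i \in \sI(\bm{x})$ gives the inclusion.

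\emph{Item 2.} The reverse inclusion is the main obstacle: a direction with $Dh_i(\bm{x})\bm{z} = 0$ for some active $i$ need not keep us inside $\sC$ under a straight-line step. I would handle it in two stages.

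\textbf{Strict directions.} Suppose $\bm{z}$ satisfies $Dh_i(\bm{x})\bm{z} > 0$ for all $i \in \sI(\bm{x})$. For each active $i$, Taylor gives $h_i(\bm{x}+t\bm{z}) = t\,Dh_i(\bm{x})\bm{z} + o(t) > 0$ for all small $t > 0$. For each inactive $i$, $h_i(\bm{x}) > 0$, so continuity keeps $h_i(\bm{x}+t\bm{z}) > 0$ for small $t$. Hence $\bm{x}+t\bm{z} \in \sC$ for all small $t>0$, and $\bm{z} \in \sT_{\sC}(\bm{x})$ with the constant sequence $\bm{z}_k = \bm{z}$.

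\textbf{General directions.} Now let $\bm{z}$ only satisfy $\min_{i\in\sI(\bm{x})} Dh_i(\bm{x})\bm{z} \geq 0$. Let $\bm{z}^\ast$ be the MFCQ vector from \eqref{eq:MFCQ} at $\bm{x}$, and set $\bm{z}_\epsilon = \bm{z} + \epsilon \bm{z}^\ast$ for $\epsilon > 0$. Then $Dh_i(\bm{x})\bm{z}_\epsilon > 0$ for every active $i$, so by the strict case $\bm{z}_\epsilon \in \sT_{\sC}(\bm{x})$. The contingent cone of any set is closed, which follows directly from the sequential definition by a diagonal argument. Letting $\epsilon \to 0$ therefore gives $\bm{z} \in \sT_{\sC}(\bm{x})$.

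I expect this density-plus-closedness step to carry the real content of item 2. The MFCQ is exactly what makes the strict-direction set nonempty, and nonemptiness is what lets it be dense in the non-strict set. Without the MFCQ, the strict set may be empty, and then the closure argument gives nothing.
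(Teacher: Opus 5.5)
Your proof is correct. The paper states this proposition without proof, as a standard fact about practical sets \cite{blanchini2008set} and the contingent cone \cite{aubin2009set}, so there is no argument in the paper to compare against. What you give is the standard argument:
\begin{itemize}
  \item for the inclusion, a first-order expansion along $\bm{x}+t_k\bm{z}_k$;
  \item for the converse, three facts combined: strict directions are feasible directions; the MFCQ vector makes the strict cone nonempty and hence dense in the linearized cone; and the contingent cone is closed.
\end{itemize}
Two small points should be stated explicitly. First, you use $\bm{x}\in\sC$, which holds because $\sC$ is closed ($h$ is continuous, so $\partial\sC\subset\sC$). You need it both to get $\sI(\bm{x})\neq\emptyset$ and to have $h_i(\bm{x})>0$ for inactive $i$. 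Second, the remainder in item 1 is $o(t_k)$ only because $\{\bm{z}_k\}$ is bounded. Your argument also shows item 2 pointwise: equality holds at any $\bm{x}\in\partial\sC$ where an MFCQ vector exists.
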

Given this simplification, the following result is the immediate specialization of Theorem \ref{thm:rnagumo} to practical sets.
\begin{theorem} \label{thm:rnagumo-practical}
  Let $h: \R^n \to \R^q$ be continuously differentiable with superlevel set $\sC$ given in \eqref{eq:set-superlevel}. If $\sC$ is RCI, then
  \begin{equation} \label{eq:rnagumo-practical}
    \forall \bm{x} \in \partial \sC, \; \exists \bm{u} \in \sU, \; \forall \bm{d} \in \sW, \; \min_{i \in \sI(\bm{x})} \dot{h}_i(\bm{x}, \bm{u}, \bm{d}) \geq 0.
  \end{equation}
  Conversely, if \eqref{eq:rnagumo-practical} and the MFCQ hold, then $\sC$ is RCI. 
\end{theorem}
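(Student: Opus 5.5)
The plan is to obtain Theorem~\ref{thm:rnagumo-practical} by substituting the two inclusions of Proposition~\ref{prop:contingent-cone-practical} into the robust Nagumo characterization, Theorem~\ref{thm:rnagumo}. Throughout I will use the chain rule identity
\begin{equation*}
\dot h_i(\bm{x},\bm{u},\bm{d}) = Dh_i(\bm{x})\big(f(\bm{x}) + g(\bm{x})\bm{u} + p(\bm{x})\bm{d}\big).
\end{equation*}
With it, the condition $\min_{i\in\sI(\bm{x})}\dot h_i(\bm{x},\bm{u},\bm{d}) \ge 0$ says exactly that the velocity vector $\bm{z} = f(\bm{x})+g(\bm{x})\bm{u}+p(\bm{x})\bm{d}$ lies in the linearized cone $\{\bm{z} \mid \min_{i\in\sI(\bm{x})} Dh_i(\bm{x})\bm{z}\ge 0\}$ that appears in Proposition~\ref{prop:contingent-cone-practical}.

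\emph{Necessity.} Assume $\sC$ is RCI. By Theorem~\ref{thm:rnagumo}, for every $\bm{x}\in\sC$, and in particular every $\bm{x}\in\partial\sC$, there is $\bm{u}\in\sU$ such that for all $\bm{d}\in\sW$ the velocity vector lies in $\sT_\sC(\bm{x})$. Item~1 of Proposition~\ref{prop:contingent-cone-practical} places $\sT_\sC(\bm{x})$ inside the linearized cone, with no constraint qualification needed. By the identity above, this gives \eqref{eq:rnagumo-practical}.

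\emph{Sufficiency.} Assume \eqref{eq:rnagumo-practical} and the MFCQ hold. To apply Theorem~\ref{thm:rnagumo}, condition \eqref{eq:rnagumo} must be checked at every $\bm{x}\in\sC$, and the two cases are handled separately.
\begin{itemize}
\item For $\bm{x}\in\partial\sC$, item~2 of Proposition~\ref{prop:contingent-cone-practical} gives equality between $\sT_\sC(\bm{x})$ and the linearized cone. The $\bm{u}$ supplied by \eqref{eq:rnagumo-practical} therefore places the velocity vector in $\sT_\sC(\bm{x})$ for every $\bm{d}\in\sW$.
\item For $\bm{x}$ in the interior of $\sC$, the contingent cone is all of $\R^n$. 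Since $\sU$ is nonempty, any $\bm{u}\in\sU$ works.
\end{itemize}
Theorem~\ref{thm:rnagumo} then yields that $\sC$ is RCI.

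The main subtlety is the boundary and active-set bookkeeping. Because $h$ is continuous, $\sC$ is closed and the interior/boundary split above is legitimate. At a boundary point, however, $\sI(\bm{x})$ may be empty. This can happen when the $h_i$ are all positive at a point that is nonetheless a limit of the complement, which is impossible by continuity, so boundary points have $\sI(\bm{x})\neq\emptyset$. I would verify this explicitly. I would also note that for points with $\sI(\bm{x})=\emptyset$ the min over an empty set is $+\infty$, so the condition is vacuous and those points are interior anyway.

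Beyond that, the argument is a direct substitution. The only other care needed is to keep the quantifier order $\exists\bm{u}\,\forall\bm{d}$ intact, which it is, since the cone inclusion is applied pointwise to each fixed velocity vector.
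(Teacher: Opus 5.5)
Your proposal is correct and follows the paper's own route: the paper states this theorem as the immediate specialization of Theorem~\ref{thm:rnagumo}, obtained by substituting the two items of Proposition~\ref{prop:contingent-cone-practical}, with interior points trivial because $\sT_{\sC}(\bm{x}) = \R^n$ there. Your aside on $\sI(\bm{x})$ is worded confusingly, since it first says the active set may be empty at a boundary point and then correctly shows by continuity that it cannot, but the conclusion is right and your argument does not actually need it once Proposition~\ref{prop:contingent-cone-practical} is taken as given.
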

Note that
\begin{equation}
  \dot{h}_i(\bm{x}, \bm{u}, \bm{d}) = L_fh_i(\bm{x}) + L_gh_i(\bm{x})\bm{u} + L_ph_i(\bm{x}) \bm{d},
\end{equation}
where $L_fh_i$, $L_gh_i$ and $L_ph_i$ are the Lie derivatives of the $i$-th channel of $h$ along $f$, $g$ and $p$, respectively.

\subsection{Robust Control Barrier Functions}
While condition \eqref{eq:rnagumo-practical} is significantly easier to verify than the contingent cone condition \eqref{eq:rnagumo}, it applies only at the boundary of the set $\partial \sC$, which has measure zero. This can cause issues for numerical methods and limits the practical applicability of Theorem \ref{thm:rnagumo-practical}. Robust control barrier functions provide a certificate of robust control invariance and regularize condition \eqref{eq:rnagumo-practical} to the entire set $\sC$ beyond its boundary by introducing a class $\cK$ term.

\begin{definition}\label{def:rcbf}
  A continuously differentiable function $h: \R^n \to \R^{q}$ with superlevel set $\sC$ given in \eqref{eq:set-superlevel} is a \emph{robust control barrier function} (RCBF) if there exists $\alpha \in \cK$ satisfying $\forall \bm{x} \in \sC$, $\exists \bm{u} \in \sU$, such that
  \begin{equation} \label{eq:rcbf-cond}
    \forall \bm{d} \in \sW, \quad \min_{i \in \{1,\ldots,q\}} \left[\dot{h}_i(\bm{x}, \bm{u}, \bm{d}) + \alpha\big(h_i(\bm{x})\big)\right] \geq 0.
  \end{equation}
\end{definition}

Let us define the \emph{robust regulation map} of $\sC$ by
\begin{equation} \label{eq:rregulation-map}
    \sK_{\sC}(\bm{x}) = \left\{ \bm{u} \in \sU \mid \forall \bm{d} \in \sW, \; \dot{\bm{x}} \in \sT_{\sC}(\bm{x}) \right\},
\end{equation}
where $\dot{\bm{x}}$ is given in \eqref{eq:sys-disturbance}.
Robust control barrier functions certify the robust control invariance of their superlevel set. Under additional conditions, any continuously differentiable function $h:\R^n \to \R^q$ describing a robust control invariant set $\sC$ is an RCBF.
\begin{theorem} \label{thm:rcbf-rci}
  Let $h:\R^n \to \R^q$ be a continuously differentiable function with superlevel set $\sC$ given in \eqref{eq:set-superlevel}. The following hold:
  \begin{enumerate}
    \item If $h$ is an RCBF and the MFCQ holds, then $\sC$ is RCI. \label{thm:rcbf->rci}
    \item If $\sC$ is RCI, compact, and there exists a continuous selection of $\sK_{\sC}$, then $h$ is an RCBF. \label{thm:rci->rcbf}
  \end{enumerate}
\end{theorem}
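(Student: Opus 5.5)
For part~\ref{thm:rcbf->rci}, the plan is to go through Theorem~\ref{thm:rnagumo-practical}. Fix $\bm{x} \in \partial \sC$ and let $\bm{u} \in \sU$ be the input supplied by Definition~\ref{def:rcbf}. For every $i \in \sI(\bm{x})$ we have $h_i(\bm{x}) = 0$, hence $\alpha(h_i(\bm{x})) = \alpha(0) = 0$ because $\alpha \in \cK$. Condition \eqref{eq:rcbf-cond} then gives $\dot{h}_i(\bm{x},\bm{u},\bm{d}) \geq 0$ for all $\bm{d} \in \sW$ and all $i \in \sI(\bm{x})$. This is exactly \eqref{eq:rnagumo-practical}. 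Since the MFCQ holds, the converse direction of Theorem~\ref{thm:rnagumo-practical} yields that $\sC$ is RCI.

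For part~\ref{thm:rci->rcbf}, I will construct $\alpha$ explicitly. Let $k:\sC \to \sU$ be a continuous selection of $\sK_{\sC}$. By Proposition~\ref{prop:contingent-cone-practical}(1), for $\bm{x} \in \partial\sC$ and every $\bm{d} \in \sW$, the vector $f(\bm{x})+g(\bm{x})k(\bm{x})+p(\bm{x})\bm{d}$ lies in $\sT_{\sC}(\bm{x})$. Therefore $\dot h_i(\bm{x},k(\bm{x}),\bm{d}) \geq 0$ for every active $i$; this step needs no MFCQ. Define
\begin{equation*}
  \phi_i(\bm{x}) = \max\Big\{0,\; -\min_{\bm{d} \in \sW} \dot{h}_i\big(\bm{x},k(\bm{x}),\bm{d}\big)\Big\}.
\end{equation*}
Each $\phi_i$ is continuous on $\sC$: $\dot h_i$ is jointly continuous because $h \in C^1$ and $f,g,p,k$ are continuous, and the minimum over the compact set $\sW$ preserves continuity (Berge). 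Moreover $\phi_i(\bm{x}) = 0$ whenever $h_i(\bm{x}) = 0$, by the boundary fact above. Note that a point with $h_i(\bm{x})=0$ for some $i$ automatically lies on $\partial\sC$, so the boundary fact covers every such point.

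Next, set $\beta(r) = \max_{i}\max\{\phi_i(\bm{x}) : \bm{x} \in \sC,\ 0 \le h_i(\bm{x}) \le r\}$. This is finite because $\sC$ is compact, nondecreasing in $r$, and satisfies $\beta(0)=0$. I will then take $\alpha \in \cK$ with $\alpha \geq \beta$ on $[0, \bar r]$, where $\bar r = \max_{i,\sC} h_i$. Then $\dot h_i + \alpha(h_i) \geq -\phi_i + \beta(h_i) \geq 0$ for every $i$ and every $\bm{d}$, which is \eqref{eq:rcbf-cond} with $\bm{u}=k(\bm{x})$.

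The main obstacle is ensuring that such an $\alpha$ exists, which requires $\beta(r) \to 0$ as $r \to 0^+$. I would argue by compactness. Suppose there were sequences $\bm{x}_k \in \sC$ with $h_i(\bm{x}_k) \to 0$ and $\phi_i(\bm{x}_k) \geq \epsilon > 0$. A subsequence converges to some $\bm{x}^\star \in \sC$ with $h_i(\bm{x}^\star)=0$. By continuity $\phi_i(\bm{x}^\star) \geq \epsilon$, contradicting $\phi_i(\bm{x}^\star)=0$. With right-continuity of $\beta$ at $0$ in hand, the standard majorization gives the class $\cK$ function $\alpha(r) = \frac{1}{r}\int_r^{2r}\beta(s)\,ds + r$, which is continuous, strictly increasing, vanishes at $0$, and dominates $\beta$. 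Here $\beta$ is extended as constant beyond $\bar r$.
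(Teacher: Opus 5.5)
Your argument is correct and is essentially the paper's proof. Part~1 is identical. In part~2 your $\beta$ equals $\max_i \max\{0,\hat{\alpha}_i\}$, where $\hat{\alpha}_i(c) = -\inf_{\hat{\sC}_i(c)\times\sW}\xi_i$ is the paper's value function. The only difference is in the last step: the paper appeals to Berge's theorem and Albertini--Sontag's Lemma~4.1, while you give a direct compactness argument for $\beta(0^+)=0$ and the explicit majorant $\frac{1}{r}\int_r^{2r}\beta(s)\,ds + r = \int_1^2 \beta(rt)\,dt + r$. That majorant is indeed of class $\cK$, because $\beta$ is nondecreasing and bounded.

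One of your justifications is false, however. A point $\bm{x}\in\sC$ with $h_i(\bm{x})=0$ need not lie on $\partial\sC$ when the MFCQ fails, and part~2 does not exclude that. For instance, $h(x)=x^2(1-x^2)$ on $\R$ gives the compact set $\sC=[-1,1]$ with $h(0)=0$ but $0\in\mathrm{int}\,\sC$.

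The claim $\phi_i(\bm{x})=0$ still holds at such points. At an interior point with $h_i(\bm{x})=0$, $h_i\geq 0$ on a neighborhood, so $\bm{x}$ is a local minimizer of $h_i$. Hence $Dh_i(\bm{x})=0$ and $\dot h_i(\bm{x},\cdot,\cdot)\equiv 0$. Equivalently, the first-order argument behind Proposition~\ref{prop:contingent-cone-practical}(1) works at every $\bm{x}\in\sC$, with $\sT_{\sC}(\bm{x})=\R^n$ in the interior.

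Replace your remark with this two-case argument. The paper's own proof passes over the same point when it applies Proposition~\ref{prop:contingent-cone-practical}, stated only on $\partial\sC$, to all of $\hat{\sC}_i(0)$.
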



\begin{proof}
  We begin by showing statement \ref{thm:rcbf->rci}. Let $\alpha \in \cK$ satisfy the RCBF condition \eqref{eq:rcbf-cond} at $\bm{x} \in \partial \sC$ with $\bm{u} \in \sU$. Pick $i \in \sI(\bm{x})$ and note that $\forall \bm{d} \in \sW$,
  \begin{align*}
    0 &\leq L_fh_i(\bm{x}) + L_gh_i(\bm{x}) \bm{u} + L_ph_i(\bm{x}) \bm{d} + \alpha\big(h_i(\bm{x})\big) \\
    &\leq L_fh_i(\bm{x}) + L_gh_i(\bm{x}) \bm{u} + L_ph_i(\bm{x}) \bm{d} + \alpha(0) \\
    &\leq L_fh_i(\bm{x}) + L_gh_i(\bm{x}) \bm{u} + L_ph_i(\bm{x}) \bm{d}
  \end{align*}
  We conclude by Theorem \ref{thm:rnagumo-practical} that $\sC$ is robust control invariant.

  Next, we show statement \ref{thm:rci->rcbf}. Let $\mu:\R^n \to \R^m$ be a continuous selection of the robust regulation map $\sK_{\sC}$. Pick any $i \in \{1,\ldots,q\}$ and define the set-valued map $\hat{\sC}_i:[0,\infty) \rightrightarrows \R^n$ by
  \begin{equation}
    \hat{\sC}_i(c) = \{\bm{x} \in \sC \mid 0 \leq h_i(\bm{x}) \leq c\},
  \end{equation}
  the function $\xi_i: \R^n \times \R^{n_d} \to \R$ by
  \begin{equation}
      \xi_i(\bm{x}, \bm{d}) = L_fh_i(\bm{x}) + L_gh_i(\bm{x}) \mu(\bm{x}) + L_ph_i(\bm{x}) \bm{d},
  \end{equation}
  and the value function $\hat{\alpha}_i: [0, \infty) \to \R$ by
  \begin{equation}
    \hat{\alpha}_i(c) = - \inf_{(\bm{x}, \bm{d}) \in \hat{\sC}_i(c) \times \sW} \xi_i(\bm{x}, \bm{d}).
  \end{equation}
  We seek to upper-bound $\hat{\alpha}_i$ by a class $\cK$ function. Leveraging \cite[Lemma 4.1]{albertini1999continuous}, it can be shown that such an upper bound exists if $\hat{\alpha}_i$ is upper semicontinuous and $\hat{\alpha}(0) \leq 0$.

  For the first condition ($\hat{\alpha}_i$ upper semicontinuous), we will invoke the Berge maximum theorem \cite{berge1963topological}. We begin by showing that the set-valued map $c \mapsto \hat{\sC}_i(c) \times \sW$ is upper semicontinuous: Let $\{c_k\} \subset  [0,\infty)$ be a sequence that converges to $c \in [0,\infty)$ and let $\{(\bm{x}_k, \bm{d}_k)\} \subset \R^n \times \R^{n_d}$ be a sequence that converges to $(\bm{x}, \bm{d}) \in \R^n \times \R^{n_d}$ satisfying $(\bm{x}_k, \bm{d}_k) \in \hat{\sC}_i(c_k) \times \sW$. Since $\{(\bm{x}_k, \bm{d}_k)\} \subset \sC \times \sW$ and $\sC \times \sW$ is closed, we have $(\bm{x}, \bm{d}) \in \sC \times \sW$.  Noting that for all $k \in \N$, $0 \leq h_i(\bm{x}_k) \leq c_k$, it follows by continuity of $h_i$ that $0 \leq h_i(\bm{x}) \leq c$, and thus $\bm{x} \in \hat{\sC}_i(c)$. Sinc $\hat{\sC}_i(c) \times \sW \subset \sC \times \sW$ is compact, we conclude by the sequential characterization of upper semicontinuity for compact set-valued maps that $c \mapsto \hat{\sC}_i(c) \times \sW$ is upper semicontinuous at $c$. Since $c \in [0,\infty)$ was arbitrary, $c \mapsto \hat{\sC}_i(c) \times \sW$ is upper semicontinuous.  Recalling that $c \mapsto \hat{\sC}_i(c) \times \sW$ has compact values, it follows by the Berge maximum theorem \cite{berge1963topological} that the value function $\hat{\alpha}_i$ is upper semicontinuous.

  For the second condition ($\hat{\alpha}(0) \leq 0$), pick any $(\bm{x}, \bm{d}) \in \hat{\sC}_i(0) \times \sW \subset \sC \times \sW$ and note that since $\mu(\bm{x}) \in \sK_{\sC}(\bm{x})$, $f(\bm{x}) + g(\bm{x}) \mu(\bm{x}) + p(\bm{x}) \bm{d} \in \sT_{\sC}(\bm{x})$. It follows by Proposition \ref{prop:contingent-cone-practical} that $\xi_i(\bm{x}, \bm{d}) \geq 0$, and thus $\hat{\alpha}_i(0) \leq 0$.

  Having shown its conditions hold, it follows by \cite[Lemma 4.1]{albertini1999continuous} that there exists $\alpha_i \in \cK$ that upper-bounds $\hat{\alpha}_i$. Since $i \in \{1,\ldots, q\}$ was arbitrary, let us define $\alpha(c) = \max_{i \in \{1,\ldots,q\}} \alpha_i(c)$ and note that $\alpha \in \cK$.
  
  Finally, let $\bm{x} \in \sC$
  be given and recall $\mu(\bm{x}) \in \sU$. For any $i \in \{1,\ldots,q\}$
  and any $\bm{d} \in \sW$,
  \begin{align*}
    \xi_i(\bm{x}, \bm{d}) &= L_fh_i(\bm{x}) + L_gh_i(\bm{x}) \mu(\bm{x}) + L_ph_i(\bm{x}) \bm{d}\\
    &\geq  \inf_{(\bm{x}_0, \bm{d}_0) \in
    \hat{\sC}_i\big(h_i(\bm{x})\big) \times \sW} \xi_i(\bm{x}_0, \bm{d}_0) \\
    &\geq -\hat{\alpha}_i\big( h_i(\bm{x})\big) \\
    &\geq -\alpha_i\big( h_i(\bm{x}) \big) \\
    &\geq -\alpha\big(h_i(\bm{x})\big).
  \end{align*}
  We conclude that $h$ is an RCBF.
\end{proof}

\begin{remark} \label{rem:ecs}
  All the presented results can be extended to families of vector-valued functions $h_{\tau}:\R^n \to \R^q$ as long as the index $\tau \in [0,1]$ belongs to a compact set and the joint map $(\tau, \bm{x}) \mapsto h_{\tau}(\bm{x})$ is continuously differentiable. This is particularly useful when working with backup RCBFs \cite{van2024disturbance} and trajectory RDSMs as we will see in upcoming sections. In these cases, we say the family $\{h_{\tau}\}$ has the \emph{extended robust control-sharing property}. The basic control-sharing property was introduced in \cite{xu2018constrained} for CBFs and generalized in \cite{freire2026using} to continuously differentiable functions.
\end{remark}

Besides providing the RCI certificate, RCBFs also allow the construction of safe optimization-based controllers with RCBF safety filters.

\subsection{RCBF Safety Filters}
Given an RCBF $h: \R^n \to \R^q$, the RCBF safety filter is
\begin{align} \label{eq:rcbf-sf}
  \min_{\bm{u}\in \sU} \quad & \|\bm{u} - \kappa(\bm{x})\|^2\\
  \text{s.t.} \quad & \dot{h}_i(\bm{x}, \bm{u}, \bm{d}) \geq -\alpha\big(h_i(\bm{x})\big), \quad \forall \bm{d} \in \sW, \; i = 1,\ldots,q. \nonumber
\end{align}
This optimization problem is intractable because it has infinitely many constraints ($\forall \bm{d} \in \sW$). To address this, we can leverage compactness and convexity of $\sW$ to state an equivalent worst-case safety filter
\begin{align} \label{eq:rcbf-sf-worst-case}
  \min_{\bm{u}\in \sU} \quad & \|\bm{u} - \kappa(\bm{x})\|^2\\
  \text{s.t.} \quad & L_fh_i(\bm{x}) + L_gh_i(\bm{x})\bm{u} + \omega_i(\bm{x}) \geq -\alpha\big(h_i(\bm{x})\big), \nonumber \\ &  i = 1,\ldots,q, \nonumber
\end{align}
where
\begin{equation}
  \omega_i(\bm{x}) = \min_{\bm{d}\in \sW} L_ph_i(\bm{x}) \bm{d}.
\end{equation}
Note that computing $\omega_i(\bm{x})$ is easy because $\sW$ is convex by Assumption \ref{ass:compact-convex-disturbance-set}. Moreover, if $\sW$ is a polytope, $\omega_i$ involves solving a linear program (LP) and its solution can usually be found in closed-form if $\sW$ is an interval. Irrespective of the geometry of $\sW$, if $\sU$ is a polytope, then \eqref{eq:rcbf-sf} is a quadratic program (QP) and can be solved efficiently.

\section{Robust Dynamic Safety Margins} \label{sec:rdsm}
In this paper, we leverage tools from the reference governor (RG) literature \cite{garone2017reference} to design RCBFs. This section introduces the standard assumptions in the RG literature and formulates robust dynamic safety margins (RDSMs). Then, we show their connections to RCBFs.
\begin{assumption} \label{ass:iss}
  In the absence of disturbances (e.g., $d(t) \equiv 0$), system \eqref{eq:sys-disturbance} admits an equilibrium manifold parameterized by a continuous injection $\bm{v} \mapsto \big(\bar{x}(\bm{v}), \bar{u}(\bm{v})\big)$ satisfying
  \begin{equation}
    \forall \bm{v} \in \R^{\ell}, \quad f\big(\bar{x}(\bm{v})\big) + g\big(\bar{x}(\bm{v})\big)
    \bar{u}(\bm{v}) = 0,
  \end{equation}
  Moreover, every equilibrium point $\bar{x}(\bm{v})$ is continuously input-to-state stabilizable (ISS) with respect to the disturbance $\bm{d}$.
\end{assumption}
Assumption 1 is standard in the reference governor literature: it applies to any system that admits a connected path of stabilizable equilibrium points \cite{garone2017reference}; this encompasses a wide range of meaningful systems and applications. In fact, any linear system (in the form of \eqref{eq:sys-disturbance-linear}) with a stabilizable pair $(A,B)$ satisfies Assumption~\ref{ass:iss}.

The vector $\bm{v} \in \R^{\ell}$ is hereafter called a \emph{reference} for the system because it identifies a specific equilibrium point. By Assumption \ref{ass:iss}, let $\pi: \R^n \times \R^{\ell} \to \R^m$ be a continuous function such that $\pi\big(\bar{x}(\bm{v}), \bm{v}\big) = \bar{u}(\bm{v})$, and $\bar{x}(\bm{v})$ is an ISS equilibrium point of the \emph{prestabilized system} subject to disturbances
\begin{equation}\label{eq:sys-disturbance-prestab}
  \dot{\bm{x}} = f_{\pi}(\bm{x}, \bm{v}) + p(\bm{x}) \bm{d},
\end{equation}
with respect to $\bm{d}$, where $f_{\pi}(\bm{x}, \bm{v}) \triangleq f(\bm{x}) + g(\bm{x}) \pi(\bm{x}, \bm{v})$. The function $\pi$ is called the \emph{prestabilizing controller}.

For all $\bm{v} \in \R^{\ell}$, there exist an open neighborhood of $\bar{x}(\bm{v})$, denoted $\sD_{\bm{v}}\subset \R^n$, and comparison functions $\sigma \in \cK\cL$ and $\gamma \in \cK$ such that for any initial state $\bm{x}_0 \in \R^n$, constant reference $\bm{v} \in \R^{\ell}$, and any disturbance signal $d:[0,T] \to \sW$, any solution $x(t)$ of the prestabilized system subject to disturbances \eqref{eq:sys-disturbance-prestab} satisfies $\forall t \in [0,T]$,
\begin{align*}
  \|x(t) \!-\! \bar{x}(\bm{v})\| &
  \leq \sigma \left(\|\bm{x}_0 \!-\! \bar{x}(\bm{v})\|, t\right) +
  \gamma \left(d_{\max}\right)
\end{align*}
where $d_{\max} \triangleq \max_{\bm{d} \in \sW} \|\bm{d}\|$.

Recalling the set of state constraints $\sX \subset \R^n$ that the system \eqref{eq:sys-disturbance} must respect, let us define some related sets. The steady-state admissible reference set
\begin{equation}
  \sV = \{\bm{v} \in \R^{\ell} : \bar{x}(\bm{v}) \in \sX, \bar{u}(\bm{v}) \in \sU\},
\end{equation}
is closed in $\R^{\ell}$. For each reference $\bm{v} \in \R^{\ell}$, the input constraint set $\sU$ induces state constraints on the prestabilized system $f_{\pi}$. This allows us to define the prestabilized constraint set
\begin{equation} \label{eq:set-prestab-set}
  \sX_{\bm{v}} = \{\bm{x} \in \sX \mid \pi(\bm{x}, \bm{v}) \in \sU\},
\end{equation}
which is closed in $\R^n$.
In upcoming results, we will consider the augmented state-reference space obtained by concatenating $(\bm{x}, \bm{v}) \in \R^{n} \times \R^{\ell}$. We will generally denote sets in this space with a tilde (e.g. $\tilde{\sX}$) accent to distinguish them from their state or reference space counterparts. The augmented constraint set
\begin{equation}
  \tilde{\sX} = \{(\bm{x}, \bm{v}) \in \R^n \times \R^{\ell} :\bm{x} \in \sX_{\bm{v}}, \bm{v} \in \sV\},
\end{equation}
is closed in $\R^n \times \R^{\ell}$.

\begin{assumption} \label{ass:set-augmented-roa-open}
    The augmented region of attraction
    \begin{equation}
      \tilde{\sD} = \{(\bm{x}, \bm{v}) \in \R^n \times \R^{\ell} \mid \bm{x} \in    \sD_{\bm{v}}\},
    \end{equation}
    is open in $\R^n \times \R^{\ell}$.  
\end{assumption}
Given a function $\Delta_i:\R^n \times \R^{\ell} \to \R^q$, let us define its augmented active constraint set at $(\bm{x}, \bm{v})$ by
\begin{equation}
  \tilde{\sI}(\bm{x}, \bm{v}) = \left\{i \in \{1,\ldots, q\} \mid \Delta_i(\bm{x}, \bm{v}) = 0 \right\}.
\end{equation}

With these pieces, we can define robust dynamic safety margins, which measure the distance to constraint violation (under worst-case disturbances) of the prestabilized system subject to disturbances \eqref{eq:sys-disturbance-prestab} for a constant reference $\bm{v}$.
\begin{definition} \label{def:rdsm}
  Let $\Delta: \tilde{\sD} \to \R^{q}$ be continuously differentiable with superlevel set $\tilde{\sC} = \{(\bm{x}, \bm{v}) \in \tilde{\sD} : \Delta(\bm{x}, \bm{v}) \geq 0\}$. The function $\Delta$ is called a \emph{robust dynamic safety margin} (RDSM) if the following conditions hold:
  \begin{subequations}
    \begin{align}
      & \tilde{\sC} \subset \tilde{\sX}, \label{eq:rdsm-safe}\\
      & \tilde{\sC} ~ \text{is compact in} ~ \R^n \times \R^{\ell}, \label{eq:rdsm-compact}\\
      & \forall (\bm{x}, \bm{v}) \in \partial \tilde{\sC}, \min_{(i,d) \in \tilde{\sI}(\bm{x}, \bm{v}) \times \sW} \!\!\!\!\! D_1 \Delta_i(\bm{x}, \bm{v}) \dot{\bm{x}} \geq 0, \label{eq:rdsm-invariant}
    \end{align}
  \end{subequations}
  where $\dot{\bm{x}}$ is given in \eqref{eq:sys-disturbance-prestab}.
\end{definition}
Definition \ref{def:rdsm} coincides with the definition of nominal dynamic safety margins when $\sW = \{0\}$ (i.e., no disturbances).

\subsection{RDSMs are RCBFs}
While Definition \ref{def:rdsm} appears opaque at first glance, the RG literature has developed tools to construct RDSMs  \cite{nicotra2016explicit}. Before we review these tools, we show that RDSMs are RCBFs for an augmented system consisting of the concatenation of $\bm{x} \in \R^n$ and $\bm{v} \in \R^{\ell}$ given by
\begin{equation}\label{eq:sys-disturbance-aug}
  \mat{\dot{\bm{x}}\\ \dot{\bm{v}}} = \mat{f(\bm{x}) + g(\bm{x}) \bm{u} +
  p(\bm{x}) \bm{d} \\ \bm{w}},
\end{equation}
where $(\bm{u}, \bm{w}) \in \R^m \times \R^{\ell}$ are the augmented inputs.

\begin{theorem}[\bf{RDSMs are RCBFs}] \label{thm:rdsm-rcbf}
  If $\Delta: \tilde{\sD} \to \R^{q}$ is an RDSM, then $\Delta$ is an RCBF with respect to the augmented system \eqref{eq:sys-disturbance-aug}.
\end{theorem}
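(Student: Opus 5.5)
Our plan is to reduce the claim to statement \ref{thm:rci->rcbf} of Theorem~\ref{thm:rcbf-rci}, applied to the augmented system \eqref{eq:sys-disturbance-aug} with state $(\bm{x},\bm{v})$, input $(\bm{u},\bm{w})$ taking values in $\sU \times \R^{\ell}$ (or a compact box of references, if compactness of the augmented input set is required), and disturbance $\bm{d}\in\sW$. We need three ingredients: $\tilde{\sC}$ is compact, $\tilde{\sC}$ is RCI for the augmented system, and the robust regulation map $\sK_{\tilde{\sC}}$ admits a continuous selection. Compactness is exactly \eqref{eq:rdsm-compact}. For the other two we propose the single candidate feedback
\begin{equation*}
  \tilde{\mu}(\bm{x},\bm{v}) = \big(\pi(\bm{x},\bm{v}),\, 0\big),
\end{equation*}
that is, apply the prestabilizing controller and freeze the reference. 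This map is continuous because $\pi$ is continuous (Assumption~\ref{ass:iss}), and by \eqref{eq:rdsm-safe} every $(\bm{x},\bm{v})\in\tilde{\sC}$ satisfies $\bm{x}\in\sX_{\bm{v}}$, so $\pi(\bm{x},\bm{v})\in\sU$. Hence $\tilde{\mu}$ is feasible on $\tilde{\sC}$.

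Next we show that $\tilde{\mu}(\bm{x},\bm{v})\in\sK_{\tilde{\sC}}(\bm{x},\bm{v})$ for every $(\bm{x},\bm{v})\in\tilde{\sC}$. With $\bm{w}=0$, the augmented derivative of $\Delta_i$ is $D_1\Delta_i(\bm{x},\bm{v})\big(f_\pi(\bm{x},\bm{v})+p(\bm{x})\bm{d}\big)$. By \eqref{eq:rdsm-invariant}, this is nonnegative for every active index $i\in\tilde{\sI}(\bm{x},\bm{v})$ and every $\bm{d}\in\sW$ at boundary points. To pass from this boundary inequality to membership of the augmented velocity in the contingent cone $\sT_{\tilde{\sC}}$, we would invoke the practical-set case of Proposition~\ref{prop:contingent-cone-practical}, which requires the MFCQ for $\Delta$. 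At interior points the contingent cone is all of $\R^{n+\ell}$, so nothing needs to be checked there. One subtlety is that $\Delta$ is only defined on the open set $\tilde{\sD}$ (Assumption~\ref{ass:set-augmented-roa-open}). Since $\tilde{\sC}$ is a compact subset of $\tilde{\sD}$, boundary and tangency computations are local, so this causes no trouble. Membership in $\sK_{\tilde{\sC}}$ at every point gives RCI of $\tilde{\sC}$ via Theorem~\ref{thm:rnagumo}, and it also shows that $\tilde{\mu}$ is a continuous selection.

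Statement \ref{thm:rci->rcbf} of Theorem~\ref{thm:rcbf-rci} then yields the class-$\cK$ function $\alpha$ and hence the RCBF property. The argument in its proof uses only a feasible continuous feedback with nonnegative Lie derivatives at active constraints together with compactness. So we may equally rerun that argument directly with $\mu=\tilde{\mu}$ and $\xi_i$ built from \eqref{eq:rdsm-invariant}, using the first part of Proposition~\ref{prop:contingent-cone-practical} only in the direction needed.

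The main obstacle we expect is the contingent-cone step, since Definition~\ref{def:rdsm} gives no MFCQ. Our first attempt is to avoid the cone entirely. In the proof of Theorem~\ref{thm:rcbf-rci}, the cone was used only to conclude $\xi_i\ge 0$ on $\hat{\sC}_i(0)\times\sW$, and for our $\tilde{\mu}$ this inequality is exactly \eqref{eq:rdsm-invariant}. The upper-semicontinuity and Berge argument then goes through verbatim on the compact set $\tilde{\sC}\times\sW$, giving $\hat{\alpha}_i(0)\le 0$ and a class-$\cK$ bound.
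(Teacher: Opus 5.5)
Your final paragraph, which reruns the Berge/Albertini--Sontag argument from the proof of Theorem~\ref{thm:rcbf-rci} directly with the feedback $(\pi(\bm{x},\bm{v}),0)$ and gets $\hat{\alpha}_i(0)\le 0$ straight from \eqref{eq:rdsm-invariant}, is essentially the paper's proof; you are also right that the reduction to statement~\ref{thm:rci->rcbf} you first propose would need the MFCQ, which an RDSM does not provide, so the direct rerun is the argument that actually works. One small point, which the paper also passes over: a point of $\hat{\sC}_i(0)$ can lie in the interior of $\tilde{\sC}$, where \eqref{eq:rdsm-invariant} does not apply, but there $\Delta_i$ has a local minimum of value $0$ on the open set $\tilde{\sD}$, so $D_1\Delta_i=0$ and $\tilde{\xi}_i=0$ anyway.
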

\begin{proof}
  Pick any $i \in \{1, \ldots, q\}$ and define the set-valued map $\hat{\sC}_i: [0, \infty) \rightrightarrows \tilde{\sD}$ by
  \begin{equation*}
    \hat{\sC}_i(c) = \{(\bm{x}, \bm{v}) \in \tilde{\sC} \mid 0 \leq
    \Delta_i(\bm{x}, \bm{v}) \leq c\},
  \end{equation*}
  the function $\tilde{\xi}_i: \tilde{\sD} \times \R^{n_d} \to \R$ by
  \begin{equation*}
    \tilde{\xi}_i(\bm{x}, \bm{v}, \bm{d}) = D_1 \Delta_i(\bm{x}, \bm{v})
    \big(f_{\pi}(\bm{x}, \bm{v}) + p(\bm{x}) \bm{d}\big),
  \end{equation*}
  and the value function $\hat{\alpha}_i: [0,\infty) \to \R$ by
  \begin{equation*}
    \hat{\alpha}_i(c) = - \inf_{(\bm{x}, \bm{v}, \bm{d}) \in \hat{\sC}_i(c)
    \times \sW} \tilde{\xi}_i(\bm{x}, \bm{v}, \bm{d}).
  \end{equation*}
  We seek to upper-bound $\hat{\alpha}_i$ by a class $\cK$ function as in the proof of Theorem \ref{thm:rcbf-rci}.

  For the first condition ($\hat{\alpha}_i$ upper semicontinuous), it can be shown that $c \mapsto \hat{\sC}_i \times \sW$ is upper semicontinuous. Recalling that $c \mapsto \hat{\sC}_i(c) \times \sW$ has compact values, it follows by the Berge maximum theorem \cite{berge1963topological} that the value function $\hat{\alpha}_i$ is upper semicontinuous.

  For the second condition ($\hat{\alpha}(0) \leq 0$), pick any $(\bm{x}, \bm{v}, \bm{d}) \in \hat{\sC}_i(0) \times \sW \subset \partial \tilde{\sC} \times \sW$ and note that since $i \in \tilde{\sI}(\bm{x}, \bm{v})$, it follows by \eqref{eq:rdsm-invariant} that $\tilde{\xi}_i(\bm{x}, \bm{v}, \bm{d}) \geq 0$, and thus $\hat{\alpha}_i(0) \leq 0$.

  Having shown its conditions hold, it follows by \cite[Lemma 4.1]{albertini1999continuous} that there exists $\alpha_i \in \cK$ that upper-bounds $\hat{\alpha}_i$.  Since $i \in \{1,\ldots, q\}$ was arbitrary, let us define $\alpha(c) = \max_{i \in \{1,\ldots,q\}} \alpha_i(c)$ and note that $\alpha \in \cK$. 
  
  Finally, let $(\bm{x}, \bm{v}) \in \tilde{\sC}$ be given and recall that by \eqref{eq:rdsm-safe}, $\pi(\bm{x},\bm{v}) \in \sU$. For any $i \in \{1,\ldots,q\}$ and any $\bm{d} \in \sW$,
  \begin{align*}
    \tilde{\xi}_i(\bm{x}, \bm{v}, \bm{d}) &= D\Delta_i(\bm{x}, \bm{v}) \mat{f(\bm{x}) + g(\bm{x}) \pi(\bm{x}, \bm{v}) +  p(\bm{x}) \bm{d} \\ 0} \\ 
    &\geq \inf_{(\bm{x}_0, \bm{v}_0, \bm{d}_0) \in \hat{\sC}_i(\Delta_i(\bm{x}, \bm{v})) \times \sW} \tilde{\xi}_i(\bm{x}_0, \bm{v}_0, \bm{d}_0) \\
    &\geq -\hat{\alpha}_i\big(\Delta_i(\bm{x}, \bm{v})\big)\\
    &\geq -\alpha_i\big(\Delta_i(\bm{x}, \bm{v})\big)\\
    &\geq -\alpha\big(\Delta_i(\bm{x}, \bm{v})\big).
  \end{align*}
  We have shown that $\alpha \in \cK$ is such that for all $(\bm{x}, \bm{v}) \in \tilde{\sC}$, there exists $(\bm{u}, \bm{w}) \in \sU \times \R^{\ell}$ such that $\forall \bm{d} \in \sW$,
  \begin{equation} \label{eq:rdsm-rcbf-cond}
    \min_{i \in \{1, \ldots, q\}}    \left[\dot{\Delta}_i (\bm{x}, \bm{v}, \bm{u}, \bm{w}, \bm{d}) + \alpha\big(\Delta_i(\bm{x}, \bm{v})\big)\right] \geq 0,
  \end{equation}
  where
  \begin{equation}
    \dot{\Delta}_i = D\Delta_i(\bm{x},
    \bm{v}) \mat{f(\bm{x}) + g(\bm{x}) \bm{u} + p(\bm{x}) \bm{d} \\ \bm{w}}.
  \end{equation}
  We conclude that $\Delta$ is an RCBF.
\end{proof}

\begin{definition}
  When we intend to use an RDSM $\Delta$ as an RCBF, as opposed to a component of the robust ERG \cite{nicotra2016explicit}, we refer to it as an \emph{RDSM-RCBF}.
\end{definition}

The following result follows by Theorems \ref{thm:rcbf-rci} and
\ref{thm:rdsm-rcbf}.
\begin{corollary} \label{ct-cor:rdsm-rci}
  Let $\Delta: \tilde{\sD} \to \R^{q}$ be an RDSM-RCBF. If the MFCQ holds, then the superlevel set $\tilde{\sC} \subset \tilde{\sX} \subset \sX \times \sV$ is RCI with respect to the augmented system \eqref{eq:sys-disturbance-aug}.
\end{corollary}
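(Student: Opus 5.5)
The plan is to chain Theorem~\ref{thm:rdsm-rcbf} with statement~\ref{thm:rcbf->rci} of Theorem~\ref{thm:rcbf-rci}, applied to the augmented system \eqref{eq:sys-disturbance-aug}. Throughout, the state is $(\bm{x},\bm{v})\in\R^{n}\times\R^{\ell}$, the input is $(\bm{u},\bm{w})$ ranging over $\sU\times\R^{\ell}$, and the disturbance enters through $\mat{p(\bm{x})\\0}$. Since $\Delta$ is an RDSM, Theorem~\ref{thm:rdsm-rcbf} already provides $\alpha\in\cK$ such that \eqref{eq:rdsm-rcbf-cond} holds on $\tilde{\sC}$, witnessed by $(\bm{u},\bm{w})=(\pi(\bm{x},\bm{v}),0)$.

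Two technical points need care before statement~\ref{thm:rcbf->rci} can be invoked.

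First, the input set of the augmented system is $\sU\times\R^{\ell}$, which is not compact. However, the proof of statement~\ref{thm:rcbf->rci} (via Theorems~\ref{thm:rnagumo} and~\ref{thm:rnagumo-practical}) only needs the existence of an admissible input at each boundary point. Because the witness has $\bm{w}=0$, I can equally regard the augmented input set as $\sU\times\{0\}$, or as any compact convex set containing it. This keeps the standing convexity and compactness hypotheses intact.

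Second, $\Delta$ is defined only on the open set $\tilde{\sD}$ (Assumption~\ref{ass:set-augmented-roa-open}), not on all of $\R^{n}\times\R^{\ell}$. The contingent cone is a local object, and $\tilde{\sC}\subset\tilde{\sD}$ with $\tilde{\sD}$ open. So Proposition~\ref{prop:contingent-cone-practical} and the Nagumo argument apply verbatim near each point of $\tilde{\sC}$, and $\tilde{\sC}$ is closed in $\R^{n}\times\R^{\ell}$ because it is compact by \eqref{eq:rdsm-compact}.

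With these remarks in place, the argument runs as follows. Take $(\bm{x},\bm{v})\in\partial\tilde{\sC}$ and an active index $i\in\tilde{\sI}(\bm{x},\bm{v})$. Since $\Delta_i(\bm{x},\bm{v})=0$ and $\alpha(0)=0$, condition \eqref{eq:rdsm-rcbf-cond} gives $\dot{\Delta}_i\ge 0$ for every $\bm{d}\in\sW$. This is exactly condition \eqref{eq:rnagumo-practical} for the augmented system. Under the MFCQ for $\Delta$ on $\tilde{\sC}$, Theorem~\ref{thm:rnagumo-practical} then yields that $\tilde{\sC}$ is RCI for \eqref{eq:sys-disturbance-aug}. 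Since $\tilde{\sC}$ is bounded, the finite-escape alternative is excluded and solutions persist for all time.

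The inclusions follow directly from the definitions. By \eqref{eq:rdsm-safe}, $\tilde{\sC}\subset\tilde{\sX}$. By the definition of $\tilde{\sX}$, every $(\bm{x},\bm{v})\in\tilde{\sX}$ satisfies $\bm{x}\in\sX_{\bm{v}}\subset\sX$ and $\bm{v}\in\sV$, so $\tilde{\sX}\subset\sX\times\sV$.

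The main obstacle is justifying the use of Theorems~\ref{thm:rnagumo} and~\ref{thm:rcbf-rci} with the noncompact input set $\sU\times\R^{\ell}$ and with $\Delta$ defined only on $\tilde{\sD}$. My first approach is the restriction to $\bm{w}=0$ described above, together with the observation that everything used is local to a neighborhood of $\tilde{\sC}$ inside $\tilde{\sD}$.
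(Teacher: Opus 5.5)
Your proposal is correct and follows the paper's route: the paper obtains the corollary directly by chaining Theorem~\ref{thm:rdsm-rcbf} with the first statement of Theorem~\ref{thm:rcbf-rci}, and the inclusions follow from \eqref{eq:rdsm-safe} and the definition of $\tilde{\sX}$. Your extra remarks make explicit two technical points the paper leaves implicit: restricting the augmented input set to the compact convex set $\sU \times \{0\}$ (legitimate because the witness is $(\pi(\bm{x},\bm{v}),0)$ and RCI for a smaller input set implies RCI for $\sU \times \R^{\ell}$), and using the locality of the contingent cone on the open set $\tilde{\sD}$.
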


RDSMs being RCBFs guarantees the existence of $\alpha \in \cK$ such that the RDSM-RCBF condition \eqref{eq:rdsm-rcbf-cond} holds. While this is useful, the condition can be strengthened given our knowledge that the prestabilizing controller $\pi$ is exactly one of the safe inputs that satisfies the RDSM-RCBF condition.
\begin{corollary} \label{cor:rdsm-rcbf-cond-pi}
  Let $\Delta: \tilde{\sD} \to \R^{q}$ be an RDSM-RCBF with superlevel set $\tilde{\sC}$. There exists $\alpha \in \cK$ for which $\forall (\bm{x}, \bm{v}) \in \tilde{\sC}$, $\forall \bm{d} \in \sW$,
  \begin{equation} \label{eq:rdsm-rcbf-cond-pi}
    \min_{i \in \{1,\ldots,q\}} \left[ \dot{\Delta}_i\big(\bm{x}, \bm{v}, \pi(\bm{x}, \bm{v}), 0, \bm{d}\big) + \alpha\big(\Delta_i(\bm{x}, \bm{v})\big)\right] \geq 0.
  \end{equation}
  Moreover, if the MFCQ holds, then
  \begin{equation}
    \forall (\bm{x}, \bm{v}) \in \tilde{\sC}, \quad \big(\pi(\bm{x}, \bm{v}), 0\big) \in \sK_{\tilde{\sC}}(\bm{x}, \bm{v}).
  \end{equation}
  In other words, $\pi$ is a continuous selection of the robust regulation map $\sK_{\tilde{\sC}}$ given in \eqref{eq:rregulation-map}.
\end{corollary}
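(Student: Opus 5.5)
The first claim follows from re-reading the proof of Theorem~\ref{thm:rdsm-rcbf}. There, the class $\cK$ function $\alpha=\max_i \alpha_i$ was built from the value functions $\hat{\alpha}_i$. These are defined using $\tilde{\xi}_i(\bm{x},\bm{v},\bm{d}) = D_1\Delta_i(\bm{x},\bm{v})\big(f_{\pi}(\bm{x},\bm{v})+p(\bm{x})\bm{d}\big)$, which is the evaluation of $\dot{\Delta}_i$ at the specific augmented input $(\bm{u},\bm{w}) = (\pi(\bm{x},\bm{v}),0)$. Indeed,
\begin{equation*}
  \dot{\Delta}_i\big(\bm{x},\bm{v},\pi(\bm{x},\bm{v}),0,\bm{d}\big) = D_1\Delta_i(\bm{x},\bm{v})\big(f(\bm{x})+g(\bm{x})\pi(\bm{x},\bm{v})+p(\bm{x})\bm{d}\big) + D_2\Delta_i(\bm{x},\bm{v})\,0 = \tilde{\xi}_i(\bm{x},\bm{v},\bm{d}).
\end{equation*}
The final chain of inequalities in that proof therefore reads $\tilde{\xi}_i \ge -\alpha(\Delta_i)$ for every $(\bm{x},\bm{v})\in\tilde{\sC}$, every $\bm{d}\in\sW$ and every $i$. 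Taking the minimum over $i$ gives \eqref{eq:rdsm-rcbf-cond-pi} with the same $\alpha$. I would also check feasibility: by \eqref{eq:rdsm-safe}, $\tilde{\sC}\subset\tilde{\sX}$, so $\bm{x}\in\sX_{\bm{v}}$ and hence $\pi(\bm{x},\bm{v})\in\sU$. The augmented input $(\pi,0)$ is thus admissible, noting that $\bm{w}$ is unconstrained in $\R^{\ell}$.

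For the second claim, I would fix $(\bm{x},\bm{v})\in\tilde{\sC}$ and split into two cases.

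\emph{Interior points.} If $(\bm{x},\bm{v})$ lies in the interior of $\tilde{\sC}$, the contingent cone is the whole space $\R^{n}\times\R^{\ell}$. Membership in $\sK_{\tilde{\sC}}(\bm{x},\bm{v})$ then reduces to $\pi(\bm{x},\bm{v})\in\sU$, which was shown above.

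\emph{Boundary points.} If $(\bm{x},\bm{v})\in\partial\tilde{\sC}$, I would invoke Proposition~\ref{prop:contingent-cone-practical} (practical set case, using the MFCQ) applied to $\Delta$ on the augmented space. This gives
\begin{equation*}
  \sT_{\tilde{\sC}}(\bm{x},\bm{v})=\{\bm{z} : \min_{i\in\tilde{\sI}(\bm{x},\bm{v})} D\Delta_i(\bm{x},\bm{v})\bm{z}\ge 0\}.
\end{equation*}
For active $i$ we have $\Delta_i=0$ and $\alpha(0)=0$, so the first part yields $D\Delta_i\,[\dot{\bm{x}};0]=\tilde{\xi}_i\ge 0$ for all $\bm{d}\in\sW$. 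Alternatively, this follows directly from \eqref{eq:rdsm-invariant}. Hence the augmented vector field lies in the contingent cone for every disturbance, which gives $(\pi(\bm{x},\bm{v}),0)\in\sK_{\tilde{\sC}}(\bm{x},\bm{v})$.

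Continuity of the selection is immediate, since $\pi$ is continuous by Assumption~\ref{ass:iss}.

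The main subtlety is applying Proposition~\ref{prop:contingent-cone-practical} when $\Delta$ is defined only on the open set $\tilde{\sD}$ rather than on all of $\R^n\times\R^\ell$. Because $\tilde{\sD}$ is open by Assumption~\ref{ass:set-augmented-roa-open}, the contingent cone is a local object. I would argue that the proposition applies verbatim on a neighborhood of each point of $\tilde{\sC}$, and so this issue is only notational.
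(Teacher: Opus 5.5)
Your proposal is correct and follows the paper's own argument. The first claim is read off the final chain of inequalities in the proof of Theorem~\ref{thm:rdsm-rcbf}, where $\tilde{\xi}_i$ is exactly $\dot{\Delta}_i$ evaluated at $(\pi(\bm{x},\bm{v}),0)$, and the second claim follows from Proposition~\ref{prop:contingent-cone-practical} under the MFCQ; your interior/boundary case split and the check that $\pi(\bm{x},\bm{v})\in\sU$ just spell out what the paper's two-line proof leaves implicit.
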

\begin{proof}
  The first statement was already shown in the proof of Theorem  \ref{thm:rdsm-rcbf}. The second statement follows by Proposition  \ref{prop:contingent-cone-practical} because $\tilde{\sC}$ is practical when the MFCQ holds.
\end{proof}

\subsection{RDSM-based Safety Filter}
Since RDSMs are RCBFs, we can use them directly in the RCBF-based safety filter \eqref{eq:rcbf-sf}. In the augmented state-reference space, the safe augmented input $(\bm{u}, \bm{w})$ is obtained by solving the following optimization problem

\begin{align} \label{eq:rdsm-rcbf-sf}
  \min_{(\bm{u}, \bm{w}) \in \sU \times \R^{\ell}} \quad & \|\bm{u} - \kappa(\bm{x})\|^2 + \eta \|\bm{w} - \rho(\bm{v})\|^2 \\
  \text{s.t.} \quad & \dot{\Delta}_i(\bm{x}, \bm{v}, \bm{u}, \bm{w}, 0)  \!+\! \tilde{\omega}_i(\bm{x}, \bm{v}) \!\geq\! -\alpha\big(\Delta_i(\bm{x}, \bm{v})\big), \nonumber \\ &  i = 1,\ldots,q, \nonumber
\end{align}
where $\eta > 0$ is a small positive number that makes the objective strictly convex,
\begin{equation}
  \tilde{\omega}_i(\bm{x}, \bm{v}) = \min_{\bm{d} \in \sW} D_1
  \Delta_i(\bm{x}, \bm{v}) p(\bm{x}) \bm{d},
\end{equation}
and $\rho:\R^{\ell} \to \R^{\ell}$ is called the \emph{navigation field} and dictates the desired performance of the reference state $\bm{v}$. Usually, given a desired reference $\bm{r}$, $\rho(\bm{v}) = \bm{r} - \bm{v}$ is a valid navigation field. Another interesting choice when available is $\rho(\bm{v}) = \bar{x}^{-1}(\bm{x}) - \bm{v}$, which points the reference $\bm{v}$ to the reference associated with the equilibrium point ``closest'' to the current state $\bm{x}$. 

In the following sections, we study two approaches to construct robust dynamic safety margins: One leveraging ISS Lyapunov functions, and the other leveraging nominal trajectory predictions and their contractive properties.

\section{Lyapunov Robust Dynamic Safety Margin} \label{sec:lyap-rdsm}
This subsection shows a specific method to construct RDSMs that relies on ISS Lyapunov functions, based on the results from \cite{nicotra2016explicit}. The following ISS Lyapunov function definition is based on \cite[Definition 2.2]{sontag1995characterizations}, but considers general equilibrium points $\bar{x}(\bm{v}) \neq 0$ and allows negative semidefinite derivative $\dot{V} \leq 0$ instead of $\dot{V} < 0$.
\begin{definition} \label{def:lyap-iss}
  A continuously differentiable function $V:\R^n \times \R^{\ell} \to \R$ is an \emph{ISS reference-dependent Lyapunov function} for the prestabilized system subject to disturbances \eqref{eq:sys-disturbance-prestab} if $\forall \bm{v} \in \R^{\ell}$, there exist functions $\gamma_1, \gamma_2,  \chi \in \cK$, and a neighborhood of $\bar{x}(\bm{v})$, denoted $\sD_{\bm{v}} \subset \R^n$, wherein
  \begin{subequations}
    \begin{align}
      &\gamma_1\big(\|\bm{x} - \bar{x}(\bm{v})\|\big) \leq V(\bm{x}, \bm{v})
      \leq \gamma_2\big(\|\bm{x} - \bar{x}(\bm{v})\|\big),
      \label{eq:lyap-iss-bounded} \\
      &\forall \bm{d} \in \R^{n_d} ~ \text{satisfying} ~ \chi(\|\bm{d}\|) \leq
      V(\bm{x}, \bm{v}), \nonumber \\
      &\qquad \quad D_1 V(\bm{x},
      \bm{v}) \left(f_{\pi}(\bm{x},\bm{v}) + p(\bm{x}) \bm{d}
      \right) \leq 0.  \label{eq:lyap-iss-decreases}
    \end{align}
  \end{subequations}
\end{definition}
Additionally, we require $V$ has non-vanishing gradient near the equilibrium points with the following assumption.
\begin{assumption}\label{ass:lyap-regular}
  Given any $\bm{v} \in \R^{\ell}$, $\bm{x} \in \sD_{\bm{v}}\setminus \{\bar{x}(\bm{v})\}$ is a regular point of $V(\cdot, \bm{v}): \R^n \to \R$.
\end{assumption}

Let us recall from \cite{freire2026using} the \emph{safety threshold value} $\Gamma^*:\R^{\ell} \to \R$ defined as
\begin{equation}
  \Gamma^*(\bm{v}) = \inf_{\bm{x} \in \sX_{\bm{v}}^c \cap \sD_{\bm{v}}} V(\bm{x}, \bm{v}) - \inf_{\bm{x} \in \sX_{\bm{v}} \cap \sD_{\bm{v}}} V(\bm{x}, \bm{v}),
\end{equation}
where $\sX_{\bm{v}}^c \triangleq \R^n \setminus \sX_{\bm{v}}$, and the \emph{stability threshold value} $\overline{\Gamma}: \R^{\ell} \to \R$ defined as
\begin{equation}
  \overline{\Gamma}(\bm{v}) = \inf_{\bm{x} \in \partial \sD_{\bm{v}}} V(\bm{x}, \bm{v}).
\end{equation}
With them, we can construct RDSMs using ISS reference-dependent Lyapunov functions.

\begin{theorem}[Lyapunov RDSM] 
  \label{thm:lyap-rdsm}
  Let $V:\R^n \times \R^{\ell} \to \R$ be an ISS reference-dependent Lyapunov function for the prestabilized system subject to disturbances \eqref{eq:sys-disturbance-prestab} satisfying Assumption \ref{ass:lyap-regular}. If $\Gamma^*: \R^{\ell} \to \R$ and
  $\overline{\Gamma}: \R^{\ell} \to \R$ are continuously differentiable, and
  $\sX \subset \R^n$ and $\sV \subset \R^{\ell}$ are compact, then for any
  $\epsilon \in (0,1)$,
  \begin{equation} \label{eq:lyap-rdsm}
    \Delta(\bm{x}, \bm{v}) = \mat{\Gamma^*(\bm{v}) - V(\bm{x}, \bm{v}) \\
    (1-\epsilon)\overline{\Gamma}(\bm{v}) - V(\bm{x}, \bm{v}) \\
    \Gamma^*(\bm{v}) - \chi(d_{\max}) \\
    \overline{\Gamma}(\bm{v}) - \chi(d_{\max})},
  \end{equation}
  is a robust dynamic safety margin.
\end{theorem}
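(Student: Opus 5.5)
We verify the three conditions of Definition~\ref{def:rdsm} for $\Delta$ in \eqref{eq:lyap-rdsm}: admissibility \eqref{eq:rdsm-safe}, compactness \eqref{eq:rdsm-compact}, and boundary invariance \eqref{eq:rdsm-invariant}. Continuous differentiability of $\Delta$ is immediate from the regularity of $V$, $\Gamma^*$, $\overline{\Gamma}$; note that the last two components depend only on $\bm{v}$.

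\emph{Admissibility.} Take $(\bm{x},\bm{v}) \in \tilde{\sC}$. The second component gives $V(\bm{x},\bm{v}) \leq (1-\epsilon)\overline{\Gamma}(\bm{v}) < \overline{\Gamma}(\bm{v})$, so $\bm{x}$ lies in the sublevel set of $V(\cdot,\bm{v})$ strictly below its infimum on $\partial\sD_{\bm{v}}$. Together with $\bm{x} \in \sD_{\bm{v}}$ (from $\tilde{\sC}\subset\tilde{\sD}$), this places $\bm{x}$ in the connected component of that sublevel set inside $\sD_{\bm{v}}$. The first component gives $V(\bm{x},\bm{v}) \leq \Gamma^*(\bm{v})$. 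With the normalization $\min_{\sD_{\bm{v}}} V(\cdot,\bm{v}) = V(\bar{x}(\bm{v}),\bm{v}) = 0$, which follows from \eqref{eq:lyap-iss-bounded} once $\bar{x}(\bm{v}) \in \sX_{\bm{v}}$, this says $V(\bm{x},\bm{v})$ does not exceed the infimum of $V$ over $\sX_{\bm{v}}^c\cap\sD_{\bm{v}}$. The boundary case $V = \Gamma^*$ needs care, and I plan to handle it by closedness of $\sX_{\bm{v}}$: the infimum is over an open set, so points with $V<\Gamma^*$ are in $\sX_{\bm{v}}$ and limits remain there. Hence $\bm{x}\in\sX_{\bm{v}}$. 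For $\bm{v}\in\sV$, I would argue that $\Gamma^*(\bm{v})\geq\chi(d_{\max})\geq 0$ forces $\bar{x}(\bm{v})\in\sX_{\bm{v}}$ with $\pi(\bar{x}(\bm{v}),\bm{v})=\bar{u}(\bm{v})\in\sU$. Otherwise the second infimum defining $\Gamma^*$ would be positive, contradicting nonnegativity; a degenerate equality case may need the convention that $\Gamma^*$ is negative off $\sV$. I expect this admissibility step, in particular the reference part, to be the main bookkeeping obstacle.

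\emph{Compactness.} $\tilde{\sC}\subset\tilde{\sX}\subset\sX\times\sV$ is bounded since $\sX$ and $\sV$ are compact. For closedness in $\R^n\times\R^\ell$, the issue is that $\Delta$ is only defined on the open set $\tilde{\sD}$ (Assumption~\ref{ass:set-augmented-roa-open}). Take a sequence $(\bm{x}_k,\bm{v}_k)\to(\bm{x},\bm{v})$ in $\tilde{\sC}$. Using $V(\bm{x}_k,\bm{v}_k)\le(1-\epsilon)\overline{\Gamma}(\bm{v}_k)$ and continuity, the limit satisfies $V(\bm{x},\bm{v})\le(1-\epsilon)\overline{\Gamma}(\bm{v})<\overline{\Gamma}(\bm{v})$. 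This keeps $\bm{x}$ away from $\partial\sD_{\bm{v}}$ and therefore in $\sD_{\bm{v}}$; the margin $\epsilon$ is exactly what prevents limits escaping to the boundary. Continuity of $\Delta$ then yields $\Delta(\bm{x},\bm{v})\geq 0$.

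\emph{Invariance.} Let $(\bm{x},\bm{v})\in\partial\tilde{\sC}$ and $i\in\tilde{\sI}(\bm{x},\bm{v})$. For $i=3,4$ the component is independent of $\bm{x}$, so $D_1\Delta_i=0$ and the inequality holds trivially. For $i=1$ we have $V(\bm{x},\bm{v})=\Gamma^*(\bm{v})\geq\chi(d_{\max})\geq\chi(\|\bm{d}\|)$ for every $\bm{d}\in\sW$, using the third component. Then \eqref{eq:lyap-iss-decreases} gives $D_1V\,\dot{\bm{x}}\le 0$, so $D_1\Delta_1\,\dot{\bm{x}}\ge0$. For $i=2$, $V=(1-\epsilon)\overline{\Gamma}(\bm{v})$, and we need this to be at least $\chi(d_{\max})$. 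The fourth component only gives $\overline{\Gamma}\ge\chi(d_{\max})$, so I would first try to argue that the $(1-\epsilon)$ factor is compatible here; if it is not, I would flag that the fourth row should read $(1-\epsilon)\overline{\Gamma}(\bm{v})-\chi(d_{\max})$, and then the same argument closes. Assumption~\ref{ass:lyap-regular} is not needed for the inequality itself; I expect it to enter through the MFCQ and the identification of boundary points.
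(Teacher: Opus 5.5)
\textbf{Gap: admissibility when $V(\bm{x},\bm{v})=\Gamma^*(\bm{v})$.} You argue that points with $V(\cdot,\bm{v})<\Gamma^*(\bm{v})$ lie in $\sX_{\bm{v}}$ and that limits stay there. This only helps if $\bm{x}$ actually is a limit of points of $\sD_{\bm{v}}$ with strictly smaller value of $V(\cdot,\bm{v})$, and closedness of $\sX_{\bm{v}}$ does not give you that. The point $\bm{x}$ could be a local minimizer of $V(\cdot,\bm{v})$ with value $\Gamma^*(\bm{v})$, sitting inside the open set $\sX_{\bm{v}}^c\cap\sD_{\bm{v}}$ (a pocket of constraint violation around a spurious local minimum). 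Then $(\bm{x},\bm{v})\in\tilde{\sC}$ but $\bm{x}\notin\sX_{\bm{v}}$.

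Ruling this out is exactly what Assumption~\ref{ass:lyap-regular} is for. Your closing claim that it enters only through the MFCQ or boundary identification is wrong; it is used precisely here. The paper argues by contradiction:
\begin{itemize}
\item If $\bm{x}\notin\sX_{\bm{v}}$, then $V(\bm{x},\bm{v})=\Gamma^*(\bm{v})$.
\item If this value is $0$, then $\bm{x}=\bar{x}(\bm{v})\in\sX_{\bm{v}}$, a contradiction.
\item Otherwise $D_1V(\bm{x},\bm{v})\neq 0$, so for small $\delta>0$ the point $\bm{x}-\delta D_1V(\bm{x},\bm{v})^\top$ stays in the open set $\sX_{\bm{v}}^c\cap\sD_{\bm{v}}$ while strictly lowering $V$. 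This contradicts $\Gamma^*$ being an infimum.
\end{itemize}
In your direct form, the same perturbed points have $V<\Gamma^*(\bm{v})$, hence lie in $\sX_{\bm{v}}$, and converge to $\bm{x}$.

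This step needs $\inf_{\sX_{\bm{v}}\cap\sD_{\bm{v}}}V(\cdot,\bm{v})=0$, i.e.\ $\bm{v}\in\sV$, so that part must come first. Your worry about a degenerate equality case there is unfounded. If $\bar{x}(\bm{v})\notin\sX_{\bm{v}}$, closedness gives a ball of radius $r>0$ about $\bar{x}(\bm{v})$ disjoint from $\sX_{\bm{v}}$. By \eqref{eq:lyap-iss-bounded} the second infimum is then at least $\gamma_1(r)>0$ while the first is $0$, so $\Gamma^*(\bm{v})<0$. The paper simply cites a lemma from its companion work for this.

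\textbf{The case $i=2$ of \eqref{eq:rdsm-invariant}.} Your suspicion is well founded, and you should not expect to show the $(1-\epsilon)$ factor is compatible. The paper disposes of this case with the chain $\overline{\Gamma}(\bm{v})\geq(1-\epsilon)\overline{\Gamma}(\bm{v})=V(\bm{x},\bm{v})\geq\chi(d_{\max})$, attributed to $\Delta_4(\bm{x},\bm{v})\geq 0$. But $\Delta_4\geq 0$ only gives $\overline{\Gamma}(\bm{v})\geq\chi(d_{\max})$, and the last inequality of the chain does not follow.

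Nothing in the hypotheses excludes $(1-\epsilon)\overline{\Gamma}(\bm{v})<\chi(d_{\max})\leq\min\{\Gamma^*(\bm{v}),\overline{\Gamma}(\bm{v})\}$; just take $\epsilon$ close to $1$. On such a level set, \eqref{eq:lyap-iss-decreases} says nothing. Already for $\dot{x}=-x+d$, $V=x^2$, $\chi(r)=r^2$, the choice $d=d_{\max}\,\mathrm{sgn}(x)$ gives $\dot{V}>0$ at every positive level below $d_{\max}^2$.

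So this is a gap in the paper's argument, not in yours. Your amended fourth row $(1-\epsilon)\overline{\Gamma}(\bm{v})-\chi(d_{\max})$ is the right repair. With it, $i=2$ closes exactly like $i=1$, and nothing else changes, since that row depends only on $\bm{v}$. The rest of your proposal matches the paper's proof: compactness via the $\epsilon$ margin keeping limits off $\partial\sD_{\bm{v}}$, and the cases $i=1,3,4$.
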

\begin{proof}
  To prove property \eqref{eq:rdsm-safe}, let $(\bm{x}, \bm{v}) \in \tilde{\sC} \subset \tilde{\sD}$. Then, $\Gamma^*(\bm{v}) \geq V(\bm{x}, \bm{v}) \geq 0$, which implies that $\bm{v} \in \sV$ by \cite[Lemma 2]{freire2026using}. Next, for a contradiction, assume $\bm{x} \notin \sX_{\bm{v}}$. Then
  \begin{equation*}
    \Gamma^*(\bm{v}) = \inf_{\bm{x} \in \sX_{\bm{v}}^{c}\cap \sD_{\bm{v}}} V(\bm{x}, \bm{v}) \leq V(\bm{x}, \bm{v}).
  \end{equation*}
  So, we must have $V(\bm{x}, \bm{v}) = \Gamma^*(\bm{v})$. If $V(\bm{x},\bm{v}) = 0$, then $\bm{x} = \bar{x}(\bm{v}) \in \sX_{\bm{v}}$ because $\bm{v} \in \sV$, and we have a contradiction.  Otherwise, if $\Gamma^*(\bm{v}) = V(\bm{x},\bm{v}) > 0$, recall that $\sX^c_{\bm{v}} \cap \sD_{\bm{v}}$ is open in $\R^n$. Therefore, there exists an open set $\sN \subset \sX^c_{\bm{v}} \cap \sD_{\bm{v}}$ with $\bm{x} \in \sN$.
  Furthermore, since $\bm{x} \in \sD_{\bm{v}} \setminus \{\bar{x}(\bm{v})\}$ is a regular point of $V(\cdot, \bm{v})$ by Assumption \ref{ass:lyap-regular}, we have that $D_1 V(\bm{x}, \bm{v}) \neq 0$. 
  So, there exists a sufficiently small scalar $\delta > 0$ such that $\hat{\bm{x}} = \bm{x} - \delta D_1 V(\bm{x}, \bm{v})^\top  \in \sN \subset \sX^c_{\bm{v}} \cap \sD_{\bm{v}}$.  
  Furthermore, this point achieves lower value in $V$. That is, $V(\hat{\bm{x}}, \bm{v}) < V(\bm{x}, \bm{v})$. So, $\Gamma^*(\bm{v}) = V(\bm{x},\bm{v})$ is a contradiction because
  \begin{equation*}
    \Gamma^*(\bm{v}) = \inf_{\bm{x} \in \sX^c_{\bm{v}} \cap \sD_{\bm{v}}} V(\bm{x},\bm{v}) \leq V(\hat{\bm{x}}, \bm{v}) < V(\bm{x}, \bm{v}).
  \end{equation*}

  To prove property \eqref{eq:rdsm-compact}, note that since $\tilde{\sC} \subset \sX \times \sV$, and $\sX \times \sV$ is compact in $\R^n \times \R^{\ell}$, we must only show that $\tilde{\sC}$ is closed in $\R^n \times \R^{\ell}$.
  Let $\{(\bm{x}_k, \bm{v}_k)\} \subset \tilde{\sC}$ be a sequence that converges to $(\bm{x}, \bm{v}) \in \R^n \times \R^{\ell}$. Since $\sX \subset \R^n$ and $\sV \subset \R^{\ell}$ are closed, it follows that $\bm{x} \in \sX$ and $\bm{v} \in \sV$.
  Now we argue $\Delta(\bm{x}, \bm{v}) \geq 0$. For a contradiction, assume there is $i \in \{1,2,3,4\}$ such that $\Delta_i(\bm{x}, \bm{v}) < 0$. By continuity of $\Delta_i$ on $\R^n \times \R^{\ell}$, there exists an open neighborhood $\tilde{\sN} \subset \R^n \times \R^{\ell}$ of $(\bm{x}, \bm{v})$ satisfying $\forall (\hat{\bm{x}}, \hat{\bm{v}}) \in \tilde{\sN}$, $\Delta_i(\hat{\bm{x}}, \hat{\bm{v}}) < 0$. By convergence of $\{(\bm{x}_k, \bm{v}_k)\}$ to $(\bm{x}, \bm{v})$, there exists $K \in \N$ such that $(\bm{x}_K, \bm{v}_K) \in \tilde{\sN}$, which raises a contradiction because $(\bm{x}_K, \bm{v}_K) \in \tilde{\sC}$ implies that $\Delta(\bm{x}_K, \bm{v}_K) \geq 0$.
  It remains to show that $(\bm{x}, \bm{v}) \in \tilde{\sD}$. Begin by noting that for all $k \in \N$, $(\bm{x}_k, \bm{v}_k) \in \tilde{\sD}$ implies that $V(\bm{x}_k, \bm{v}_k) \geq \gamma_1\big(\|\bm{x}_k - \bar{x}(\bm{v}_k)\|\big)$. It follows by continuity of all functions involved in the inequality and by the order limit theorem that $V(\bm{x}, \bm{v}) \geq \gamma_1\big(\|\bm{x} - \bar{x}(\bm{v})\|\big)$. For a contradiction, let us assume that $(\bm{x}, \bm{v}) \in \partial \tilde{\sD}$, which implies that $\bm{x} \in \partial \sD_{\bm{v}}$ and, therefore, $\overline{\Gamma}(\bm{v}) \leq V(\bm{x}, \bm{v})$. However, recall that we have shown $\Delta(\bm{x}, \bm{v}) \geq 0$. Thus, $V(\bm{x}, \bm{v})  \leq (1-\epsilon) \overline{\Gamma}(\bm{v})$. This yields
  \begin{equation*}
    \frac{1}{1-\epsilon} V(\bm{x}, \bm{v}) \leq \overline{\Gamma}(\bm{v}) \leq V(\bm{x}, \bm{v}),
  \end{equation*}
  which is a contradiction because $V(\bm{x}, \bm{v}) \geq \gamma_1\big(\|\bm{x} - \bar{x}(\bm{v})\|\big) > 0$. So, we must have that $(\bm{x}, \bm{v}) \in \tilde{\sD}$ and we conclude that $\tilde{\sC}$ is  closed in $\R^n \times \R^{\ell}$. Since it is also bounded, $\tilde{\sC}$ is compact.

  To prove property \eqref{eq:rdsm-invariant}, let $(\bm{x}, \bm{v}) \in \partial \tilde{\sC} \subset \tilde{\sD}$ be given and pick any $i \in \tilde{\sI}(\bm{x}, \bm{v})$ and any $\bm{d} \in \sW$. 
  If $i = 1$, we have that $\Gamma^*(\bm{v}) = V(\bm{x},\bm{v})$ and $\Delta_3(\bm{x}, \bm{v}) \geq 0$ implies that $V(\bm{x}, \bm{v}) \geq \chi(d_{\max}) \geq \chi(\|\bm{d}\|)$.
  If $i = 2$, we have that $(1-\epsilon) \overline{\Gamma}(\bm{v}) = V(\bm{x}, \bm{v})$ and $\Delta_4(\bm{x}, \bm{v}) \geq 0$ implies that $\overline{\Gamma}(\bm{v}) \geq (1 -\epsilon) \overline{\Gamma}(\bm{v}) = V(\bm{x}, \bm{v}) \geq \chi(d_{\max}) \geq \chi(\|\bm{d}\|)$.
  In either case ($i \in \{1,2\}$), by \eqref{eq:lyap-iss-decreases},
  \begin{align*}
    &D_1 \Delta_i(\bm{x}, \bm{v}) \left(f_{\pi}(\bm{x}, \bm{v}) + p(\bm{x}) \bm{d} \right) \\
    & \qquad \qquad = - D_1 V(\bm{x}, \bm{v}) \big(f_{\pi}(\bm{x}, \bm{v}) +  p(\bm{x}) \bm{d}\big) \geq 0.
  \end{align*}
  On the other hand, if $i \in \{3,4\}$,
  \begin{equation*}
    D_1 \Delta_i(\bm{x}, \bm{v}) \left(f_{\pi}(\bm{x}, \bm{v}) + p(\bm{x}) \bm{d} \right) = 0.
  \end{equation*}
\end{proof}
\begin{remark}
  If Assumption \ref{ass:roa-large} holds, then the stability threshold value $\overline{\Gamma}$ plays no role and we can replace \eqref{eq:lyap-rdsm} with the simpler expression
  \begin{equation}
    \Delta(\bm{x}, \bm{v}) = \mat{\Gamma^*(\bm{v}) - V(\bm{x}, \bm{v}) \\
    \Gamma^*(\bm{v}) - \chi(d_{\max})}.
  \end{equation}
\end{remark}
Theorem \ref{thm:lyap-rdsm} links the construction of an RDSM, and therefore, by Theorem \ref{thm:rdsm-rcbf}, the construction of an RCBF, to the problem of finding an ISS Lyapunov function. The following result shows that the associated RCI set $\tilde{\sC}$ contains robustly steady-state admissible equilibria.
\begin{lemma} \label{lem:lyap-rdsm-nonempty}
  Let $\Delta: \tilde{\sD} \to \R^4$ be a Lyapunov RDSM as in
  \eqref{eq:lyap-rdsm} and define the set
  \begin{equation*}
    \sV_d = \left\{\bm{v} \in \R^{\ell} \mid \min\big(\Gamma^*(\bm{v}), \overline{\Gamma}(\bm{v}) \big) \geq \chi(d_{\max}) \right\} \subset \sV.
  \end{equation*}
  If $\bm{v} \in \sV_d$, then $\big(\bar{x}(\bm{v}), \bm{v}\big) \in
  \tilde{\sC}$.
\end{lemma}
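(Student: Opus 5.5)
The proof amounts to evaluating the four components of $\Delta$ in \eqref{eq:lyap-rdsm} at the equilibrium point, after first confirming that this point lies in the domain $\tilde{\sD}$. Fix $\bm{v} \in \sV_d$ and set $\bm{x} = \bar{x}(\bm{v})$.

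For the domain, Definition~\ref{def:lyap-iss} says $\sD_{\bm{v}}$ is a neighborhood of $\bar{x}(\bm{v})$. Hence $\bar{x}(\bm{v}) \in \sD_{\bm{v}}$, and so $(\bar{x}(\bm{v}), \bm{v}) \in \tilde{\sD}$.

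Next I would compute $V$ at the equilibrium. The sandwich bound \eqref{eq:lyap-iss-bounded} with $\gamma_1, \gamma_2 \in \cK$ gives
\[
0 = \gamma_1(0) \leq V(\bar{x}(\bm{v}), \bm{v}) \leq \gamma_2(0) = 0,
\]
so $V(\bar{x}(\bm{v}), \bm{v}) = 0$. Then check each component:
\begin{itemize}
  \item Components 3 and 4 are $\Gamma^*(\bm{v}) - \chi(d_{\max})$ and $\overline{\Gamma}(\bm{v}) - \chi(d_{\max})$. Both are nonnegative directly from the definition of $\sV_d$.
  \item Component 1 becomes $\Gamma^*(\bm{v}) - 0 \geq \chi(d_{\max}) \geq 0$, because $\chi \in \cK$ is nonnegative.
  \item Component 2 becomes $(1-\epsilon)\overline{\Gamma}(\bm{v})$. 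This is nonnegative since $\overline{\Gamma}(\bm{v}) \geq \chi(d_{\max}) \geq 0$ and $\epsilon \in (0,1)$.
\end{itemize}
Thus $\Delta(\bar{x}(\bm{v}), \bm{v}) \geq 0$, which is exactly membership in $\tilde{\sC}$.

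The statement also asserts the inclusion $\sV_d \subset \sV$. I would get it from the first part of the proof of Theorem~\ref{thm:lyap-rdsm}. Since $(\bar{x}(\bm{v}), \bm{v}) \in \tilde{\sC}$, we have $\Gamma^*(\bm{v}) \geq V(\bar{x}(\bm{v}), \bm{v}) \geq 0$, and \cite[Lemma 2]{freire2026using} then yields $\bm{v} \in \sV$. Equivalently, one can invoke \eqref{eq:rdsm-safe}: $\tilde{\sC} \subset \tilde{\sX}$ forces $\bm{v} \in \sV$.

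The only delicate point is that \eqref{eq:lyap-iss-bounded} is stated only on $\sD_{\bm{v}}$. So the evaluation $V(\bar{x}(\bm{v}), \bm{v}) = 0$ must come after establishing $\bar{x}(\bm{v}) \in \sD_{\bm{v}}$. With that ordering the argument involves no real obstacle.
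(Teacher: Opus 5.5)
Your proposal is correct and follows essentially the same route as the paper. Both place $\bar{x}(\bm{v})$ in $\sD_{\bm{v}}$, use $V(\bar{x}(\bm{v}), \bm{v}) = 0$, and check the four components of \eqref{eq:lyap-rdsm}. Both also obtain $\sV_d \subset \sV$ from $\Gamma^*(\bm{v}) \geq 0$ via \cite[Lemma 2]{freire2026using}, the only cosmetic differences being that the paper does this up front (from $\Gamma^*(\bm{v}) \geq \chi(d_{\max}) \geq 0$) and simply asserts $V(\bar{x}(\bm{v}),\bm{v})=0$, which you justify from \eqref{eq:lyap-iss-bounded}.
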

\begin{proof}
  Let $\bm{v} \in \sV_d$ be given. Clearly, $\Gamma^*(\bm{v}) \geq \chi(d_{\max}) \geq 0$, which implies by \cite[Lemma 2]{freire2026using} that $\bm{v} \in \sV$.
  Also, $\bar{x}(\bm{v}) \in \sD_{\bm{v}}$, which implies that $(\bar{x}(\bm{v}), \bm{v}) \in \tilde{\sD}$. 
  Finally,
  \begin{align*}
    \Delta_1\big(\bar{x}(\bm{v}), \bm{v}\big) &= \Gamma^*(\bm{v}) \geq \chi(d_{\max}) \geq 0\\
    \Delta_2\big(\bar{x}(\bm{v}), \bm{v}\big) &= (1 - \epsilon) \overline{\Gamma}(\bm{v}) \geq (1 - \epsilon) \chi(d_{\max})    \geq 0 \\
    \Delta_3\big(\bar{x}(\bm{v}), \bm{v}\big) &= \Gamma^*(\bm{v}) - \chi(d_{\max}) \geq 0 \\
    \Delta_4\big(\bar{x}(\bm{v}), \bm{v}\big) &= \overline{\Gamma}(\bm{v}) - \chi(d_{\max}) \geq 0,
  \end{align*}
  where we have used $V\big(\bar{x}(\bm{v}), \bm{v}\big) = 0$.
\end{proof}

The following corollary highlights that for Lyapunov RDSMs, any $\alpha \in \cK$ can be used in the RDSM-RCBF condition \eqref{eq:rdsm-rcbf-cond}.
\begin{corollary} \label{cor:lyap-rdsm-any-alpha-works}
  If $\Delta$ is a Lyapunov RDSM as in \eqref{eq:lyap-rdsm}, then the RDSM-RCBF condition \eqref{eq:rdsm-rcbf-cond} holds for any $\alpha \in
  \cK$.
\end{corollary}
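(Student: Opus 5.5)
Fix any $\alpha \in \cK$ and any $(\bm{x}, \bm{v}) \in \tilde{\sC}$. We exhibit the augmented input $(\bm{u}, \bm{w}) = \big(\pi(\bm{x}, \bm{v}), 0\big)$ and check \eqref{eq:rdsm-rcbf-cond} one component at a time. First, this input is feasible. Property \eqref{eq:rdsm-safe}, proven in Theorem~\ref{thm:lyap-rdsm}, gives $\bm{x} \in \sX_{\bm{v}}$, so $\pi(\bm{x}, \bm{v}) \in \sU$. With $\bm{w} = 0$, the derivative of each component reduces to $\dot{\Delta}_i = D_1 \Delta_i(\bm{x}, \bm{v})\big(f_{\pi}(\bm{x}, \bm{v}) + p(\bm{x}) \bm{d}\big)$. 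The $\bm{v}$-dependence of $\Gamma^*$ and $\overline{\Gamma}$ therefore drops out, which is the key simplification.

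For $i \in \{3,4\}$, $\Delta_i$ depends only on $\bm{v}$, so $\dot{\Delta}_i = 0$. Since $\Delta_i(\bm{x}, \bm{v}) \geq 0$ on $\tilde{\sC}$, we get $\dot{\Delta}_i + \alpha(\Delta_i) = \alpha(\Delta_i) \geq 0$ for every $\alpha \in \cK$.

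For $i \in \{1,2\}$, we have $\dot{\Delta}_i = -D_1 V(\bm{x}, \bm{v})\big(f_{\pi}(\bm{x}, \bm{v}) + p(\bm{x}) \bm{d}\big)$, and the goal is to show this is nonnegative on all of $\tilde{\sC}$, not just on $\partial\tilde{\sC}$. Condition \eqref{eq:lyap-iss-decreases} provides this whenever $V(\bm{x}, \bm{v}) \geq \chi(\|\bm{d}\|)$, since $\bm{x} \in \sD_{\bm{v}}$. We then split into two cases.
\begin{itemize}
\item If $V(\bm{x}, \bm{v}) \geq \chi(d_{\max})$, then $V(\bm{x}, \bm{v}) \geq \chi(\|\bm{d}\|)$ for all $\bm{d} \in \sW$. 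Hence $\dot{\Delta}_i \geq 0$, and $\dot{\Delta}_i + \alpha(\Delta_i) \geq 0$ because $\Delta_i \geq 0$.
\item If $V(\bm{x}, \bm{v}) < \chi(d_{\max})$, the ISS decrease condition is unavailable, and this is the main obstacle: $\dot{\Delta}_i$ may be negative, yet an arbitrary $\alpha$ must still compensate. We use the constraints $\Delta_3, \Delta_4 \geq 0$ to get a lower bound on the margins. For $i=1$, $\Delta_1 = \Gamma^* - V > \Gamma^* - \chi(d_{\max}) \geq 0$. For $i=2$, $\Delta_2 = (1-\epsilon)\overline{\Gamma} - V > (1-\epsilon)\overline{\Gamma} - \chi(d_{\max})$, which is not immediately positive because of the $\epsilon$ factor.
\end{itemize}

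In the second case, the first thing to try is to argue that such points lie in the interior of the sublevel set of $V$. There the trajectory cannot reach the boundary $\{V = \Gamma^*\}$ or $\{V = (1-\epsilon)\overline{\Gamma}\}$ without first crossing $\{V = \chi(d_{\max})\}$, and across that level set $V$ is nonincreasing. Converting this into a pointwise inequality is where the work lies. We would reuse the value-function construction from the proof of Theorem~\ref{thm:rdsm-rcbf}, in which $\hat{\alpha}_i(c)$ is the worst-case decay on the slab $0 \leq \Delta_i \leq c$, and then show that $\hat{\alpha}_i(c) \leq 0$ for every $c$. On the slab, the points with $V \geq \chi(d_{\max})$ give $\tilde{\xi}_i \geq 0$. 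The points with $V < \chi(d_{\max})$ require a bound showing the margin $\Delta_i$ stays large enough. If $\hat{\alpha}_i \leq 0$ everywhere, then any $\alpha \in \cK$ dominates it, which is exactly the claim. If the $\epsilon$-gap in $\Delta_2$ prevents this, the fallback is to note that $\dot{\Delta}_i$ is bounded below on the compact set $\tilde{\sC} \times \sW$ while $\Delta_i$ is bounded away from zero in this region. That only yields some $\alpha$, not every $\alpha$, so we would then check whether the intended statement implicitly takes $V \geq \chi(d_{\max})$ on the relevant set.
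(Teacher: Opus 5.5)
Your decomposition is the paper's: the certificate $(\bm u,\bm w)=(\pi(\bm x,\bm v),0)$, feasibility from \eqref{eq:rdsm-safe}, $\dot\Delta_3=\dot\Delta_4=0$, and the ISS decrease \eqref{eq:lyap-iss-decreases} for $\Delta_1,\Delta_2$ whenever $V(\bm x,\bm v)\ge\chi(d_{\max})$. The proposal is not a proof, though. The case $V(\bm x,\bm v)<\chi(d_{\max})$ is left open, and neither continuation you sketch closes it.

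The paper's proof does not close it either. It disposes of the case by asserting that $\Delta_1(\bm x,\bm v)\ge0$ and $\Delta_3(\bm x,\bm v)\ge0$ imply $V(\bm x,\bm v)\ge\chi(d_{\max})$, which is a non sequitur. The two inequalities only give $V(\bm x,\bm v)\le\Gamma^*(\bm v)$ and $\chi(d_{\max})\le\Gamma^*(\bm v)$. For instance, Lemma~\ref{lem:lyap-rdsm-nonempty} places the equilibria $(\bar x(\bm v),\bm v)$ with $\bm v\in\sV_d$ in $\tilde{\sC}$, and these have $V=0<\chi(d_{\max})$ whenever $d_{\max}>0$. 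The conclusion does follow where $\Delta_1=0$, which is how Theorem~\ref{thm:lyap-rdsm} uses it on $\partial\tilde{\sC}$.

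Your remark on the $\epsilon$-factor is also correct, and it bites even there. $\Delta_2=0$ and $\Delta_4\ge0$ give $V=(1-\epsilon)\overline{\Gamma}(\bm v)$ and $\overline{\Gamma}(\bm v)\ge\chi(d_{\max})$, not $V\ge\chi(d_{\max})$. The fourth component should be $(1-\epsilon)\overline{\Gamma}(\bm v)-\chi(d_{\max})$.

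The open case is not a write-up issue: with the input $(\pi,0)$ the inequality genuinely fails. Consequently, your plan of proving $\hat\alpha_i\le0$ on every slab cannot work. The premise of your fallback, that $\Delta_i$ stays away from zero when $V<\chi(d_{\max})$, also fails near references with $\Gamma^*(\bm v)=\chi(d_{\max})$, and its conclusion is just Corollary~\ref{cor:rdsm-rcbf-cond-pi}.

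Here is a concrete case. Take $\dot x=u+d$, $\pi(x,v)=-k(x-v)$ and $V=(x-v)^2$, which satisfies \eqref{eq:lyap-iss-decreases} with $\chi(s)=s^2/k^2$. Let $\sX=[-1,1]$, with $\sU$ and $\sD_v$ large and $d_{\max}<k$. Pick $v=1-d_{\max}/k$, so that $\Gamma^*(v)=\chi(d_{\max})$; only the behaviour near this point matters. Take $x=v+\delta$ with $0<\delta<d_{\max}/k$.

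For $d=d_{\max}$ one gets $\dot\Delta_1=-2\delta(d_{\max}-k\delta)<0$ and $\Delta_1=(d_{\max}^2-k^2\delta^2)/k^2$. Hence $-\dot\Delta_1/\Delta_1=2k^2\delta/(d_{\max}+k\delta)\to k$ as $\delta\uparrow d_{\max}/k$. So $\alpha(c)=ac$ with $a<k$, or $\alpha(c)=c^2$, violates the inequality at $(\pi,0)$.

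What your argument, and the paper's, actually proves is therefore the inequality for every $\alpha\in\cK$ on $\tilde{\sC}\cap\{V\ge\chi(d_{\max})\}$, as you suspected at the end. On all of $\tilde{\sC}$ it holds only for some $\alpha$. Recovering the corollary's existence claim for every $\alpha$ would require exploiting other inputs. In this matched scalar example, saturating $u$ at the lower bound of $\sU$ does restore the inequality, but that is a different argument which neither you nor the paper gives.
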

\begin{proof}
  Let $\alpha \in \cK$ and $(\bm{x}, \bm{v}) \in \tilde{\sC}$ be given. Recall that $\pi(\bm{x}, \bm{v}) \in \sU$ and note that for any $i \in \{3,4\}$ and any $\bm{d} \in \sW$, $\dot{\Delta}_i\big(\bm{x}, \bm{v}, \pi(\bm{x}, \bm{v}), 0, \bm{d}\big) = 0 \geq -\alpha\big(\Delta_i(\bm{x}, \bm{v})\big)$.
  Moreover, for any $\bm{d} \in \sW$, $\Delta_1(\bm{x}, \bm{v}) \geq 0$ and $\Delta_3(\bm{x}, \bm{v}) \geq 0$ imply that $V(\bm{x}, \bm{v}) \geq \chi(d_{\max}) \geq \chi(\|\bm{d}\|)$. Therefore, it follows by \eqref{eq:lyap-iss-decreases} that for any $i \in \{1,2\}$,
  \begin{align*}
    \dot{\Delta}_i\big(\bm{x}, \bm{v}, \pi(\bm{x}, \bm{v}), 0, \bm{d}\big) &=
    - D_1 V(\bm{x}, \bm{v}) \big(f_{\pi}(\bm{x}, \bm{v}) + p(\bm{x})
    \bm{d}\big) \\
    &\geq 0 \\
    &\geq -
    \alpha\big(\Delta_i(\bm{x}, \bm{v})\big).
  \end{align*}
\end{proof}

\section{Trajectory Robust Dynamic Safety Margin} \label{sec:traj-rdsm}

In this section, we evaluate another method to construct RDSMs that relies on trajectory predictions. To apply this method, we must make the following additional assumptions.
\begin{assumption} \label{ass:sys-prestab-differentiable}
  The map $f_{\pi}$ is continuously differentiable.
\end{assumption}
\begin{assumption} 
  \label{ass:cst-differentiable}
  There exists a continuously differentiable function $s:\R^n \times \R^{\ell} \to \R^q$ describing the reference-de\-pen\-dent constraint set. That is,
  \begin{equation}
    \forall \bm{v} \in \R^{\ell}, \; \sX_{\bm{v}} = \Big\{\bm{x} \in \R^n \;\big|\;  \min_{i \in \{1,\ldots,q\}} s_i(\bm{x}, \bm{v}) \geq 0\Big\}.
  \end{equation}
\end{assumption}
To simplify the presentation, we also assume the following.
\begin{assumption} 
  \label{ass:roa-large}
  The constraints satisfy $\tilde{\sX} \subset \tilde{\sD}$.
\end{assumption}
Indeed, we make Assumption~\ref{ass:roa-large} without loss of generality because it can always be enforced by introducing state constraints that capture the stability requirements of the prestabilizing controller. The additional Assumptions~\ref{ass:sys-prestab-differentiable}--\ref{ass:roa-large} were also made in the presentation of the nominal trajectory DSM \cite{freire2026designing}, and ensure the state $S_x: t \mapsto D_2\phi(\cdot,\bm{x},\bm{v})$ and reference $S_v: t \mapsto D_3\phi(\cdot, \bm{x}, \bm{v})$ sensitivity matrices are well defined. To make the result robust, we strengthen Assumption \ref{ass:iss} by requiring contractivity \cite{bullo2026contraction} beyond ISS.
\begin{assumption}[Contractivity] \label{ass:contractivity}
  There exists $P \succ 0$ and $c > 0$ such that, for all $\bm{x}_1, \bm{x}_2 \in \R^n$, $\bm{v} \in \R^{\ell}$, and $\bm{d} \in \R^q$,
  \begin{equation}
    (\dot{\bm{x}}_1 - \dot{\bm{x}}_2)^\top P(\bm{x}_1 - \bm{x}_2) \leq -c \|\bm{x}_1 - \bm{x}_2\|^2_{P},
  \end{equation}
  where $\dot{\bm{x}}_i = f_{\pi}(\bm{x}_i, \bm{v}) + p(\bm{x}_i)\bm{d}$.
\end{assumption}
\begin{assumption}
  \label{ass:bounded-disturbance-function}
  The dis\-turbance function $p(\bm{x})$ is bounded relative to the metric defined by $P \succ 0$ from Assumption \ref{ass:contractivity}. That is, there exists a constant $c_d > 0$ such that for all $\bm{x} \in \R^n$, $\bm{d}_1, \bm{d}_2 \in \R^{n_d}$,
  \begin{equation}
    \|p(\bm{x})(\bm{d}_1 - \bm{d}_2)\|_{P} \leq c_d \|\bm{d}_1 - \bm{d}_2\|.
  \end{equation}
\end{assumption}
Note that Assumption~\ref{ass:bounded-disturbance-function} always holds for constant functions $p(\bm{x}) = E$ (as in linear systems). Let $\phi(\tau, \bm{x}, \bm{v})$ denote the unique (by Assumption~\ref{ass:sys-prestab-differentiable} and \cite[Theorem~3.1]{khalil2002nonlinear}) solution to the initial value problem
\begin{equation} \label{eq:ivp-prestab}
  D_1 \phi(\tau, \bm{x}, \bm{v}) = f_{\pi}\big(\phi(\tau, \bm{x}, \bm{v}), \bm{v}), \quad \phi(0, \bm{x}, \bm{v}) = \bm{x}.
\end{equation}
When there is no ambiguity, we write $\phi(\tau) = \phi(\tau, \bm{x}) = \phi(\tau, \bm{x}, \bm{v})$.
Note that if $\bm{x} \in \sD_{\bm{v}}$, it follows by \cite[Theorem~3.3]{khalil2002nonlinear} that the solution $\phi(\cdot, \bm{x}, \bm{v})$ is globally defined on $[0,\infty)$.

The following theorem presents the main contractivity result that we will leverage in this section.
\begin{theorem}[{\cite[Theorem~3.16]{bullo2026contraction}}] \label{thm:contractivity}
  Under Assumptions~\ref{ass:compact-convex-disturbance-set}, \ref{ass:contractivity} and \ref{ass:bounded-disturbance-function}, and given $\bm{x} \in \R^n$ and $\bm{v} \in \R^{\ell}$, any solution $x(t)$ to \eqref{eq:sys-disturbance-prestab} satisfies
  \begin{equation*}
    \|x(t) - \phi(t)\|_P \leq e^{-ct} \|x(0) - \bm{x}\|_P + \frac{c_d}{c}\left(1 - e^{-ct}\right) d_{\max}.
  \end{equation*}
\end{theorem}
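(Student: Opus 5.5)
The approach is the standard incremental-stability argument in the $P$-weighted metric, comparing the disturbed solution $x(t)$ with the nominal prediction $\phi(t)=\phi(t,\bm{x},\bm{v})$. Define the error $e(t)=x(t)-\phi(t)$ and the scalar $\nu(t)=\|e(t)\|_P$. Since $x$ solves \eqref{eq:sys-disturbance-prestab} with some measurable $d(t)\in\sW$, and $\phi$ solves \eqref{eq:ivp-prestab}, i.e.\ the same system with $\bm{d}=0$, we have
\begin{equation*}
\dot e = \big[f_\pi(x,\bm{v})+p(x)d(t)\big]-\big[f_\pi(\phi,\bm{v})+p(\phi)d(t)\big] + p(\phi)\big(d(t)-0\big).
\end{equation*}
Adding and subtracting $p(\phi)d(t)$ in this way lets us apply Assumption~\ref{ass:contractivity} with the \emph{common} disturbance $\bm{d}=d(t)$ to the first bracketed difference. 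The leftover term is handled by Assumption~\ref{ass:bounded-disturbance-function} with $\bm{d}_1=d(t)$ and $\bm{d}_2=0$.

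Next I would differentiate $\tfrac12\nu^2=\tfrac12 e^\top P e$ almost everywhere, since $x$ is absolutely continuous:
\begin{equation*}
\tfrac{d}{dt}\tfrac12\nu^2 = e^\top P\dot e \le -c\,\nu^2 + e^\top P\,p(\phi)d(t).
\end{equation*}
By Cauchy--Schwarz in the $P$ inner product, $e^\top P\,p(\phi)d \le \|e\|_P\,\|p(\phi)d\|_P \le \nu\, c_d\|d(t)\| \le c_d d_{\max}\,\nu$. Here $d_{\max}$ is finite by compactness in Assumption~\ref{ass:compact-convex-disturbance-set}. This gives $\nu\dot\nu \le -c\nu^2 + c_d d_{\max}\nu$.

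The main obstacle is dividing by $\nu$ where $\nu=0$, because $\nu$ is not differentiable there. I would avoid this by working with $\nu_\varepsilon=\sqrt{\nu^2+\varepsilon^2}$, which is absolutely continuous. It satisfies
\begin{equation*}
\dot\nu_\varepsilon=\frac{\nu\dot\nu}{\nu_\varepsilon}\le -c\frac{\nu^2}{\nu_\varepsilon}+c_d d_{\max}\le -c\nu_\varepsilon + c\varepsilon + c_d d_{\max},
\end{equation*}
using $\nu^2/\nu_\varepsilon\ge \nu_\varepsilon-\varepsilon$. An alternative is to use the upper Dini derivative directly.

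Finally, Grönwall's comparison lemma for the linear inequality $\dot y\le -cy+b$ yields
\begin{equation*}
\nu_\varepsilon(t)\le e^{-ct}\nu_\varepsilon(0)+\frac{c\varepsilon+c_dd_{\max}}{c}(1-e^{-ct}).
\end{equation*}
Letting $\varepsilon\to0$ and substituting $\nu(0)=\|x(0)-\bm{x}\|_P$ (since $\phi(0)=\bm{x}$) gives the claimed bound on the whole interval of existence of $x$. Existence of $\phi$ on the same interval is guaranteed by Assumption~\ref{ass:sys-prestab-differentiable}.
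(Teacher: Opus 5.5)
Your proposal is correct and follows the standard incremental-ISS argument behind the cited contraction result (the paper does not prove it and simply imports it): a one-sided Lipschitz bound from Assumption~\ref{ass:contractivity}, a Lipschitz-in-disturbance bound from Assumption~\ref{ass:bounded-disturbance-function}, and a comparison lemma, where your regularization $\nu_\varepsilon=\sqrt{\nu^2+\varepsilon^2}$ is a valid substitute for the usual upper Dini derivative at $\nu=0$. One small correction: continuous differentiability of $f_\pi$ gives only local existence of $\phi$, so existence on all of $[0,\infty)$ should instead be taken from Assumption~\ref{ass:contractivity} with $\bm{d}=0$ (comparing $\phi$ with any fixed point $\bm{y}$ gives $\tfrac{d}{dt}\tfrac12\|\phi-\bm{y}\|_P^2\le -c\|\phi-\bm{y}\|_P^2+\|\phi-\bm{y}\|_P\|f_\pi(\bm{y},\bm{v})\|_P$, so $\phi$ stays bounded and cannot escape in finite time).
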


\begin{lemma}\label{lem:contraction-tube-continuous}
  The contraction tube $\sR: [0,\infty) \rightrightarrows \R^n$ defined
  \begin{equation} \label{eq:contraction-tube}
    \sR(\tau) = \left\{\bm{x} \in \R^n \;\Big|\; \|\bm{x}\|_P \leq \frac{c_d}{c}(1 \!-\! e^{-c \tau})d_{\max}\right\},
  \end{equation}
  is continuous (upper and lower semicontinuous) on $[0,\infty)$.
\end{lemma}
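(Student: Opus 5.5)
The tube is a family of closed $P$-balls centered at the origin, $\sR(\tau) = \{\bm{x} : \|\bm{x}\|_P \leq r(\tau)\}$, with radius
\[
  r(\tau) = \frac{c_d}{c}\,(1-e^{-c\tau})\,d_{\max}.
\]
The radius $r:[0,\infty)\to[0,\infty)$ is continuous and nondecreasing, with $r(0)=0$. The plan is to reduce both semicontinuity properties to continuity of $r$, using the sequential characterizations of upper and lower semicontinuity. This is the same style of argument as in the proof of Theorem~\ref{thm:rcbf-rci}. Note that $\sR(\tau)$ is compact for every $\tau$, since it is a closed ball in the norm induced by $P \succ 0$, and it is locally uniformly bounded: on any bounded interval $[0,\bar\tau]$, all values lie in the ball of $P$-radius $r(\bar\tau)$.

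\emph{Upper semicontinuity.} Let $\tau_k \to \tau$ and $\bm{x}_k \in \sR(\tau_k)$ with $\bm{x}_k \to \bm{x}$. Then $\|\bm{x}_k\|_P \leq r(\tau_k)$ for every $k$. Passing to the limit using continuity of $\|\cdot\|_P$ and of $r$ gives $\|\bm{x}\|_P \leq r(\tau)$, so $\bm{x} \in \sR(\tau)$. Hence the graph of $\sR$ is closed. Closed graph together with compact, locally bounded values yields upper semicontinuity via the sequential characterization for compact-valued maps, exactly as invoked before.

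\emph{Lower semicontinuity.} Let $\tau_k \to \tau$ and $\bm{x} \in \sR(\tau)$. We need $\bm{x}_k \in \sR(\tau_k)$ with $\bm{x}_k \to \bm{x}$. The natural choice is radial rescaling:
\begin{itemize}
  \item if $r(\tau) = 0$, then $\bm{x} = 0$ and we take $\bm{x}_k = 0$;
  \item otherwise we take $\bm{x}_k = \min\{1,\, r(\tau_k)/r(\tau)\}\,\bm{x}$.
\end{itemize}
In the second case $\|\bm{x}_k\|_P \leq \min\{1, r(\tau_k)/r(\tau)\}\, r(\tau) \leq r(\tau_k)$, and $\bm{x}_k \to \bm{x}$ because $r(\tau_k)\to r(\tau)>0$.

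The case $d_{\max}=0$ is trivial: then $\sR \equiv \{0\}$, and the degenerate first case of the rescaling already covers it.

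The only delicate point is the lower semicontinuity step at the degenerate radius $r(\tau)=0$, which occurs at $\tau=0$ or when $d_{\max}=0$. There the ratio $r(\tau_k)/r(\tau)$ is undefined, which is why the case split above is needed. Everything else is routine continuity of the scalar radius $r$.
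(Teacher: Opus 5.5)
Your proof is correct and follows essentially the same route as the paper: closedness of the graph (via passing to the limit in $\|\bm{x}_k\|_P \leq r(\tau_k)$) plus compact values for upper semicontinuity, and an explicit approximating sequence for lower semicontinuity. The paper takes $\bm{x}_k$ to be the $P$-norm projection of $\bm{x}$ onto $\sR(\tau_k)$. For a ball centered at the origin this projection is the radial shrink $\min\{1, r(\tau_k)/\|\bm{x}\|_P\}\,\bm{x}$, so it is essentially your construction. Your version also makes explicit two points the paper leaves implicit: the convergence $\bm{x}_k \to \bm{x}$, and the local boundedness needed to pass from closed graph to upper semicontinuity.
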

\begin{proof}
  Let $\tau \in [0,\infty)$ be given and pick any $\bm{x} \in \sR(\tau)$ and any sequence $\{\tau_k\} \subset [0,\infty)$ that converges to $\tau$. Then, define the sequence $\{\bm{x}_k\}$ where $\bm{x}_k = \argmin_{\bm{x}' \in \sR(\tau_k)} \|\bm{x} - \bm{x}'\|_P$. Note the sequence is well defined because $\sR(\tau_k)$ is compact and convex, and $\|\bm{x} - \bm{x}'\|_P$ is strictly convex and continuous. Therefore, $\bm{x}_k$ is unique. Since the constructed sequence satisfies $\bm{x}_k \in \sR(\tau_k)$ and converges to $\bm{x}$, we conclude that $\sR$ is lower semicontinuous on $[0, \infty)$.
  
  Again, let $\tau \in [0,\infty)$ be given and pick any sequence $\{\tau_k\} \subset [0, \infty)$ that converges to $\tau$. Pick also any sequence $\{\bm{x}_k\}$ that converges to $\bm{x} \in \R^n$ and satisfies $\bm{x}_k \in \sR(\tau_k)$. Note that $\forall k \in \N$, $\|\bm{x}_k\|_P \leq \frac{c_d}{c} (1- e^{-c\tau_k}) d_{\max}$. By continuity of both sides of the inequality and convergence of the sequences, we can take the limit as $k \to \infty$ on both sides to obtain that $\|\bm{x}\|_P \leq \frac{c_d}{c}(1-e^{-c\tau})d_{\max}$. Therefore, $\bm{x} \in \sR(\tau)$ and we conclude because $\sR(\tau)$ is compact that $\sR$ is upper semicontinuous on $[0,\infty)$.
\end{proof}

Note that $\sR(0) = \{0\}$ and define also the steady-state contraction tube
\begin{equation} \label{eq:contraction-tube-ss}
  \bar{\sR} = \lim_{\tau \to \infty}\sR(\tau) = \left\{\bm{x} \in \R^n ~\Big|~
  \|\bm{x}\|_P \leq \frac{c_d d_{\max}}{c}\right\}.
\end{equation}

The following theorem presents the trajectory RDSM relying on nominal (no disturbances) trajectory predictions of the prestabilized system \eqref{eq:ivp-prestab} and their contractivity.
\begin{theorem} \label{thm:traj-rdsm}
  Assume $\sX \subset \R^n$ and $\sV \subset \R^{\ell}$ are compact. 
  If the function $\Delta: \tilde{\sD} \to \R^q$, with channels
  \begin{equation} \label{eq:traj-rdsm}
    \Delta_i(\bm{x}, \bm{v}) = \inf_{\tau \in [0,\infty)} ~ \min_{\bm{x}' \in \sR(\tau)} s_i\big(\phi(\tau, \bm{x}, \bm{v}) + \bm{x}', \bm{v} \big),
  \end{equation}
  is continuously differentiable, then it is an RDSM.
\end{theorem}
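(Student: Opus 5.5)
We verify the three conditions of Definition~\ref{def:rdsm} in turn, with \eqref{eq:rdsm-invariant} carrying most of the work. The organizing observation is a semigroup/tube property. Fix $(\bm{x},\bm{v})$, a disturbance $d:[0,h]\to\sW$, and a solution $x(t)$ of \eqref{eq:sys-disturbance-prestab} with $x(0)=\bm{x}$. By Theorem~\ref{thm:contractivity}, for every $\tau\ge 0$,
\[
\|x(\tau)-\phi(\tau,\bm{x},\bm{v})\|_P\le \tfrac{c_d}{c}(1-e^{-c\tau})d_{\max},
\]
so $x(\tau)\in\phi(\tau,\bm{x},\bm{v})+\sR(\tau)$.

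Next we need that the tube is nested along the flow: for $t,\tau\ge0$,
\[
\phi(\tau,x(t),\bm{v})+\sR(\tau)\subset \phi(t+\tau,\bm{x},\bm{v})+\sR(t+\tau).
\]
To see this, apply Theorem~\ref{thm:contractivity} to the disturbed solution from $\bm{x}$ and to the nominal flow $\phi(\cdot,x(t),\bm{v})$. Contraction of the nominal flow gives
\[
\|\phi(\tau,x(t))-\phi(t+\tau,\bm{x})\|_P\le e^{-c\tau}\|x(t)-\phi(t,\bm{x})\|_P\le e^{-c\tau}\tfrac{c_d}{c}(1-e^{-ct})d_{\max}.
\]
Adding the radius of $\sR(\tau)$ yields $\tfrac{c_d}{c}d_{\max}\big(e^{-c\tau}-e^{-c(t+\tau)}+1-e^{-c\tau}\big)$, which is exactly the radius of $\sR(t+\tau)$. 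Consequently,
\[
\Delta_i(x(t),\bm{v})\ \ge\ \inf_{\sigma\ge t}\min_{\bm{x}'\in\sR(\sigma)}s_i(\phi(\sigma,\bm{x})+\bm{x}',\bm{v})\ \ge\ \Delta_i(\bm{x},\bm{v}).
\]
So $\Delta_i$ is nondecreasing along every disturbed trajectory with constant $\bm{v}$, which is the robust invariance we need.

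\textbf{Condition \eqref{eq:rdsm-invariant}.} Take $(\bm{x},\bm{v})\in\partial\tilde{\sC}$, an index $i\in\tilde{\sI}(\bm{x},\bm{v})$, and $\bm{d}\in\sW$. Use the constant disturbance $d(t)\equiv\bm{d}$, whose solution exists locally; $f_\pi$ is $C^1$ and we assume $p$ is regular enough, or else we take any solution. Since $\Delta_i$ is continuously differentiable, the monotonicity above gives
\[
D_1\Delta_i(\bm{x},\bm{v})\big(f_\pi(\bm{x},\bm{v})+p(\bm{x})\bm{d}\big)=\lim_{t\downarrow0}\frac{\Delta_i(x(t),\bm{v})-\Delta_i(\bm{x},\bm{v})}{t}\ge0.
\]
This argument in fact holds on all of $\tilde{\sD}$, not only on the boundary.

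\textbf{Condition \eqref{eq:rdsm-safe}.} Let $(\bm{x},\bm{v})\in\tilde{\sC}$. Taking $\tau=0$ and $\bm{x}'=0$ gives $s_i(\bm{x},\bm{v})\ge0$ for all $i$, so $\bm{x}\in\sX_{\bm{v}}$ by Assumption~\ref{ass:cst-differentiable}. To get $\bm{v}\in\sV$, let $\tau\to\infty$. By Assumption~\ref{ass:roa-large} we have $\bm{x}\in\sD_{\bm{v}}$, and contractivity with $\bm{d}=0$ gives $\phi(\tau)\to\bar{x}(\bm{v})$. Since $s_i(\phi(\tau)+0,\bm{v})\ge0$ for all $\tau$ and $\sX_{\bm{v}}$ is closed, $\bar{x}(\bm{v})\in\sX_{\bm{v}}$. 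Because $\pi(\bar{x}(\bm{v}),\bm{v})=\bar{u}(\bm{v})$, this means $\bar{x}(\bm{v})\in\sX$ and $\bar{u}(\bm{v})\in\sU$, i.e.\ $\bm{v}\in\sV$. In fact the stronger statement $\bar{x}(\bm{v})+\bar{\sR}\subset\sX_{\bm{v}}$ holds, though we do not need it.

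\textbf{Condition \eqref{eq:rdsm-compact}.} $\tilde{\sC}\subset\tilde{\sX}\subset\sX\times\sV$ is bounded, so it suffices to show closedness. Let $(\bm{x}_k,\bm{v}_k)\in\tilde{\sC}$ converge to $(\bm{x},\bm{v})$. Closedness of $\tilde{\sX}$ gives $(\bm{x},\bm{v})\in\tilde{\sX}\subset\tilde{\sD}$ by Assumption~\ref{ass:roa-large}, so $\Delta$ is defined there. Continuity of $\Delta$ on the open set $\tilde{\sD}$, assumed in the statement, then gives $\Delta(\bm{x},\bm{v})\ge0$.

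\textbf{Main obstacle.} The delicate step is the tube nesting inclusion, since it must use the contraction estimate for the nominal flow (the $\bm{d}=0$ case of Assumption~\ref{ass:contractivity}) applied between two different initial conditions. A related subtlety in \eqref{eq:rdsm-invariant} is the existence of a solution for the constant disturbance $\bm{d}$; since $p$ is only continuous, we would invoke Peano's theorem. The monotonicity bound holds for any such solution, so non-uniqueness does not cause a problem.
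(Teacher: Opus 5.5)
Your proof is correct and follows the paper's argument. You verify the same three conditions, and \eqref{eq:rdsm-invariant} rests on the same tube-nesting inclusion $\phi(\sigma,\tilde{x}(\epsilon))+\sR(\sigma)\subset\phi(\sigma+\epsilon,\bm{x})+\sR(\sigma+\epsilon)$; you use it in a direct monotonicity and difference-quotient argument where the paper argues by contradiction.

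Your proof of $\bm{v}\in\sV$ uses $0\in\sR(\tau)$ together with closedness of $\sX_{\bm{v}}$ as $\tau\to\infty$. This is a slightly cleaner variant of the paper's choice of a finite $\tau'$ with $\bar{x}(\bm{v})\in\phi(\tau')+\sR(\tau')$, and unlike that step it does not implicitly require $d_{\max}>0$.
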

\begin{proof}
  To prove property \eqref{eq:rdsm-safe}, let $(\bm{x}, \bm{v}) \in \tilde{\sC}$ be given. Then, we have that $\forall i \in \{1,\ldots,q\}$, 
  \begin{align*}
  s_i(\bm{x}, \bm{v}) &= s_i\big(\phi(0), \bm{v}\big)\\
  &= \min_{\bm{x}' \in \sR(0)} s\big(\phi(0) + \bm{x}', \bm{v}\big) \\
  &\geq \Delta(\bm{x}, \bm{v}) \\
  &\geq 0,
  \end{align*}
  which implies $\bm{x} \in \sX_{\bm{v}}$. Next, recall
  that
  \begin{equation*}
    \lim_{\tau \to \infty} \phi(\tau, \bm{x}, \bm{v}) = \bar{x}(\bm{v}).
  \end{equation*}  
  Therefore, there exists $\tau' \in [0,\infty)$ such that $\bar{x}(\bm{v}) \in \phi(\tau', \bm{x}, \bm{v}) + \sR(\tau')$. It follows that $\forall i \in \{1,\ldots q\}$
  \begin{align*}
    s_i\big(\bar{x}(\bm{v}), \bm{v}\big) &\geq \min_{\bm{x}' \in \sR(\tau')}  s( \phi(\tau') +\bm{x}',\bm{v})\\ &\geq \Delta(\bm{x}, \bm{v})\\ &\geq 0,
  \end{align*}
  which implies $\bar{x}(\bm{v}) \in \sX_{\bm{v}}$, and we conclude $\bm{v} \in \sV$.

  To prove property \eqref{eq:rdsm-compact}, note that $\tilde{\sC}$ is bounded since $\tilde{\sC} \subset \tilde{\sX} \subset \sX \times \sV$. Thus, we only need to show that  $\tilde{\sC}$ is closed in $\R^n \times \R^{\ell}$. This property follows by continuity of $\Delta$ and the fact that, by Assumption \ref{ass:roa-large}, $\tilde{\sC} \subset  \tilde{\sX} \subset \tilde{\sD}$, where $\tilde{\sX}$ is closed in $\R^n \times \R^{\ell}$.
  
  To prove property \eqref{eq:rdsm-invariant}, let $(\bm{x}, \bm{v}) \in
  \partial \tilde{\sC}$ be given. For a contradiction, pick $i \in
  \tilde{\sI}(\bm{x}, \bm{v})$ and $\bm{d} \in \sW$ such that
  \begin{equation*}
    D_1 \Delta_i(\bm{x}, \bm{v}) \big(f_{\pi}(\bm{x}, \bm{v}) + p(\bm{x})
    \bm{d} \big) < 0.
  \end{equation*}
  Next, let $\tilde{x}:[0,\infty) \to \R^n$ be the solution to the initial
  value problem
  \begin{equation*}
    D \tilde{x}(\tau) = f_{\pi}\big(\tilde{x}(\tau), \bm{v}\big) +
    p\big(\tilde{x}(\tau)\big)\bm{d}, \quad \tilde{x}(0) = \bm{x}.
  \end{equation*}
  With this, define the function $\xi:t \mapsto \Delta_i\big(\tilde{x}(t),
  \bm{v}\big)$, which is continuously differentiable and satisfies
  \begin{align*}
    \xi(0) &= \Delta_i\big(\tilde{x}(0), \bm{v}\big) = \Delta_i(\bm{x},
    \bm{v}) = 0,\\
    \dot{\xi}(0) &= D_1 \Delta_i(\bm{x}, \bm{v}) \left(
    f_{\pi}(\bm{x}, \bm{v}) + p(\bm{x})\bm{d}\right) < 0.
  \end{align*}
  By continuity of $\xi$, there exists $\epsilon > 0$ such that
  \begin{equation} \label{eq:rdsm-rcbf-contradiction}
    \xi(\epsilon) = \Delta_i\big(\tilde{x}(\epsilon), \bm{v}\big) < 0.
  \end{equation}
  Next, we argue that for any $\sigma \geq 0$,
  \begin{equation*}
    \phi\big(\sigma, \tilde{x}(\epsilon), \bm{v}\big) + \sR(\sigma) \subset
    \phi(\sigma + \epsilon, \bm{x}, \bm{v}) + \sR(\sigma + \epsilon).
  \end{equation*}
  That is, for any $\bm{y} \in \phi\big(\sigma, \tilde{x}(\epsilon),
  \bm{v}\big) + \sR(\sigma)$,
    \begin{align*}
    &  \|\bm{y} - \phi(\sigma + \epsilon, \bm{x})\|_P \\ & \leq \|\bm{y} - \phi\big(\sigma, \tilde{x}(\epsilon)\big)\|_P + \|\phi\big(\sigma, \tilde{x}(\epsilon)\big) - \phi(\sigma + \epsilon, \bm{x})\|_P \\
    &\leq \|\bm{y} - \phi\big(\sigma, \tilde{x}(\epsilon)\big)\|_P + \|\phi\big(\sigma, \tilde{x}(\epsilon)\big) - \phi\big(\sigma, \phi(\epsilon, \bm{x})\big)\|_P \\
    & \leq \|\bm{y} - \phi\big(\sigma, \tilde{x}(\epsilon)\big)\|_P + e^{-c \sigma} \|\tilde{x}(\epsilon) - \phi(\epsilon, \bm{x})\|_P \\
    &\leq \frac{c_d}{c} (1 - e^{-c\sigma})d_{\max} + \frac{c_d}{c} e^{-c\sigma}(1 - e^{-c \epsilon}) d_{\max}\\
    &\leq \frac{c_d}{c} \left(1 - e^{-c(\sigma + \epsilon)}\right)d_{\max},
  \end{align*}
  which implies that $\bm{y} \in \phi(\sigma + \epsilon, \bm{x}, \bm{v}) +
  \sR(\sigma + \epsilon)$. Finally,
  \begin{align*}
    \Delta_i\big(\tilde{x}(\epsilon), \bm{v}\big)\! &= \inf_{\sigma \in [0,\infty)} \min_{\bm{x}' \in \sR(\sigma)} s_i\left(\phi\big(\sigma, \tilde{x}(\epsilon)\big) + \bm{x}', \bm{v}\right) \\
    & \geq \inf_{\sigma \in [0,\infty)} \min_{\bm{x}' \in \sR(\sigma + \epsilon)} s_i\big(\phi(\sigma + \epsilon, \bm{x}) + \bm{x}', \bm{v}\big) \\
    &\geq \inf_{\sigma \in [-\epsilon, \infty)} \min_{\bm{x}'  \in \sR(\sigma + \epsilon)}  s_i\big(\phi(\sigma + \epsilon, \bm{x}) + \bm{x}', \bm{v}\big)\\
    &\geq \inf_{\tau \in [0, \infty)} \min_{\bm{x}'  \in \sR(\tau)}  s_i\big(\phi(\tau, \bm{x}) + \bm{x}', \bm{v}\big)\\
    &\geq \Delta_i(\bm{x}, \bm{v})\\
    &\geq 0
  \end{align*}
  which contradicts \eqref{eq:rdsm-rcbf-contradiction}.
\end{proof}
As in Lemma~\ref{lem:lyap-rdsm-nonempty}, the following characterizes equilibria in $\tilde{\sC}$.
\begin{lemma}
  Let $\Delta: \tilde{\sD} \to \R^q$ be a trajectory RDSM as in \eqref{eq:traj-rdsm} and define the set $\sV_d \subset \sV$ by
  \begin{equation*}
    \sV_d = \left\{\bm{v} \in \R^{\ell} ~\Big|~ \min_{i \in \{1,\ldots, q\}} \; \min_{\bm{x}' \in  \bar{\sR}}  s_i\big(
    \bar{x}(\bm{v}) + \bm{x}', \bm{v}\big) \geq 0 \right\}.
  \end{equation*}
  If $\bm{v} \in \sV_d$, then $\big(\bar{x}(\bm{v}), \bm{v}\big) \in
  \tilde{\sC}$.
\end{lemma}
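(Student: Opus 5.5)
The plan is to verify directly that $\big(\bar{x}(\bm{v}), \bm{v}\big)$ lies in $\tilde{\sD}$ and that every channel of $\Delta$ in \eqref{eq:traj-rdsm} is nonnegative there. The argument has two ingredients: the equilibrium is a fixed point of the nominal flow, and the contraction tube $\sR(\tau)$ is nested inside the steady-state tube $\bar{\sR}$.

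First, I show $\bm{v} \in \sV$ and $\big(\bar{x}(\bm{v}),\bm{v}\big)\in\tilde{\sD}$. Since $0 \in \bar{\sR}$, taking $\bm{x}'=0$ in the definition of $\sV_d$ gives $s_i\big(\bar{x}(\bm{v}),\bm{v}\big)\geq 0$ for all $i$. By Assumption~\ref{ass:cst-differentiable} this means $\bar{x}(\bm{v})\in\sX_{\bm{v}}$, so $\bar{x}(\bm{v})\in\sX$ and $\bar{u}(\bm{v})=\pi\big(\bar{x}(\bm{v}),\bm{v}\big)\in\sU$. Hence $\bm{v}\in\sV$, which confirms the inclusion $\sV_d\subset\sV$ claimed in the statement. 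Consequently $\big(\bar{x}(\bm{v}),\bm{v}\big)\in\tilde{\sX}\subset\tilde{\sD}$ by Assumption~\ref{ass:roa-large}. Alternatively, one can note that $\bar{x}(\bm{v})\in\sD_{\bm{v}}$ directly, because $\sD_{\bm{v}}$ is a neighborhood of the equilibrium.

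Next, since $f_{\pi}\big(\bar{x}(\bm{v}),\bm{v}\big)=f\big(\bar{x}(\bm{v})\big)+g\big(\bar{x}(\bm{v})\big)\bar{u}(\bm{v})=0$, the constant function is a solution of \eqref{eq:ivp-prestab}. By the uniqueness guaranteed under Assumption~\ref{ass:sys-prestab-differentiable}, it is the solution, so $\phi\big(\tau,\bar{x}(\bm{v}),\bm{v}\big)=\bar{x}(\bm{v})$ for all $\tau\geq 0$. Moreover, for every $\tau\geq0$ we have $\frac{c_d}{c}(1-e^{-c\tau})d_{\max}\leq \frac{c_d}{c}d_{\max}$, which gives $\sR(\tau)\subset\bar{\sR}$. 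Combining these two facts,
\begin{equation*}
\Delta_i\big(\bar{x}(\bm{v}),\bm{v}\big)=\inf_{\tau\geq0}\min_{\bm{x}'\in\sR(\tau)} s_i\big(\bar{x}(\bm{v})+\bm{x}',\bm{v}\big)\geq \min_{\bm{x}'\in\bar{\sR}} s_i\big(\bar{x}(\bm{v})+\bm{x}',\bm{v}\big)\geq 0
\end{equation*}
for each $i$. This places $\big(\bar{x}(\bm{v}),\bm{v}\big)$ in $\tilde{\sC}$.

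There is no real obstacle here. The only points needing care are the fixed-point identity, which relies on uniqueness of solutions to \eqref{eq:ivp-prestab}, and the membership in $\tilde{\sD}$, which I handle via Assumption~\ref{ass:roa-large}. The minima over $\bar{\sR}$ and $\sR(\tau)$ are attained because these sets are compact and $s_i$ is continuous.
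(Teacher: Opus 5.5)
Your proof is correct and follows the paper's argument. You take $\bm{x}'=0$ to get $\bm{v}\in\sV$, use that $\bar{x}(\bm{v})$ is a fixed point of $\phi(\cdot,\cdot,\bm{v})$, and use the nesting $\sR(\tau)\subset\bar{\sR}$ to bound each $\Delta_i\big(\bar{x}(\bm{v}),\bm{v}\big)$ below by the $\sV_d$ condition. Your extra details (justifying the fixed point by uniqueness, and writing the last step as an inequality rather than the paper's chain of equalities) are harmless refinements.
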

\begin{proof}
  Let $\bm{v} \in \sV_d$ be given. For any $i \in \{1,\ldots, q\}$,
  \begin{equation*}
    0 \leq \min_{\bm{x}' \in \bar{\sR}} s_i\big(\bar{x}(\bm{v}) + \bm{x}',
    \bm{v}\big)  \leq s_i\big(\bar{x}(\bm{v}), \bm{v}\big),
  \end{equation*}
  which implies $\bar{x}(\bm{v}) \in \sX_{\bm{v}}$, and therefore $\bm{v} \in
  \sV$. Also, $\bar{x}(\bm{v}) \in \sD_{\bm{v}}$, which implies that
  $\big(\bar{x}(\bm{v}), \bm{v}\big) \in \tilde{\sD}$. Finally, let $i \in
  \{1,\ldots, q\}$ be given and note that
  \begin{align*}
    \Delta_i\big(\bar{x}(\bm{v}), \bm{v}\big) &= \inf_{\tau \in [0,\infty)}
    \min_{\bm{x}' \in \sR(\tau)} s_i\Big(\phi\big(\tau, \bar{x}(\bm{v}),
    \bm{v}\big) + \bm{x}', \bm{v}\Big) \\
    &= \inf_{\tau \in [0,\infty)} \min_{\bm{x}' \in \sR(\tau)}
    s_i\big(\bar{x}(\bm{v}) + \bm{x}', \bm{v}\big)\\
    &= \min_{\bm{x}' \in \bar{\sR}} s_i\big(\bar{x}(\bm{v}) + \bm{x}',
    \bm{v}\big)\\
    &\geq 0,
  \end{align*}
  where we have used the fact that $\forall \tau \in [0,\infty)$, $\sR(\tau)
  \subset \bar{\sR}$. Since $i \in \{1,\ldots, q\}$ was arbitrary, we conclude
  that $\Delta\big(\bar{x}(\bm{v}), \bm{v}\big) \geq 0$ and it follows that
  $\big(\bar{x}(\bm{v}), \bm{v}\big) \in \tilde{\sC}$.
\end{proof}

There are two drawbacks that make it difficult to work with \eqref{eq:traj-rdsm}: 1) it can be challenging to solve \eqref{eq:traj-rdsm} on an infinite horizon $\tau \in [0,\infty)$ and 2) continuous differentiability of $\Delta$ is a strong assumption that may not hold in practice. 

\subsection{Horizon Considerations}
To relax the infinite horizon, we consider a finite interval $\tau \in [0,T]$ in \eqref{eq:traj-rdsm} and enforce the predicted trajectory lands in a given robust control invariant terminal set at $\tau = T$.
\begin{proposition} \label{cor:traj-rdsm-fh}
  Let $\Delta_F:\tilde{\sD} \to \R^{q_F}$ be an RDSM, with $\tilde{\sC}_F = \{(\bm{x}, \bm{v}) \in
  \tilde{\sD} \mid \Delta_F(\bm{x}, \bm{v}) \geq 0\}$, satisfying
  \begin{equation} \label{eq:rdsm-invariant-strict}
    \forall (\bm{x}, \bm{v}) \in \partial \tilde{\sC}_F, \; \min_{i \in \tilde{\sI}(\bm{x}, \bm{v})} D_1\Delta_{F,i}(\bm{x}, \bm{v}) f_{\pi}(\bm{x}, \bm{v}) \geq \epsilon,
  \end{equation}
  for some $\epsilon > 0$. Define $\Delta: \tilde{\sD} \to \R^{q + q_F}$ by
  \begin{equation} \label{eq:traj-rdsm-fh}
    \Delta(\bm{x},\! \bm{v}) \!=\!\! \mat{\min_{\tau \in [0,\!T]}  \min_{\bm{x}' \!\in\! \sR(\tau)}
    s_1\big(\phi(\tau) \!+\! \bm{x}'\!, \bm{v}\big) \\ \vdots\\ 
    \min_{\tau \in [0,\!T]}  \min_{\bm{x}' \!\in\! \sR(\tau)}
    s_q\big(\phi(\tau) \!+\! \bm{x}'\!, \bm{v}\big) \\ 
    \Delta_{F,1}\big(\phi(T), \bm{v}\big) \\ 
    \vdots\\
    \Delta_{F,q_F}\big(\phi(T), \bm{v}\big)}\!\!.
  \end{equation}
  If $\Delta$ is continuously differentiable and $\sX \subset \R^n$ and $\sV
  \subset \R^{\ell}$ are compact, then there exists a minimum horizon
  $T_{\min} > 0$ such that for all $T \geq T_{\min}$, $\Delta$ is an RDSM.
\end{proposition}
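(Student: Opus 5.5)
We verify the three conditions of Definition~\ref{def:rdsm} in turn, reusing the proof of Theorem~\ref{thm:traj-rdsm} as much as possible. The only genuinely new feature is the terminal block, and $T_{\min}$ will be chosen there.

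\emph{Admissibility and compactness.} For \eqref{eq:rdsm-safe}, evaluating the first $q$ channels at $\tau=0$, where $\sR(0)=\{0\}$, gives $s_i(\bm{x},\bm{v})\geq 0$, hence $\bm{x}\in\sX_{\bm{v}}$. For $\bm{v}\in\sV$ we use the terminal block: $(\phi(T),\bm{v})\in\tilde{\sC}_F\subset\tilde{\sX}$ because $\Delta_F$ is an RDSM, so $\bm{v}\in\sV$ directly. This replaces the limit argument of Theorem~\ref{thm:traj-rdsm}, which is no longer available on a finite horizon. For \eqref{eq:rdsm-compact}, we argue exactly as in Theorem~\ref{thm:traj-rdsm}: $\tilde{\sC}\subset\tilde{\sX}\subset\sX\times\sV$ is bounded, and closedness follows from continuity of $\Delta$ together with $\tilde{\sX}\subset\tilde{\sD}$ (Assumption~\ref{ass:roa-large}).

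\emph{Invariance \eqref{eq:rdsm-invariant}.} We argue by contradiction, as in Theorem~\ref{thm:traj-rdsm}. Take $(\bm{x},\bm{v})\in\partial\tilde{\sC}$, an active index $i$, and $\bm{d}\in\sW$ with negative directional derivative. Let $\tilde{x}$ be the constant-disturbance solution, so that $\Delta_i(\tilde{x}(\delta),\bm{v})<0$ for small $\delta>0$. By the contraction estimate (Theorem~\ref{thm:contractivity}) we have the tube inclusion
\begin{equation*}
\phi(\sigma,\tilde{x}(\delta))+\sR(\sigma)\subset\phi(\sigma+\delta,\bm{x})+\sR(\sigma+\delta),
\end{equation*}
and also
\begin{equation*}
\|\phi(T,\tilde{x}(\delta))-\phi(T+\delta,\bm{x})\|_P\leq e^{-cT}\tfrac{c_d}{c}(1-e^{-c\delta})d_{\max}.
\end{equation*}
Since we only need a first-order statement, it suffices to bound
\begin{equation*}
\tfrac{d}{d\delta}\Delta_i(\tilde{x}(\delta),\bm{v})\big|_{\delta=0}
\end{equation*}
from below by $0$. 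The argument splits according to whether $i\leq q$ (a tube channel) or $i>q$ (a terminal channel).

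For a terminal channel $i=q+j$, we write $\Delta_{F,j}(\phi(T,\tilde{x}(\delta)),\bm{v})$ as the nominal value $\Delta_{F,j}(\phi(T+\delta,\bm{x}),\bm{v})$ plus a perturbation. Since $\Delta_{F,j}(\phi(T),\bm{v})=0$, the point $(\phi(T),\bm{v})$ lies on $\partial\tilde{\sC}_F$ with $j$ active. By \eqref{eq:rdsm-invariant-strict}, the nominal term grows at rate at least $\epsilon$ in $\delta$. The perturbation is bounded by $L\,e^{-cT}\tfrac{c_d}{c}\big(1-e^{-c\delta}\big)d_{\max}$, whose derivative at $\delta=0$ is $L c_d e^{-cT}d_{\max}$, where $L$ bounds $\|D_1\Delta_{F,j}\|$ (in the $P$-metric) on the compact set $\tilde{\sC}_F$. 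The derivative is therefore at least $\epsilon-Lc_de^{-cT}d_{\max}$, and choosing $T_{\min}$ with $Lc_de^{-cT_{\min}}d_{\max}\leq\epsilon$ settles this case.

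For a tube channel $i\leq q$, the shift argument of Theorem~\ref{thm:traj-rdsm} bounds $\Delta_i(\tilde{x}(\delta),\bm{v})$ from below by the minimum of $s_i$ over the window $\tau\in[\delta,T+\delta]$. The part $[\delta,T]$ is controlled by $\Delta_i(\bm{x},\bm{v})\geq 0$, but the overhang $[T,T+\delta]$ is not directly covered. We expect this overhang to be the main obstacle. The plan is to control it through the terminal set. Since $(\phi(T),\bm{v})\in\tilde{\sC}_F$, and by \eqref{eq:rdsm-invariant-strict} the nominal flow strictly enters $\tilde{\sC}_F$ on its boundary, a short continuation $\phi(\tau)$, $\tau\in[T,T+\delta]$, remains in $\tilde{\sC}_F\subset\tilde{\sX}$. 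Moreover, because the tube radius $\frac{c_d}{c}(1-e^{-c\tau})d_{\max}$ changes only by $O(e^{-cT}\delta)$ on this interval, to first order in $\delta$ the overhang contributes
\begin{equation*}
\tfrac{d}{d\tau}\min_{\bm{x}'\in\sR(\tau)}s_i(\phi(\tau)+\bm{x}',\bm{v})\big|_{\tau=T}.
\end{equation*}
If the minimizing $\tau^*$ is strictly less than $T$, this term is irrelevant. If $\tau^*=T$, then $\min_{\bm{x}'\in\sR(T)}s_i(\phi(T)+\bm{x}',\bm{v})=0$ is a minimum over $[0,T]$ attained at the endpoint, so its left derivative is nonpositive. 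We then need the right derivative to be nonnegative. This is where we would invoke continuous differentiability of $\Delta$, which forces one-sided derivatives to agree, together with the terminal invariance, enlarging $T_{\min}$ if necessary so that the $e^{-cT}$ tube-growth term is dominated. Making this step airtight is the delicate part of the proof.
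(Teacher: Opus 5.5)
\textbf{Gap: the tube channels $i\leq q$ are not established.}
Your treatment of \eqref{eq:rdsm-safe} and \eqref{eq:rdsm-compact} matches the paper's. Your terminal-channel argument is also sound. Splitting $\phi(T,\tilde{x}(\cdot))$ into the shifted nominal trajectory plus a perturbation whose initial rate is at most $c_d e^{-cT}d_{\max}$ in the $P$-norm does the same job as the paper's identity $S_x(T)f_{\pi}(\bm{x},\bm{v})=f_{\pi}(\phi(T),\bm{v})$ combined with $S_x(T)\to 0$, and it even gives an explicit $T_{\min}$.

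The gap is in the tube channels when the minimum over $[0,T]$ is attained at $\tau=T$, and the repair you sketch does not work, for three reasons.
\begin{enumerate}
\item The inclusion $(\phi(\tau),\bm{v})\in\tilde{\sC}_F\subset\tilde{\sX}$ for $\tau\geq T$ only gives $s_i(\phi(\tau),\bm{v})\geq 0$ at the center of the tube. The overhang involves $\min_{\bm{x}'\in\sR(\tau)}s_i(\phi(\tau)+\bm{x}',\bm{v})$, which $\tilde{\sC}_F$ does not control.
\item Continuous differentiability of $\Delta$ is regularity in $(\bm{x},\bm{v})$, not in $\tau$. Write $\delta_{\tau,i}(\bm{x},\bm{v})=\min_{\bm{x}'\in\sR(\tau)}s_i(\phi(\tau,\bm{x},\bm{v})+\bm{x}',\bm{v})$. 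Where $\tau\mapsto\delta_{\tau,i}(\bm{x},\bm{v})$ is differentiable, agreement of the one-sided derivatives at $T$ makes the right derivative equal the left one, which is $\leq 0$: the wrong sign. What the $C^1$ hypothesis does give is this. Since $\Delta_i\leq\delta_{T,i}$ with equality at $(\bm{x},\bm{v})$, you get $D_1\Delta_i(\bm{x},\bm{v})=D_1\delta_{T,i}(\bm{x},\bm{v})$ whenever $\delta_{T,i}$ is differentiable (e.g.\ for linear $s_i$). So you must show exactly $D_1\delta_{T,i}(\bm{x},\bm{v})\big(f_{\pi}(\bm{x},\bm{v})+p(\bm{x})\bm{d}\big)\geq 0$.
\item Unlike the $\epsilon$ of the terminal channels, there is no margin here that a larger $T$ could exploit. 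Take a linear system at an equilibrium $\bm{x}=\bar{x}(\bm{v})$ whose time-$T$ tube just touches $\{s_i=0\}$. The quantity above equals $D_1s_i\,e^{A_{\pi}T}E\bm{d}$. It is negative for some $\bm{d}\in\sW$, however large $T$ is, whenever $\sW$ contains a neighborhood of $0$ and $D_1s_i e^{A_{\pi}T}E\neq 0$. The terminal requirement only asks that $(\bar{x}(\bm{v}),\bm{v})\in\tilde{\sC}_F$, and says nothing about the over-approximating tube $\bar{x}(\bm{v})+\sR(T)$.
\end{enumerate}

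\textbf{How the paper handles this step, and why the statement still holds.}
The paper's proof does not close this step either. It asserts that the tube channels hold ``by the same arguments'' as in Theorem~\ref{thm:traj-rdsm}, but there the shifted horizon stays inside $[0,\infty)$, so the overhang you identified never arises.

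Under the hypotheses as literally stated, the proposition is nevertheless true, for a degenerate reason that neither argument uses: \eqref{eq:rdsm-invariant-strict} forces $\tilde{\sC}_F=\emptyset$.
\begin{enumerate}
\item The strict condition makes $\tilde{\sC}_F$ forward invariant under $\dot{\bm{x}}=f_{\pi}(\bm{x},\bm{v})$.
\item Nominal trajectories converge to $\bar{x}(\bm{v})$ and $\tilde{\sC}_F$ is closed, so any $(\bm{y},\bm{v})\in\tilde{\sC}_F$ yields $(\bar{x}(\bm{v}),\bm{v})\in\tilde{\sC}_F$.
\item Hence $\{\bm{w}\in\R^{\ell} : (\bar{x}(\bm{w}),\bm{w})\in\tilde{\sC}_F\}$ would be nonempty and compact.
\item At any of its boundary points $\bm{w}_b$, the pair $(\bar{x}(\bm{w}_b),\bm{w}_b)$ lies on $\partial\tilde{\sC}_F$ with some channel active, while $f_{\pi}(\bar{x}(\bm{w}_b),\bm{w}_b)=0$. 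This contradicts the lower bound $\epsilon>0$.
\end{enumerate}
Therefore $\tilde{\sC}_F=\emptyset$, so $\tilde{\sC}=\emptyset$, and $\Delta$ is trivially an RDSM for every $T$. A non-vacuous version would need different hypotheses, in particular one tying $\tilde{\sC}_F$ to the tube $\phi(\tau)+\sR(\tau)$ for $\tau\geq T$. Without such a link, the local first-order reasoning in your last step cannot succeed.
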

\begin{proof}
  By compactness of $\tilde{\sC}_F$, and because $\Delta_F$ is continuously differentiable, define
  \begin{equation*}
    L_F = \max_{(\bm{x}', \bm{v}', i) \in \tilde{\sC}_{F} \times \{1,\ldots, q_F\}} \left\| D_1 \Delta_{F,i}(\bm{x}', \bm{v}') \right\| > 0.
  \end{equation*}
  By Assumption~\ref{ass:bounded-disturbance-function}, $\sup_{\bm{x} \in \R^n}\|p(\bm{x})\| \!\leq\! c_d/\sqrt{\lambda_{\min}(P)}$ is finite. By Assumption~\ref{ass:contractivity} and \cite[Proposition~1]{freire2026designing}, $\lim_{\tau \to \infty}  S_x(\tau) = 0$. Therefore, there exists $T_{\min} > 0$ such that $\forall T \geq T_{\min}$,
  \begin{equation*}
    \left\|S_{x}(T)\right\| \leq \frac{\epsilon \sqrt{\lambda_{\min}(P)}}{L_F    c_d d_{\max}}.
  \end{equation*}

  To prove property \eqref{eq:rdsm-safe}, let $(\bm{x}, \bm{v}) \in \tilde{\sC}$ be given and note that $\forall i \in \{1,\ldots, q\}$,
  \begin{align*}
    s_i(\bm{x}, \bm{v}) &= s_i\big(\phi(0), \bm{v}\big) \\
    &= \min_{\bm{x}' \in \sR(0)} s_i\big(\phi(0) + \bm{x}', \bm{v}\big) \\
    & \geq \Delta_i(\bm{x}, \bm{v}) \\
    &\geq 0,
  \end{align*}
  which implies $\bm{x} \in \sX_{\bm{v}}$. Next, note that $\big(\phi(T), \bm{v}\big) \in \tilde{\sC}_F \subset \tilde{\sX}$, which implies $\bm{v} \in \sV$.

  To prove property~\eqref{eq:rdsm-compact}, note that $\tilde{\sC}$ is bounded since $\tilde{\sC} \subset \tilde{\sX} \subset \sX \times \sV$.
  Thus, we only need to show that $\tilde{\sC}$ is closed in $\R^n \times \R^{\ell}$. This property follows by continuity of $\Delta$ and the fact that, by Assumption~\ref{ass:roa-large}, $\tilde{\sC} \subset \tilde{\sX} \subset \tilde{\sD}$, where $\tilde{\sX}$ is closed in $\R^n \times \R^{\ell}$.

  Property~\eqref{eq:rdsm-invariant} for $\Delta_i$ with $i \in \{1,\ldots,q\}$ holds by the same arguments made in the proof of Theorem~\ref{thm:traj-rdsm}.
  To prove property~\eqref{eq:rdsm-invariant} for $\Delta_{q+i}$ with $i \in \{1,\ldots, q_F\}$, pick $(\bm{x}, \bm{v}) \in \partial \tilde{\sC}$ and let $(i, \bm{d}) \in \tilde{\sI}(\bm{x}, \bm{v}) \times \sW$ be given. Recall that $\Delta_{q+1}(\bm{x}, \bm{v}) = \Delta_{F,i}\big(\phi(T), \bm{v}\big) = 0$, and therefore $\big(\phi(T), \bm{v}\big) \in \partial \tilde{\sC}_F$. Then,
  \begin{align*}
    &D_1 \Delta_{q+i}(\bm{x}, \bm{v}) \big(f_{\pi}(\bm{x},\bm{v}) + p(\bm{x}) \bm{d}\big)\\
    &\quad = D_1 \Delta_{F,i}\big(\phi(T), \bm{v}\big) D_2  \phi(T) \left(f_{\pi}(\bm{x}, \bm{v}) + p(\bm{x}) \bm{d}\right) \\
    &\quad = D_1 \Delta_{F,i}\big(\phi(T), \bm{v}\big) S_x(T) \left(f_{\pi}(\bm{x}, \bm{v}) + p(\bm{x}) \bm{d}\right)\\
    &\quad = D_1 \Delta_{F,i}\big(\phi(T), \bm{v}\big) \left(f_{\pi}\big(\phi(T), \bm{v}\big) +  S_x(T) p(\bm{x})\bm{d}\right)\\
    &\quad \geq \epsilon + D_1 \Delta_{F,i}\big(\phi(T), \bm{v}\big) S_x(T) p(\bm{x}) \bm{d} \\
    &\quad \geq \epsilon  - \left| D_1 \Delta_{F,i}\big(\phi(T), \bm{v}\big) S_x(T) p(\bm{x}) \bm{d} \right| \\
    &\quad \geq \epsilon - \left\| D_1 \Delta_{F,i}\big(\phi(T), \bm{v}\big) \right\| \|S_{x}(T)\| \; \|p(\bm{x})\| \; \|\bm{d}\| \\
    &\quad \geq \epsilon - \frac{L_F \|S_{x}(T)\| c_d d_{\max}}{\sqrt{\lambda_{\min}(P)}} \\
    &\quad \geq 0,
  \end{align*}
  where we have used \cite[Lemma~1]{freire2026designing}.
\end{proof}
A suitable option for the terminal RDSM $\Delta_F(\bm{x}, \bm{v})$ is the Lyapunov RDSM \eqref{eq:lyap-rdsm}.

\begin{remark}
  The conservativeness of the robust safety filtering is tied to the size of $\tilde{\sC} = \{(\bm{x},\bm{v}) \in \tilde{\sD} \mid \Delta(\bm{x}, \bm{v}) \geq 0\}$. For linear systems, $\tilde{\sC}_F \subset \tilde{\sC}$ because the contraction tube is nonconservative; this means filtering under $\Delta$ is no more conservative than filtering under $\Delta_F$. For conservative contraction tubes, enlargement of $\tilde{\sC}_F$ is not guaranteed.
\end{remark}

\subsection{Smoothness Considerations}
Next, we relax the assumption that $\Delta$ is continuously differentiable. To this end, we leverage families of functions. For any $\tau \in [0,\infty)$, define $\delta_{\tau}: \tilde{\sD} \to \R^q$, with channels
\begin{equation} \label{eq:traj-rdsm-family-tau}
  \delta_{\tau, i}(\bm{x}, \bm{v}) = \min_{\bm{x}' \in \sR(\tau)}
  s_i\big(\phi(\tau, \bm{x}, \bm{v}) + \bm{x}', \bm{v}\big).
\end{equation}
The superlevel set of $\delta_{\tau}$ is 
\begin{equation} \label{eq:set-superlevel-traj-dsm-family-tau}
  \tilde{\sC}_{\tau} = \left\{ (\bm{x}, \bm{v}) \in \tilde{\sD} \;\Big|\; \min_{i \in \{1,\ldots, q\}} \delta_{\tau, i}(\bm{x}, \bm{v}) \geq 0 \right\}.
\end{equation}
Given a terminal DSM $\Delta_F: \tilde{\sD} \to \R^{q_F}$ with horizon $T\geq 0$, define also $\delta_{F}: \tilde{\sD} \to \R^{q_F}$ by 
\begin{equation} \label{eq:traj-dsm-family-terminal}
  \delta_F(\bm{x}, \bm{v}) = \Delta_F\big(\phi(T,\bm{x}, \bm{v}), \bm{v}\big),
\end{equation}
with superlevel set
\begin{equation} \label{eq:set-superlevel-traj-dsm-family-terminal}
  \tilde{\sC}_F^T = \left\{(\bm{x}, \bm{v}) \in \tilde{\sD} \; \Big| \; \min_{i \in \{1,\ldots, q_F\}} \delta_{F, i}(\bm{x}, \bm{v}) \geq 0 \right\}.
\end{equation}
It can be shown that the superlevel set of \eqref{eq:traj-rdsm-fh} satisfies 
\begin{equation}
    \tilde{\sC} = \bigcap_{\tau \in [0,T]} \tilde{\sC}_{\tau} \cap \tilde{\sC}_F^T \subset \tilde{\sX}.
\end{equation}

To proceed, we require the joint map $(\tau, \bm{x}, \bm{v}) \mapsto \delta_{\tau}(\bm{x}, \bm{v})$ be continuously differentiable (see Remark~\ref{rem:ecs}). Such a requirement is not trivial due to the minimization over the contraction tube. To ensure the set of minimizers is a singleton, we make the following assumption about the constraints.
\begin{assumption}\label{ass:convex-cst}
  For any $\bm{v} \in \R^{\ell}$, and any $i\in \{1,\ldots,q\}$, the function  $s_i(\cdot, \bm{v})$ is either linear or strictly convex.
\end{assumption}
\begin{lemma}
  Under Assumption \ref{ass:convex-cst}, the joint map $\psi: (\tau, \bm{x}, \bm{v}) \mapsto \delta_{\tau}(\bm{x}, \bm{v})$ is continuously differentiable with
  \begin{subequations} \label{eq:traj-rdsm-family-tau-derivatives}
    \begin{align}
      D_1\delta_{\tau, i}(\bm{x}, \bm{v}) &= D_1s_i\big(\tilde{\phi}_i(\tau,
      \bm{x}, \bm{v}), \bm{v}\big) S_x(\tau),\\
      D_2\delta_{\tau, i}(\bm{x}, \bm{v}) &= D_1s_i\big(\tilde{\phi}_i(\tau,
      \bm{x}, \bm{v}), \bm{v}\big) S_v(\tau) \nonumber \\ 
      & \qquad \qquad + D_2s_i\big(\tilde{\phi}_i(\tau,
      \bm{x}, \bm{v}), \bm{v}\big),
    \end{align}
  \end{subequations}
  where $\tilde{\phi}_i:[0,\infty) \times \tilde{\sD}$ is the worst-case (with respect to $s_i$) disturbed trajectory at $\tau$, starting from $\bm{x}$ under constant reference $\bm{v}$, defined as
  \begin{equation} \label{eq:traj-rdsm-family-tau-worst-case}
    \tilde{\phi}_i(\tau, \bm{x}, \bm{v}) = \underset{\bm{z} \in \bar{\sR}}{\argmin} \; s_i\big(\phi(\tau, \bm{x}, \bm{v}) + (1 - e^{-c\tau}) \bm{z}, \bm{v}\big).
  \end{equation}
\end{lemma}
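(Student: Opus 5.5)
The plan is to rewrite $\delta_{\tau,i}$ as a parametric minimization over a fixed compact convex set and then apply an envelope (Danskin-type) argument. The substitution $\bm{x}' = (1-e^{-c\tau})\bm{z}$ maps $\bar{\sR}$ onto $\sR(\tau)$, so
\begin{equation*}
  \delta_{\tau,i}(\bm{x},\bm{v}) = \min_{\bm{z}\in\bar{\sR}} G_i(\tau,\bm{x},\bm{v},\bm{z}), \qquad G_i = s_i\big(\phi(\tau,\bm{x},\bm{v}) + (1-e^{-c\tau})\bm{z}, \bm{v}\big).
\end{equation*}
The constraint set $\bar{\sR}$ no longer depends on $(\tau,\bm{x},\bm{v})$.

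By Assumption~\ref{ass:sys-prestab-differentiable}, the flow $\phi$ is $C^1$ in $(\tau,\bm{x},\bm{v})$. Its partials are $f_\pi(\phi,\bm{v})$, $S_x$ and $S_v$, and these are well defined on $\tilde{\sD}$ as already noted. Together with Assumption~\ref{ass:cst-differentiable}, this makes $G_i$ jointly $C^1$ in all of its arguments.

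The core step is uniqueness of the minimizer, which gives that $\tilde{\phi}_i$ in \eqref{eq:traj-rdsm-family-tau-worst-case} is well defined.
\begin{itemize}
  \item If $s_i(\cdot,\bm{v})$ is strictly convex and $\tau>0$, then $\bm{z}\mapsto G_i$ is strictly convex on the convex compact set $\bar{\sR}$, so the minimizer is unique.
  \item If $s_i(\cdot,\bm{v})$ is linear, say $a_i(\bm{v})^\top\bm{x}+b_i(\bm{v})$, then minimizing a nonzero linear functional over the ellipsoid $\bar{\sR}$ has the unique closed-form solution $\bm{z}^* = -\frac{c_d d_{\max}}{c}\, P^{-1}a_i/\|a_i\|_{P^{-1}}$.
  \item The degenerate cases ($\tau=0$, or $a_i=0$) need separate care. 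There I would interpret $\tilde{\phi}_i$ as the point $\phi+(1-e^{-c\tau})\bm{z}^*$ rather than $\bm{z}^*$ itself, and this point is uniquely determined, equal to $\phi(\tau)$ when the multiplier vanishes.
\end{itemize}
I read the statement's $\tilde{\phi}_i$ as this worst-case \emph{point}, so that it matches the derivative formulas.

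Next comes continuity of the minimizer. The Berge maximum theorem, applied with the constant compact correspondence $\bar{\sR}$ and the continuous objective $G_i$, gives an upper semicontinuous argmin correspondence. Since the argmin is single-valued, it is continuous, and so $(\tau,\bm{x},\bm{v})\mapsto\tilde{\phi}_i$ is continuous.

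Then I apply Danskin's theorem: $\delta_{\tau,i}$ is a min over a compact set of a $C^1$ function with a unique minimizer. Hence it is differentiable, with gradient equal to the partial gradient of $G_i$ evaluated at the minimizer. This yields exactly \eqref{eq:traj-rdsm-family-tau-derivatives}:
\begin{itemize}
  \item differentiating in $\bm{x}$ gives $D_1s_i(\tilde{\phi}_i,\bm{v})S_x(\tau)$;
  \item differentiating in $\bm{v}$ gives $D_1s_i S_v + D_2s_i$;
  \item the $\tau$-derivative is $D_1s_i(\tilde{\phi}_i,\bm{v})\big(f_\pi(\phi,\bm{v}) + ce^{-c\tau}\bm{z}^*\big)$.
\end{itemize}
All three gradients are compositions of continuous maps with the continuous minimizer, so they are continuous, and $\psi$ is $C^1$.

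I expect the main obstacle to be the degenerate points where the $\bm{z}$-minimizer is not unique ($\tau=0$, or vanishing $D_1 s_i$ in the linear case). At such points I would argue directly. Near $\tau=0$, every $\bm{z}$ gives the same composite point up to $O(1-e^{-c\tau})$, so a mean-value estimate shows the one-sided Danskin directional derivatives agree. The stated formulas then remain valid and continuous, since the worst-case point $\tilde{\phi}_i$ itself converges to $\phi(0)=\bm{x}$.
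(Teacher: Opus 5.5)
Your main line is the paper's proof. You write $\delta_{\tau,i}$ as the minimum over the fixed compact set $\bar{\sR}$ of the $C^1$ function $s_i\big(\phi(\tau,\bm{x},\bm{v})+(1-e^{-c\tau})\bm{z},\bm{v}\big)$. Then you get a unique minimizer from Assumption~\ref{ass:convex-cst}, its continuity from Berge, and the gradient from Danskin.

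Two of your refinements are correct and go beyond the paper. The paper asserts uniqueness of the minimizer at every $(\tau,\bm{x},\bm{v})$, which is false at $\tau=0$. And $\tilde{\phi}_i$ must be read as the worst-case point $\phi+(1-e^{-c\tau})\bm{z}^*$, which is how Section~\ref{sec:rdsm-rcbf-linear} uses it.

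Your patch for the degenerate cases, however, contains a step that fails. Suppose $s_i(\cdot,\bm{v}_0)$ is linear with zero gradient and $\tau>0$. Then every $\bm{z}\in\bar{\sR}$ is a minimizer and the worst-case point sweeps all of $\phi(\tau)+\sR(\tau)$, so it is \emph{not} uniquely determined. This cannot be repaired, because when the gradient $a_i(\bm{v})$ varies with $\bm{v}$ the conclusion of the lemma genuinely fails. For example, take $n=\ell=1$ and $s_i(x,v)=vx+1$. Then $\delta_{\tau,i}(x,v)=v\,\phi(\tau,x,v)+1-|v|\,r(\tau)$, where $r(\tau)=\max_{x'\in\sR(\tau)}|x'|>0$ for $\tau>0$. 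This is not differentiable at $v=0$. The case must therefore be excluded by an extra hypothesis, e.g.\ $D_1s_i(\cdot,\bm{v})\neq 0$, or $D_1s_i$ independent of $\bm{v}$. The latter holds in Section~\ref{sec:rdsm-rcbf-linear}, where a zero row makes the formulas trivial.

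At $\tau=0$ your conclusion is right for the two stated derivatives. Since $S_x(0)=I$ and $S_v(0)=0$, the $\bm{x}$- and $\bm{v}$-partials of the objective are independent of $\bm{z}$. The $\tau$-partial, however, is $D_1s_i(\bm{x},\bm{v})\big(f_{\pi}(\bm{x},\bm{v})+c\bm{z}\big)$, which does depend on $\bm{z}$. So the directional derivatives do not simply agree; Danskin gives the right $\tau$-derivative as the minimum of this expression over $\bar{\sR}$.

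Likewise, your formula $D_1s_i(\tilde{\phi}_i,\bm{v})\big(f_{\pi}+ce^{-c\tau}\bm{z}^*\big)$ is not shown continuous as $\tau\to0^+$ by ``composition with the continuous minimizer,'' since $\bm{z}^*$ is undetermined at $\tau=0$. The remedy is first-order optimality of $\bm{z}^*$ over the convex set $\bar{\sR}$ for $\tau>0$. It gives $D_1s_i(\tilde{\phi}_i,\bm{v})\bm{z}^*=-\tfrac{c_d d_{\max}}{c}\|D_1s_i(\tilde{\phi}_i,\bm{v})^\top\|_{P^{-1}}$, hence
\begin{equation*}
\partial_{\tau}\psi_i = D_1s_i(\tilde{\phi}_i,\bm{v})\,f_{\pi}(\phi,\bm{v}) - c_d d_{\max}e^{-c\tau}\big\|D_1s_i(\tilde{\phi}_i,\bm{v})^\top\big\|_{P^{-1}}.
\end{equation*}
This depends only on the continuous worst-case point, and it converges to the Danskin right derivative at $\tau=0$.
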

\begin{proof}
  Recalling the contraction tube $\sR$ is continuous by Lemma~\ref{lem:contraction-tube-continuous}, the joint map $\psi$ is continuous by the Berge maximum theorem \cite{berge1963topological}

  Recall also the steady-state contraction tube $\bar{\sR} \subset \R^n$, defined in \eqref{eq:contraction-tube-ss}, is compact. Define the function $\xi_i:[0,\infty) \times \tilde{\sD} \times \bar{\sR} \to \R$ by
  \begin{equation*}
    \xi_i(\tau, \bm{x}, \bm{v}, \bm{z}) = s_i\big(\phi(\tau, \bm{x}, \bm{v})  + (1 - e^{-c\tau})\bm{z}, \bm{v}\big).
  \end{equation*}
  Note that $\xi_i$ is continuously differentiable. Then,
  \begin{equation*}
    \psi_i(\tau, \bm{x}, \bm{v}) = \min_{\bm{z} \in \bar{\sR}} \, \xi_i(\tau,
    \bm{x}, \bm{v}, \bm{z}).
  \end{equation*}
  Define the set of minimizing points
  \begin{equation*}
    \bar{\sR}_i(\tau, \bm{x}, \bm{v}) \!=\! \left\{\bar{\bm{z}} \!\in\! \bar{\sR} \mid
    \xi_i(\tau, \bm{x}, \bm{v}, \bar{\bm{z}}) \!=\! \min_{\bm{z} \in \bar{\sR}}
    \xi_i(\tau, \bm{x}, \bm{v}, \bm{z})\right\},
  \end{equation*}
  and note that by Assumption \ref{ass:convex-cst} and convexity of $\bar{\sR}$, $\bar{\sR}_i(\tau, \bm{x}, \bm{v})$ consists of a unique point $\bar{\bm{z}}$. It follows by the Danskin theorem \cite[Proposition~B.22]{bertsekas2016nonlinear} that $\psi$ is differentiable at $(\tau, \bm{x}, \bm{v})$ with the derivatives given in the statement of this Lemma.
  Leveraging again the Berge maximum theorem \cite{berge1963topological}, the solution  mapping $\bar{z}:(\tau, \bm{x}, \bm{v}) \mapsto \bar{\bm{z}} \in \bar{\sR}_i(\tau, \bm{x}, \bm{v})$ is continuous. Therefore, the joint map $\psi$ is continuously differentiable.
\end{proof}
Note that Assumption~\ref{ass:convex-cst} only ensures there is a
unique minimizer in $\bar{\sR}$. We have shown that instead of considering the monolithic $\Delta$ given in \eqref{eq:traj-rdsm-fh}, we can instead, and equivalently, work with the family of continuously differentiable functions $\{\delta_{\tau}\}_{\tau \in [0,T]} \cup \delta_F$ (see Remark~\ref{rem:ecs}).

\begin{theorem} \label{thm:traj-rdsm-family-ercs}
  Let $\Delta_F: \tilde{\sD} \to \R^{q_F}$ be an RDSM satisfying \eqref{eq:rdsm-invariant-strict} and let $T \geq 0$ be the horizon. For any $\tau \in [0,T]$, let $\delta_\tau: \tilde{\sD} \to \R^q$ be given as in \eqref{eq:traj-rdsm-family-tau} and $\delta_F: \tilde{\sD} \to \R^{q_F}$ be given as in \eqref{eq:traj-dsm-family-terminal}. If $\sX \subset \R^n$ and $\sV \subset \R^{\ell}$ are compact, there exists a minimum horizon $T_{\min} > 0$ such that if $T \geq T_{\min}$, then the family $\{\delta_{\tau}\} \cup \delta_F$ has the extended robust control-sharing property.
\end{theorem}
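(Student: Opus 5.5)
The plan is to mirror the proofs of Theorem~\ref{thm:rdsm-rcbf} and Proposition~\ref{cor:traj-rdsm-fh}, working with the whole family instead of the monolithic $\Delta$. By Remark~\ref{rem:ecs}, the extended robust control-sharing property amounts to one requirement: there is $\alpha\in\cK$ such that for every $(\bm{x},\bm{v})$ in the joint superlevel set $\tilde{\sC}=\bigcap_{\tau\in[0,T]}\tilde{\sC}_\tau\cap\tilde{\sC}_F^T$ a \emph{single} augmented input works for \emph{all} members of the family simultaneously. That is, for every $\tau\in[0,T]$, every $i$ and every $\bm{d}\in\sW$,
\[
\dot\delta_{\tau,i}(\bm{x},\bm{v},\bm{u},\bm{w},\bm{d})+\alpha\big(\delta_{\tau,i}(\bm{x},\bm{v})\big)\geq 0,
\]
and the analogous inequality holds for $\delta_{F,i}$. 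The natural candidate input is $(\bm{u},\bm{w})=(\pi(\bm{x},\bm{v}),0)$, which is feasible because $\tilde{\sC}\subset\tilde{\sX}$ (shown as in Proposition~\ref{cor:traj-rdsm-fh}). With this input, the task reduces to showing $D_1\delta(\bm{x},\bm{v})\big(f_\pi(\bm{x},\bm{v})+p(\bm{x})\bm{d}\big)\geq-\alpha(\delta)$ uniformly over the family.

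\textbf{Step 1 (boundary condition for each member).} First establish the analogue of \eqref{eq:rdsm-invariant}: if $(\bm{x},\bm{v})\in\tilde{\sC}$ and $\delta_{\tau,i}(\bm{x},\bm{v})=0$, then $D_1\delta_{\tau,i}\,\dot{\bm{x}}\geq 0$ for all $\bm{d}\in\sW$. Argue by contradiction as in Theorem~\ref{thm:traj-rdsm}. Let $\tilde x$ be the disturbed trajectory with constant $\bm{d}$, and suppose $\delta_{\tau,i}(\tilde x(\epsilon),\bm{v})<0$ for small $\epsilon>0$. By the tube inclusion
\[
\phi(\sigma,\tilde x(\epsilon))+\sR(\sigma)\subset\phi(\sigma+\epsilon,\bm{x})+\sR(\sigma+\epsilon),
\]
applied with $\sigma=\tau$, we get $\delta_{\tau,i}(\tilde x(\epsilon),\bm{v})\geq\delta_{\tau+\epsilon,i}(\bm{x},\bm{v})$. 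The difficulty is that $\tau+\epsilon$ may exceed $T$. I would handle this with a first-order argument instead: $\delta_{\tau,i}(\tilde x(\epsilon))-\delta_{\tau+\epsilon,i}(\bm{x})\geq 0$ with equality at $\epsilon=0$. Differentiating at $\epsilon=0^+$ then gives
\[
D_1\delta_{\tau,i}\,\dot{\bm{x}}\geq\partial_\tau\delta_{\tau,i}(\bm{x},\bm{v}).
\]
For $\tau<T$, the point $\tau$ is an interior minimizer of $\sigma\mapsto\delta_{\sigma,i}$ at level $0$ on $\tilde{\sC}$, so $\partial_\tau\delta_{\tau,i}=0$ when $\tau\in(0,T)$ and $\partial_\tau\delta_{\tau,i}\geq0$ when $\tau=0$. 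This uses the joint $C^1$ property of $\psi$ from the preceding lemma.

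\textbf{Step 2 (the endpoint $\tau=T$ and the terminal member).} When $\tau=T$ the time derivative could be negative, and this is where the terminal set enters. I would show that $\delta_{T,i}(\bm{x},\bm{v})=0$ together with $\delta_F(\bm{x},\bm{v})\geq 0$ forces enough margin: since $\phi(T)$ lies in the RCI set $\tilde{\sC}_F$, the continued tube from $\phi(T)$ stays admissible, so $\partial_\tau\delta_{\tau,i}|_{\tau=T}\geq 0$ must hold at such points. Alternatively, if needed, I would increase $T$ so that the minimum is attained in the interior. For $\delta_F$, I reuse verbatim the computation of Proposition~\ref{cor:traj-rdsm-fh}: choose $T_{\min}$ so that $\|S_x(T)\|\leq \epsilon\sqrt{\lambda_{\min}(P)}/(L_Fc_dd_{\max})$, which yields $D_1\delta_{F,i}\dot{\bm{x}}\geq0$ whenever $\delta_{F,i}=0$.

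\textbf{Step 3 (class-$\cK$ regularization).} Repeat the Berge/\cite[Lemma 4.1]{albertini1999continuous} argument of Theorem~\ref{thm:rdsm-rcbf}, now with the index $\tau$ folded into the compact parameter set. Define
\[
\hat{\sC}_i(c)=\{(\tau,\bm{x},\bm{v})\in[0,T]\times\tilde{\sC}:0\leq\delta_{\tau,i}(\bm{x},\bm{v})\leq c\}
\]
and
\[
\hat\alpha_i(c)=-\inf_{\hat{\sC}_i(c)\times\sW}D_1\delta_{\tau,i}(\bm{x},\bm{v})\big(f_\pi+p\bm{d}\big).
\]
Compactness of $[0,T]\times\tilde{\sC}\times\sW$ (from compactness of $\sX,\sV$ and continuity of the family) gives upper semicontinuity of $\hat\alpha_i$. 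Steps~1--2 give $\hat\alpha_i(0)\leq0$. Taking the maximum over $i$ and over the terminal channels then yields a common $\alpha\in\cK$.

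\textbf{Main obstacle.} I expect the main obstacle to be the endpoint case $\tau=T$ in Step~2, where the monotone-shift argument breaks down. My first attempt there would be to exploit \eqref{eq:rdsm-invariant-strict} and the contraction of $S_x(T)$ to show that the $\tau=T$ constraint is implied, locally, by $\delta_F$.
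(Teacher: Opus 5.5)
\textbf{Agreement with the paper.} Your architecture is the paper's. You use the selection $(\pi(\bm{x},\bm{v}),0)$, the tube inclusion $\phi(\sigma,\tilde{x}(\epsilon))+\sR(\sigma)\subset\phi(\sigma+\epsilon,\bm{x})+\sR(\sigma+\epsilon)$ for the members $\delta_\tau$, and $T_{\min}$ chosen from $\|S_x(T)\|$ for $\delta_F$. You also use the Berge/Albertini--Sontag regularization with $\tau$ as an extra compact parameter. The paper forms $\hat{\alpha}_i(c,\tau)$ and takes $\bar{\alpha}_i(c)=\sup_{\tau\in[0,T]}\hat{\alpha}_i(c,\tau)$, which is equivalent to folding $\tau$ into $\hat{\sC}_i(c)$. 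Restricting the $\delta_F$ estimate to points with $\delta_{F,i}=0$, where \eqref{eq:rdsm-invariant-strict} actually applies, is if anything more careful than the paper. For $\tau<T$ your first-order argument is correct but more than needed. The paper takes $\epsilon\in(0,T-\tau)$, so $0>\delta_{\tau,i}(\tilde{x}(\epsilon),\bm{v})\geq\delta_{\tau+\epsilon,i}(\bm{x},\bm{v})\geq 0$ is an immediate contradiction.

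\textbf{The gap at $\tau=T$.} You flag the endpoint but do not close it, and none of your three suggestions works.
\begin{itemize}
\item \emph{Terminal-set invariance.} Membership in $\tilde{\sC}$ gives $\delta_{\sigma,i}(\bm{x},\bm{v})\geq 0$ on $[0,T]$. So wherever $\delta_{T,i}(\bm{x},\bm{v})=0$ you automatically have $\partial_\tau\delta_{\tau,i}|_{\tau=T}\leq 0$, and your claim $\partial_\tau\delta_{\tau,i}|_{\tau=T}\geq 0$ asserts exact tangency. The fact $(\phi(T),\bm{v})\in\tilde{\sC}_F$ cannot deliver this. Robust invariance of $\tilde{\sC}_F$ concerns trajectories starting exactly at $\phi(T)$, i.e.\ at most the tubes $\phi(T+\sigma)+\sR(\sigma)$. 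But $\delta_{T+\sigma,i}(\bm{x},\bm{v})$ is evaluated on the strictly larger set $\phi(T+\sigma)+\sR(T+\sigma)$, and $\sR$ is only an over-approximation anyway.
\item \emph{A concrete configuration.} Take an equilibrium $\bm{x}=\bar{x}(\bm{v})$. Then $\tau\mapsto\delta_{\tau,i}$ is nonincreasing because the tubes are nested. Nothing in $\Delta_F(\bar{x}(\bm{v}),\bm{v})\geq0$ stops $\bar{x}(\bm{v})+\sR(\tau)$ from leaving $\sX_{\bm{v}}$ right after $T$. There $\tilde{\xi}_i=D_1\delta_{T,i}\,p(\bar{x}(\bm{v}))\bm{d}$, which is negative for a suitable $\bm{d}\in\sW$ whenever $0$ is interior to $\sW$ and this row vector is nonzero.
\item \emph{Implication by $\delta_F$.} In the same configuration $\delta_F$ can be strictly positive while $\delta_{T,i}$ is active. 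So $\delta_{T,i}\geq 0$ is not locally implied by $\delta_F\geq0$.
\item \emph{Enlarging $T$.} This only moves the endpoint, since every horizon has its own member $\delta_T$.
\end{itemize}

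\textbf{How the paper handles it, and what that still needs.} The paper does not use terminal-set reasoning here. After the case $\tau\in[0,T)$, it extends $\tilde{\xi}_i\geq 0$ to $\tau=T$ by appealing to joint continuity of $(\tau,\bm{x},\bm{v})\mapsto\delta_{\tau,i}(\bm{x},\bm{v})$. If you follow that route, be explicit about what it requires. Every $(\bm{x},\bm{v})\in\tilde{\sC}$ with $\delta_{T,i}(\bm{x},\bm{v})=0$ must be a limit of points $(\bm{x}_k,\bm{v}_k)\in\tilde{\sC}$ with $\delta_{\tau_k,i}(\bm{x}_k,\bm{v}_k)=0$ for some $\tau_k<T$. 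Only then does $\tilde{\xi}_i\geq0$ pass to the limit through continuity of $D_1\delta_{\tau,i}$. This is not automatic. In the equilibrium configuration with $\tau\mapsto\delta_{\tau,i}$ strictly decreasing, such points would force $\tau_k\to T$ and $\partial_\tau\delta_{\tau_k,i}(\bm{x}_k,\bm{v}_k)=0$ at interior minimizers. That is incompatible with joint $C^1$ continuity, since the limit $\partial_\tau\delta_{\tau,i}|_{\tau=T}$ is negative. Closing the endpoint member therefore needs an additional argument, or a hypothesis tying the terminal set to the tube beyond $T$. Neither your proposal nor a bare continuity appeal supplies it.
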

\begin{proof}
  Since $\Delta_F$ is an RDSM-RCBF by Theorem \ref{thm:rdsm-rcbf}, it follows by Corollary \ref{cor:rdsm-rcbf-cond-pi} that there exists  $\alpha_F \in \cK$ satisfying \eqref{eq:rdsm-rcbf-cond-pi}.
  Moreover, by compactness of $\tilde{\sC}_F$, and because $\Delta_F$ is continuously differentiable, define
  \begin{equation*}
    L_F = \max_{(\bm{x}', \bm{v}', i) \in \tilde{\sC}_F\times \{1,\ldots, q_F\}} \left\|D_1\Delta_{F,i}(\bm{x}', \bm{v}')\right\| > 0.
  \end{equation*}
  By Assumption~\ref{ass:bounded-disturbance-function}, $\sup_{\bm{x} \in \R^n} \|p(\bm{x})\| \!\leq\! c_d/ \sqrt{\lambda_{\min}(P)}$ is finite. By Assumption~\ref{ass:contractivity} and \cite[Proposition~1]{freire2026designing}, $\lim_{\tau \to \infty} S_x(\tau) = 0$.
  Therefore, there exists $T_{\min} > 0$ such that if $T \geq T_{\min}$,
  \begin{equation*}
    \left\|S_{x}(T)\right\| \leq \frac{\epsilon \sqrt{\lambda_{\min}(P)}}{L_F c_d d_{\max}}.
  \end{equation*}

  Recall that if $\sX$ and $\sV$ are compact, so is $\tilde{\sC}$ (see, e.g., the proof of Theorem \ref{thm:traj-rdsm}). Pick any $i \in \{1,\ldots,q\}$ and define the set-valued map $\hat{\sC}_i: [0,\infty) \times [0,T] \rightrightarrows \tilde{\sD}$ by
  \begin{equation*}
    \hat{\sC}_i(c, \tau) = \left\{(\bm{x}, \bm{v}) \in \tilde{\sC} \mid 0 \leq \delta_{\tau, i}(\bm{x}, \bm{v}) \leq c\right\},
  \end{equation*}
  the function $\tilde{\xi}_{i}: \tilde{\sD} \times \sW \times [0,T] \to \R$ by
  \begin{equation*}
    \tilde{\xi}_i (\bm{x}, \bm{v}, \bm{d}, \tau) = D_1 \delta_{\tau, i}(\bm{x}, \bm{v}) \big(f_{\pi}(\bm{x}, \bm{v}) + p(\bm{x})\bm{d} \big),
  \end{equation*}
  and the value function $\hat{\alpha}_i: [0,\infty) \times [0,T] \to \R$ by
  \begin{equation*}
    \hat{\alpha}_i(c, \tau) = - \inf_{(\bm{x}, \bm{v}, \bm{d}) \in \hat{\sC}_i(c, \tau) \times \sW} \tilde{\xi}_i(\bm{x}, \bm{v}, \bm{d}, \tau).
  \end{equation*}
  We seek to upper-bound $\hat{\alpha}_i$ by a class $\cK$ function as in the proof of Theorem~\ref{thm:rcbf-rci}.

  For the first condition ($\hat{\alpha}_i$ upper semicontinuous), it can be shown that $c \mapsto \hat{\sC}_i(c) \times \sW$ is upper semicontinuous. Recalling that $c \mapsto \hat{\sC}_i(c) \times \sW$ has compact values, it follows by the Berge maximum theorem \cite{berge1963topological} that the value function $\hat{\alpha}_i$ is upper semicontinuous on $[0,\infty) \times [0,T]$. Next, define
  \begin{equation*}
    \bar{\alpha}_i(c) = \sup_{\tau \in [0,T]} \hat{\alpha}_i(c,\tau),
  \end{equation*}
  and note that we can apply the Berge maximum theorem \cite{berge1963topological} again
  to conclude that $\bar{\alpha}_i$ is upper semicontinuous on $[0,\infty)$.

  For the second condition ($\hat{\alpha}(0) \leq 0$), let $\tau \in [0,T)$ be given and pick any $(\bm{x}, \bm{v}, \bm{d}) \in \hat{\sC}_i(0, \tau) \times \sW \subset  \tilde{\sC} \times \sW$. For a contradiction, assume $\tilde{\xi}_i(\bm{x}, \bm{v}, \bm{d}, \tau) < 0$ and let $\tilde{x}:[0,\infty) \to \R^n$ be the solution to the initial value problem
  \begin{equation*}
    D\tilde{x}(t) = f_{\pi}\big(\tilde{x}(t), \bm{v}\big) + p\big(\tilde{x}(t)\big) \bm{d}, \quad \tilde{x}(0) = \bm{x}.
  \end{equation*}
  Then there exists $\epsilon \in (0,T-\tau)$ such that
  \begin{align*}
    0 &> \delta_{\tau, i}\big(\tilde{x}(\epsilon), \bm{v}\big) \\ 
    &> \min_{\bm{x}' \in \sR(\tau)} s_i\Big( \phi\big(\tau, \tilde{x}(\epsilon)\big) + \bm{x}', \bm{v}\Big)\\
    &> \min_{\bm{x}' \in \sR(\tau + \epsilon)} s_i\big(\phi(\tau + \epsilon, \bm{x}) + \bm{x}', \bm{v}\big) \\
    &> \delta_{\tau + \epsilon, i}(\bm{x}, \bm{v}),
  \end{align*}
  which contradicts $(\bm{x}, \bm{v}) \in \tilde{\sC}$. In the previous string of inequalities, we have used the fact that (see proof of Theorem~\ref{thm:traj-rdsm}) for any $\sigma \geq 0$,
  \begin{equation*}
    \phi\big(\sigma, \tilde{x}(\epsilon), \bm{v}\big) + \sR(\sigma) \subset \phi(\sigma + \epsilon, \bm{x}, \bm{v}) + \sR(\sigma + \epsilon).
  \end{equation*}
  By joint continuity of the map $(\tau, \bm{x}, \bm{v}) \mapsto \delta_{\tau, i}(\bm{x}, \bm{v})$, we have shown that $\forall \tau \in [0,T]$, $\tilde{\xi}_i(\bm{x}, \bm{v}, \bm{d}, \tau) \geq 0$, which implies $\hat{\alpha}_i(0, \tau) \leq 0$. It follows that $\bar{\alpha}_i(0) \leq 0$.

  Having shown its conditions hold, it follows by \cite[Lemma~4.1]{albertini1999continuous} that there exists $\alpha_i \in \cK$ that upper-bounds $\bar{\alpha}_i$. Since $i \in \{1,\ldots,q\}$ was arbitrary, let us define $\alpha(c) = \max_{i \in \{1,\ldots,q,F\}} \alpha_i(c)$ and note that $\alpha \in \cK$. 
  
  Finally, let $(\bm{x}, \bm{v}) \in \tilde{\sC}$ be given and recall that since $\tilde{\sC} \subset \tilde{\sX}$, $\pi(\bm{x}, \bm{v}) \in \sU$. For any $\tau \in [0,T]$, any $i \in \{1,\ldots,q\}$, and any $\bm{d} \in \sW$,
  \begin{align*}
    \tilde{\xi}_i(\bm{x}, \bm{v}, \bm{d}, \tau) \! &= D\delta_{\tau, i}(\bm{x}, \bm{v}) \mat{f(\bm{x}) + g(\bm{x}) \pi(\bm{x}, \bm{v}) + p(\bm{x}) \bm{d} \\ 0} \\
    &\geq \inf_{(\bm{x}_0, \bm{v}_0, \bm{d}_0) \in \hat{\sC}_i\left(\delta_{\tau, i}(\bm{x}, \bm{v}), \tau\right) \times \sW} \tilde{\xi}_i(\bm{x}_0, \bm{v}_0, \bm{d}_0, \tau) \\
    &\geq -\hat{\alpha}_i\big(\delta_{\tau, i}(\bm{x}, \bm{v}), \tau\big) \\
    &\geq -\bar{\alpha}_i\big(\delta_{\tau, i}(\bm{x}, \bm{v})\big)\\
    &\geq -\alpha\big(\delta_{\tau,i}(\bm{x}, \bm{v})\big).
  \end{align*}
  Also, since $(\bm{x}, \bm{v}) \in \tilde{\sC}$, it follows that
  \begin{equation*}
    0 \leq \min_{i \in \{1,\ldots,q_F\}} \delta_{F,i}(\bm{x}, \bm{v}) = \min_{i \in \{1,\ldots, q_F\}} \Delta_{F,i}\big(\phi(T, \bm{x}, \bm{v}), \bm{v}\big),
  \end{equation*}
  which implies $\big(\phi(T), \bm{v}\big) \in \tilde{\sC}_F$. Therefore, recalling $\alpha_F$ satisfies \eqref{eq:rdsm-rcbf-cond-pi},
  \begin{align*}
    D\delta_{F, i}&(\bm{x}, \bm{v}) \mat{f(\bm{x}) + g(\bm{x}) \pi(\bm{x},    \bm{v}) + p(\bm{x})\bm{d} \\ 0} \\ 
    &= D_1\delta_{\tau,i}(\bm{x}, \bm{v})
    \big(f_{\pi}(\bm{x}, \bm{v}) + p(\bm{x}) \bm{d} \big) \\
    &= D_1 \Delta_{F,i}\big(\phi(T), \bm{v}\big) S_x(T) \big( f_{\pi}(\bm{x},
    \bm{v}) + p(\bm{x}) \bm{d} \big)\\
    &= D_1 \Delta_{F,i}\big(\phi(T), \bm{v}\big)\big( f_{\pi}(\phi(T), \bm{v})
    + S_x(T) p(\bm{x}) \bm{d}\big)\\
    &\geq \epsilon + D_1\Delta_{F,i} \big(\phi(T), \bm{v}\big) S_x(T)
    p(\bm{x}) \bm{d}\\
    &\geq \epsilon - \left|D_1\Delta_{F,i} \big(\phi(T), \bm{v}\big) S_x(T)
    p(\bm{x}) \bm{d}\right|\\
    &\geq \epsilon - \left\|D_1\Delta_{F,i} \big(\phi(T),
    \bm{v}\big)\right\|\; \|S_x(T)\| \; \left\| p(\bm{x})\right\|
    \|\bm{d}\|\\
    &\geq \epsilon - \frac{L_F \|S_x(T)\| c_d
    d_{\max}}{\sqrt{\lambda_{\min}(P)}}\\
    &\geq 0\\
    & \geq -\alpha_F\Big(\Delta_{F,i}\big(\phi(T), \bm{v}\big)\Big) \\
    & \geq -\alpha\big(\delta_{F,i}(\bm{x}, \bm{v})\big),
  \end{align*}
  where we have used \cite[Lemma 1]{freire2026designing}. We conclude that the family $\{\delta_{\tau}\} \cup \delta_F$ has the extended robust control-sharing property.
\end{proof}
Note a family of functions with the extended robust control-sharing property can be treated as the generalization of the CBF definition to (infinite) families of continuously differentiable functions (see Remark~\ref{rem:ecs}).

\subsection{Trajectory RDSM Safety Filter}
The RDSM-RCBF safety filter for the family $\{\delta_\tau\} \cup \delta_F$ given in \eqref{eq:traj-rdsm-family-tau} and \eqref{eq:traj-dsm-family-terminal} is
\begin{align} \label{eq:rdsm-rcbf-sf-family}
  \min_{(\bm{u}, \bm{w}) \in \sU \times \R^{\ell}} \; &\|\bm{u} - \kappa(\bm{x}, \bm{v})\|^2 + \eta \|\bm{w} - \rho(\bm{x}, \bm{v})\|^2\\
  \text{s.t.} \quad & \dot{\delta}_{\tau,i}(\bm{x},\bm{v}, \bm{u}, \bm{w}) \!+\! \tilde{\omega}_{\tau, i}(\bm{x}, \bm{v}) \geq -\alpha\big(\delta_{\tau, i}(\bm{x}, \bm{v})\big),\nonumber\\
  & \qquad\qquad\qquad\qquad \forall \tau \in [0,T], \; i = 1,\ldots,q \nonumber\\
  &\dot{\delta}_{F,i}(\bm{x}, \bm{v}, \bm{u}, \bm{w}) \!+\! \tilde{\omega}_{F,i}(\bm{x}, \bm{v}) \geq - \alpha\big(\delta_{F,i}(\bm{x}, \bm{v})\big), \nonumber\\
  &\qquad \qquad \qquad \qquad \qquad \quad \;\qquad i = 1, \ldots, q_F, \nonumber
\end{align}
where
\begin{equation*}
  \dot{\delta}_{\tau, i}(\bm{x}, \bm{v}, \bm{u}, \bm{w}) =
  D \delta_{\tau,i}(\bm{x}, \bm{v}) \mat{f(\bm{x}) + g(\bm{x}) \bm{u} \\
  \bm{w}},
\end{equation*}
with $D \delta_{\tau, i}$ given in
\eqref{eq:traj-rdsm-family-tau-derivatives}, where $\dot{\delta}_{F,i}$ is given as
\begin{equation*}
  \dot{\delta}_{F, i}(\bm{x}, \bm{v}, \bm{u}, \bm{w}) = D \delta_{F,i}(\bm{x}, \bm{v}) \mat{f(\bm{x}) + g(\bm{x}) \bm{u} \\ \bm{w}},
\end{equation*}
with
\begin{align*}
  D_1 \delta_{F,i}(\bm{x}, \bm{v}) \!&=\! D_1 \Delta_{F,i}\big(\phi(T),
  \bm{v}\big) S_x(T) \\
  D_2 \delta_{F,i}(\bm{x}, \bm{v}) \!&=\! D_1 \Delta_{F,i}\big(\phi(T), \bm{v}\big) S_v(T) 
  \!+\! D_2 \Delta_{F,i}\big(\phi(T), \bm{v}\big),
\end{align*}
and where
\begin{align*}
  \tilde{\omega}_{\tau,i}(\bm{x}, \bm{v}) &= \min_{\bm{d} \in \sW}
  D_1 \delta_{\tau, i}(\bm{x}, \bm{v}) p(\bm{x}) \bm{d},\\
  \omega_F(\bm{x}, \bm{v}) &= \min_{\bm{d} \in \sW} D_1 \delta_{F,i}(\bm{x},
  \bm{v}) p(\bm{x}) \bm{d}.
\end{align*}
While \eqref{eq:rdsm-rcbf-sf-family} has infinite-dimensional constraints, we consider in practice a finite set $\sT \subset [0,T]$ and enforce the constraints $\forall \tau \in \sT$. The set $\sT$ can be obtained, along with $\phi$, using a variable-step solver for \eqref{eq:ivp-prestab}. This can be done to an arbitrary degree of accuracy. Similar to \eqref{eq:rdsm-rcbf-sf}, the trajectory RDSM-RCBF safety filter \eqref{eq:rdsm-rcbf-sf-family} is a QP if $\sU$ is a polytope.
Also important to consider is the computation cost of the worst-case (with respect to $s_i$) disturbed trajectory $\tilde{\phi}_i$ given in \eqref{eq:traj-rdsm-family-tau-worst-case} and $\tilde{\omega}_{\tau,i}$.
In both cases, Assumptions~\ref{ass:compact-convex-disturbance-set} and~\ref{ass:convex-cst} ensure they can be computed easily. Feasibility of \eqref{eq:rdsm-rcbf-sf-family} at any $(\bm{x}, \bm{v}) \in \tilde{\sC}$ follows by Theorem~\ref{thm:traj-rdsm-family-ercs}.

\section{RDSM-RCBFs for Linear Systems} \label{sec:rdsm-rcbf-linear}
In this section, we show how the design of RDSMs simplifies for linear systems subject to disturbances
\begin{equation} \label{eq:sys-disturbance-linear}
  \dot{\bm{x}} = A\bm{x} + B\bm{u} + E\bm{d},
\end{equation}
where $A \in \R^{n \times n}$, $B \in \R^{n \times m}$ and $E \in \R^{n \times
n_d}$. We also assume the constraints are linear.
\begin{equation*}
  \sX = \left\{\bm{x} \in \R^n \mid  W\bm{x} \leq \bm{z}\right\}, \quad \sU = \left\{\bm{u} \in \R^m \mid  M\bm{u} \leq \bm{b}\right\},
\end{equation*}
where $\bm{z} \in \R^{n_x}$ and $\bm{b} \in \R^{n_u}$, and define $q = n_x + n_u$.
If we assume the pair $(A,B)$ is stabilizable, there exists $K \in \R^{m \times n}$ such that $A_{\pi} = A - BK$ is Hurwitz. Letting $G = [G_x;\, G_u]$, with $G_x \in \R^{n\times {\ell}}$ and $G_u \in \R^{m\times {\ell}}$ be a basis for $\ker{[A \; B]}$,
\begin{equation}
  \pi(\bm{x}, \bm{v}) = G_u \bm{v} - K \left(\bm{x} - G_x\bm{v}\right),
\end{equation}
is a prestabilizing controller satisfying $\pi(G_x \bm{v}, \bm{v}) = G_u
\bm{v}$. Then, the prestabilized constraint set $\sX_{\bm{v}}$ defined in \eqref{eq:set-prestab-set} is described by
\begin{equation}
  s(\bm{x}, \bm{v}) = \mat{\bm{z} - W\bm{x} \\ \bm{b} - M\pi(\bm{x}, \bm{v})},
\end{equation}
which is linear and therefore continuously differentiable with
\begin{equation*}
  D_1s = \mat{-W\\ MK} \!\in\! \R^{q \times n},\;
  D_2s = \mat{0 \\ M(G_u + K G_x)} \!\in\! \R^{q \times m},
\end{equation*}
where we have omitted the arguments of $D_1s$ and $D_2s$ because they are constant.
To construct the RDSMs, we can rely on any $P \succ 0$ satisfying
\begin{equation}
  PA_{\pi} + A_{\pi}^\top P \prec 0,
\end{equation}
which exists by the Lyapunov theorem because $A_{\pi}$ is Hurwitz.

\begin{remark}
  A common approach to construct $K$ and $P$ is the linear--quadratic
  regulator (LQR). Given LQR weights $Q \succeq 0$ and $R \succ 0$, with
  $(A,Q)$ detectable, the LQR optimal gain is $K = R^{-1} B^\top P$,  where $P \succ 0$ is the unique solution to the algebraic Riccati equation
  $0 = PA + A^\top P + Q - PBR^{-1} B^\top P$.
\end{remark}

\subsection{Lyapunov Robust Dynamic Safety Margin} \label{ssec:lyap-rdsm}
Constructing a Lyapunov RDSM amounts to finding an ISS Lyapunov function. To
that end, let
\begin{equation*}
  V(\bm{x}, \bm{v}) = \left(\bm{x} - G_x\bm{v}\right)^\top P \left(\bm{x} -
  G_x\bm{v}\right).
\end{equation*}
The function $V$ satisfies \eqref{eq:lyap-iss-bounded} because $P \succ 0$.
Indeed,
\begin{equation*}
  \lambda_{\min}(P)\|\bm{x} - G_x \bm{v}\|^2 \leq V(\bm{x}, \bm{v}) \leq
  \lambda_{\max}(P) \|\bm{x} - G_x \bm{v}\|^2.
\end{equation*}
Additionally, letting $B_{\pi} = B(G_u + K_{\pi}G_x)$, we have that the
prestabilized dynamics satisfy $f_{\pi}(\bm{x}, \bm{v}) = A_{\pi} \bm{x} + B_{\pi} \bm{v}$.
Recall that $D_1V(\bm{x}, \bm{v}) = 2\left(\bm{x} - G_x\bm{v}\right)^\top
P$ and that $A_{\pi}G_x \bm{v} + B_{\pi}\bm{v} = 0$. With this,
\begin{equation*}
  D_1V(\bm{x}, \bm{v}) f_{\pi}(\bm{x}, \bm{v}) = \bm{e}^\top \left(P A_{\pi} + A_{\pi}^\top P \right) \bm{e} = - \bm{e}^\top Q_{\pi} \bm{e},
\end{equation*}
where $\bm{e} = \bm{x} - G_x\bm{v}$ and $Q_{\pi} = - P A_{\pi} - A_{\pi}^\top
P \succ 0$. Adding the disturbance term,
\begin{equation*}
  D_1 V(\bm{x}, \bm{v}) \big(f_{\pi}(\bm{x}, \bm{v}) + E\bm{d} \big) =
  -\bm{e}^\top Q_{\pi} \bm{e} + 2\bm{e}^\top P E\bm{d}.
\end{equation*}
Defining the vector $\bm{z} = Q_{\pi}^{1/2} \bm{e}$, we can bound this term by
\begin{align*}
  D_1 V(\bm{x}, \bm{v}) \big(f_{\pi}(\bm{x}, \bm{v}) + E\bm{d} \big) &= -
  \|\bm{z}\|^2  + 2 \bm{z}^\top Q_{\pi}^{-1/2} P E\bm{d} \\
  & \!\!\!\!\!\!\!\!\!\!\!\!\!\!\!\!\!\!\!\!\!\!\!\! \leq -
  \|\bm{z}\| \left(\|\bm{z}\| - 2\left\|Q_{\pi}^{-1/2} P E\right\|
  \|\bm{d}\|\right).
\end{align*}
Note that if
\begin{equation*}
  \|\bm{z}\| \geq 2\left\|Q_{\pi}^{-1/2} P E\right\| \|\bm{d}\|,
\end{equation*}
then $D_1V(\bm{x}, \bm{v})\big(f_{\pi}(\bm{x}, \bm{v}) + E\bm{d}\big) \leq 0$.
We can guarantee this condition when $V(\bm{x},\bm{v}) \geq \chi(\|\bm{d}\|)$ with
\begin{equation*}
   \chi(\|\bm{d}\|) \triangleq 4
  \lambda_{\max}\left(Q_{\pi}^{-1/2} P Q_{\pi}^{-1/2}\right)
  \left\|Q_{\pi}^{-1/2} P E\right\|^2 \|\bm{d}\|^2.
\end{equation*}
We have shown $V$ is a reference-dependent ISS Lyapunov function as in
Definition~\ref{def:lyap-iss}. For linear systems in the form of
\eqref{eq:sys-disturbance-linear}, the region of attraction is global
(i.e., $\sD_{\bm{v}} = \R^n$), and the $i$-th ($i \in \{1,\ldots, q\}$) safety threshold value is
\begin{equation}
  \Gamma^*_i(\bm{v}) = \frac{s_i\left(G_x\bm{v},
  \bm{v}\right)\left|s_i\left(G_x\bm{v}, \bm{v}\right)\right|}{D_1s_i P^{-1}
  D_1 s_i^\top}.
\end{equation}
Finally, the Lyapunov RDSM $\Delta:\R^n \times \R^{m} \to \R^{2q}$ is
\begin{equation}
  \Delta(\bm{x}, \bm{v}) = \mat{\Gamma_1^*(\bm{v}) - V(\bm{x}, \bm{v}) \\
  \Gamma^*_1(\bm{v}) - \chi(d_{\max}) \\ \vdots \\ \Gamma_q^*(\bm{v}) -
  V(\bm{x}, \bm{v}) \\ \Gamma_q^*(\bm{v}) - \chi(d_{\max})}.
\end{equation}

\subsection{Trajectory Robust Dynamic Safety Margin} \label{ssec:traj-rdsm}
First, we verify that the prestabilized system $f_{\pi}$ satisfies the
contraction assumptions with
\begin{equation}
  c = -\mu_P(A_{\pi}) > 0, \qquad   c_d = \left\| P^{1/2} E\right\| > 0.
\end{equation}

\begin{remark}
  Let $T$ be a nonsingular transformation such that $TA_{\pi} T^{-1} = J$,
  where $J$ is the Jordan normal form of $A_{\pi}$. The choice $P = T^* T
  \succ 0$, satisfies the Lyapunov condition $PA_{\pi} + A_{\pi}^\top P \prec
  0$ and achieves the largest contraction rate $c$. Refer to
  \cite{bullo2026contraction} for more details on the efficiency of induced
  norms. While this choice is usually desirable for stability analysis, we
  have found that the geometry of the contraction tube (dictated by the
  eigenvalues of $P$) and its alignment with respect to the constraints
  $\sX_{\bm{v}}$ is more important than the magnitude of the contraction rate
  $c$.
\end{remark}
Assumption~\ref{ass:contractivity} holds because, for any $\bm{x}_1,
\bm{x}_2 \in \R^n$, $\bm{v} \in \R^{m}$, and $\bm{d} \in \R^{n_d}$,
\begin{align*}
  (\dot{\bm{x}}_1 &- \dot{\bm{x}}_2)^\top P (\bm{x}_1 - \bm{x}_2)  \\
  &=
  (\bm{x}_1 - \bm{x}_2)^\top A_{\pi}^\top P (\bm{x}_1 - \bm{x}_2) \\
  &= \frac12 (\bm{x}_1 - \bm{x}_2)^\top \left( PA_{\pi} +  A_{\pi}^\top P
  \right) (\bm{x}_1 - \bm{x}_2)\\
  &\leq \frac{1}{2}\lambda_{\max}\left(PA_{\pi}P^{-1} + A^\top \right) \|\bm{x}_1 - \bm{x}_2\|_P^2\\
  &\leq  \mu_{P}(A_{\pi}) \|\bm{x}_1 - \bm{x}_2\|_P^2 \\
  &\leq -c\|\bm{x}_1 - \bm{x}_2 \|_P^2.
\end{align*}

Assumption~\ref{ass:bounded-disturbance-function} holds because, for any
$\bm{d}_1, \bm{d}_2 \in \R^{n_d}$,
\begin{align*}
  \|E(\bm{d}_1 - \bm{d}_2)\|_P &= \left\|P^{1/2}E(\bm{d}_1 - \bm{d}_2)\right\|
  \\ 
  &\leq \left\|P^{1/2} E\right\| \|\bm{d}_1 - \bm{d}_2\| \\
  &\leq c_d  \|\bm{d}_1 -
  \bm{d}_2\|.
\end{align*}
One of the main pieces of computing the derivatives of the family
$\{\delta_{\tau}\}$ is the worst-case (with respect to $s_i$) disturbed
trajectory at $\tau$ starting from $\bm{x}$ under constant reference $\bm{v}$,
denoted $\tilde{\phi}_i(\tau, \bm{x}, \bm{v})$ and defined in
\eqref{eq:traj-rdsm-family-tau-worst-case}. From the KKT conditions, we
obtain that
\begin{equation*}
  \tilde{\phi}_i(\tau) = \phi(\tau) -
  \frac{c_d(1 - e^{-c\tau})d_{\max}}{c} \frac{P^{-1}
  D_1s_i^\top}{\|D_1s_i^\top\|_{P^{-1}}}.
\end{equation*}

\section{Simulation Examples}  \label{sec:examples}
We present two simulation examples applying the Lyapunov and trajectory RDSM-RCBFs. We compare their performance with a nonrobust implementation of the trajectory DSM-CBF in \cite{freire2026designing}. Simulations are computed in MATLAB and optimization problems are solved with MOSEK \cite{mosek}.

\subsection{Double Integrator}
Consider a double integrator system subject to disturbances $\ddot{x} = u + d$ with system matrices
\begin{equation}
  A = \mat{0 & 1 \\ 0 & 0}, \quad B = \mat{0\\ 1}, \quad E = \mat{0 \\ 1}.
\end{equation}
The equilibrium manifold is described by $\bar{x}(v) = v$ and $\bar{u}(v) = 0$. 
The state and input constraint sets are
\begin{align*}
  \sX &= \left\{\bm{x} \in \R^2 \mid |x| \leq 0.8~\unit{\m}, \, |\dot{x}| \leq 1.5~\unit{m/s}\right\},\\
  \sU &= \left[-20~\unit{m/s^2}, 20~\unit{m/s^2}\right],
\end{align*}
and the disturbance set is $\sW = [-2~\unit{m/s^2}, 2~\unit{m/s^2}]$. The initial state is $\bm{x}_0 = [0;\, 0] \in \sX$ and the nominal controller is $\kappa(\bm{x}) = -31.6(x-1) - 12.8\dot{x}$, which stabilizes the system to the inadmissible equilibrium $[1;\, 0]$. To design the gain $K_{\pi}$ for the prestabilizing controller, we solve the LQR problem with weights $Q = I_2$ and $R =0.01$. This results in $\pi(\bm{x}, v) = -10(x - v) -11 \dot{x}$ and associated weighting matrix
\begin{equation*}
  P = \mat{1.1 & 0.1 \\ 0.1 & 0.11} \succ 0.
\end{equation*}
We implement the following three approaches:
\begin{enumerate}
  \item \textbf{Trajectory DSM-CBF} \cite{freire2026designing}: This approach is not robust to external disturbances, which causes the DSM-CBF safety filter to become infeasible during the simulation. To address this, we add a slack variable to relax the constraints. The terminal DSM is a Lyapunov DSM computed with $\pi$ as described in \cite{freire2026using}.
  \item \textbf{Lyapunov RDSM-RCBF:} We follow the approach in Section~\ref{ssec:lyap-rdsm} to design a Lyapunov RDSM and implement the RDSM-RCBF safety filter \eqref{eq:rdsm-rcbf-sf}.
  \item \textbf{Trajectory RDSM-RCBF:} 
  We follow the approach in Section~\ref{ssec:traj-rdsm} to compute the contraction rate $c$, the disturbance rate $c_d$, and the worst-case disturbed trajectory $\tilde{\phi}$. We implement the RDSM-RCBF safety filter given in \eqref{eq:rdsm-rcbf-sf-family} for the families $\{\delta_{\tau}\}$ and $\{\delta_F\}$ where the terminal RDSM-RCBF is exactly the Lyapunov RDSM-RCBF computed in the previous approach.
\end{enumerate}
The prediction horizon in all trajectories that require it is $T =
0.5~\unit{s}$. All class $\cK$ functions used are $\alpha:c \mapsto 50c$.

\begin{figure}[t]
\centering
  \includegraphics[width=0.90\linewidth]{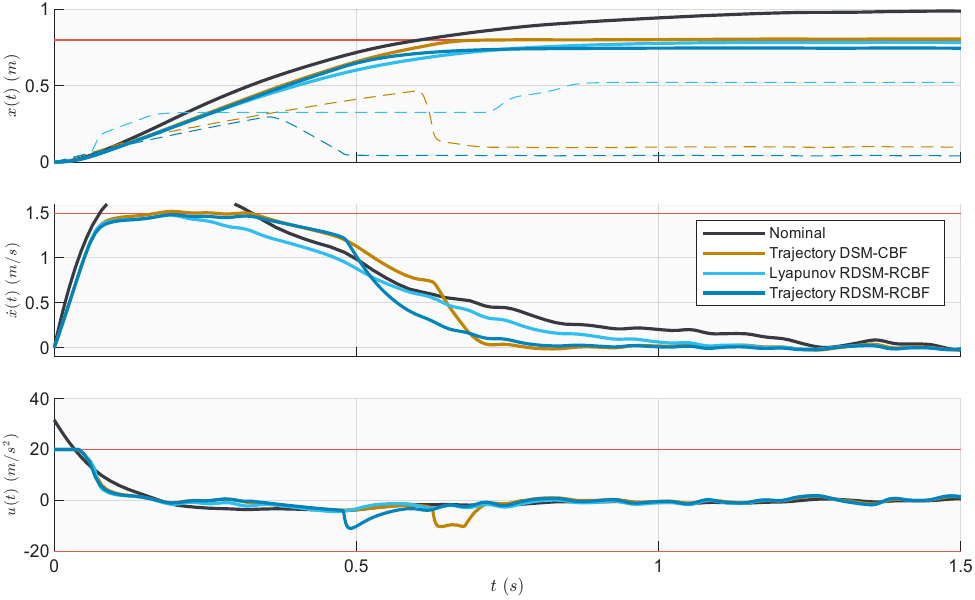}
  \includegraphics[width=0.90\linewidth]{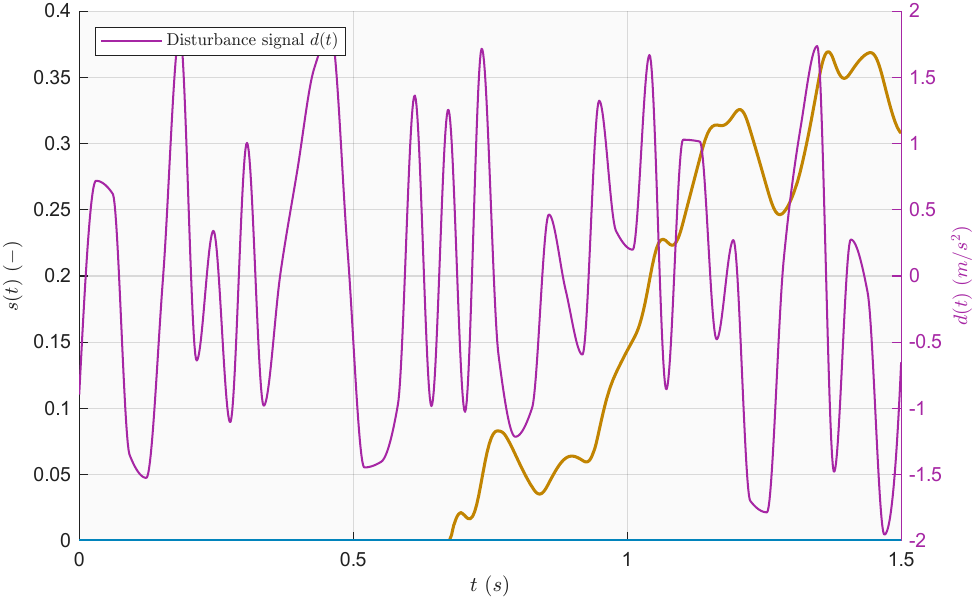}
  \caption{Simulation of the double integrator model subject to disturbances.
  Both robust approaches successfully enforce all the constraints while the
  trajectory DSM-CBF fails to do so. In addition, the trajectory DSM-CBF
  required augmenting the DSM-CBF safety filter with a slack variable to
  prevent infeasibility, its value is given in the bottom plot, which also
  shows that the robust approaches did not require the use of their slack
  variable to maintain feasibility, as expected.}
  \label{fig:dint}
\end{figure}

Fig.~\ref{fig:dint} shows the response of each approach along with
the value of the slack variables. The trajectory DSM-CBF performs best, but
violates (slightly) the constraints and also required the slack variable to
maintain feasibility. The Lyapunov RDSM-RCBF respects all constraints without
the need for the slack variable.  Its performance is slower than the
trajectory DSM-CBF, but remains competitive nonetheless. It also has the
lowest computational footprint because it does not require trajectory
predictions. The trajectory RDSM-RCBF also respects all constraints without
the need for the slack variable. However, the performance is mixed, exhibiting
faster rise time than the Lyapunov RDSM-RCBF, similar to the trajectory
DSM-CBF, but ultimately settling farther from the constraints than the
Lyapunov RDSM-RCBF. To examine exactly why the trajectory RDSM-RCBF is
settling further from the constraints than the Lyapunov RDSM-RCBF despite both
approaches using the same weight factor $P$, refer to
Fig.~\ref{fig:dint-tube}.

\begin{figure}[ht]
\centering
  \includegraphics[width=0.90\linewidth]{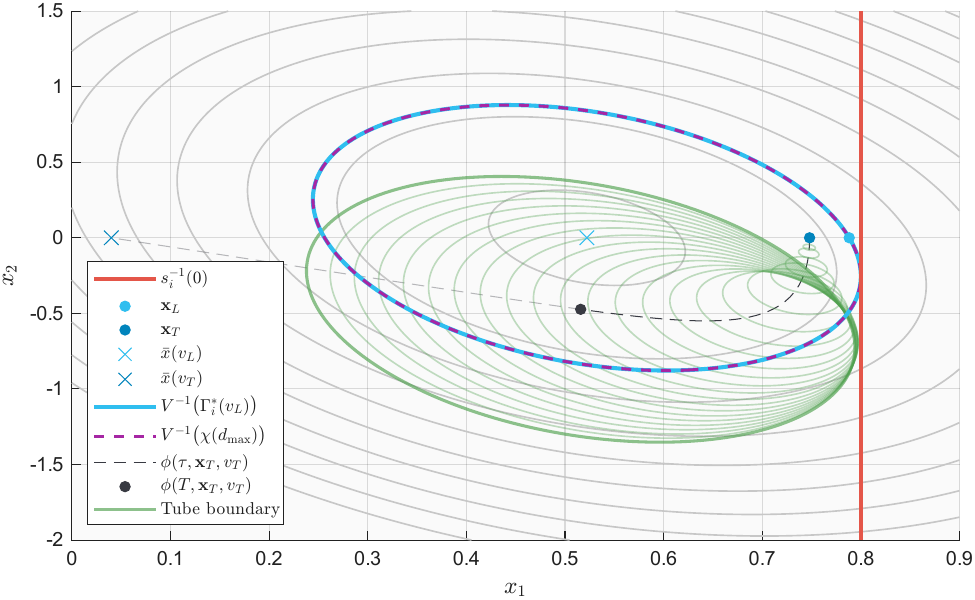}
  \caption{Steady-state analysis of Fig.~\ref{fig:dint} with
  respect to constraint $s_i(\bm{x}, v) = 0.8~\unit{m} - x$. The
  points $(\bm{x}_L, v_L)$ and $(\bm{x}_T, v_T)$ are the settling
  state-reference points of the Lyapunov and trajectory RDSM-RCBF approaches.
  For the Lyapunov RDSM-RCBF, we can see the approach advanced until
  $\Gamma^*(v_L) = \chi(d_{\max})$. Further advancement would cause the
  robust invariant level set of the ISS Lyapunov function described by
  $\chi(d_{\max})$ to breach the constraints. For the trajectory RDSM-RCBF, we
  can see the approach advanced until the boundary of the contraction tube
  $\sR$ given in \eqref{eq:contraction-tube} grazes the constraints
  at some $\tau \in [0,T]$ (not necessarily $\tau = T$), and any further
  advancement would cause the tube to breach the constraints. The settling
  point $\bm{x}_L$ is closer to the constraint boundary than $\bm{x}_T$ in
  this example due to the geometry of the contraction tube. This limitation
  can be addressed by obtaining tighter contraction tubes using, for example,
  non-Eucledian norms instead of $\|\cdot\|_2$. More details about contraction
  results for non-Euclidean norms abound in \cite{bullo2026contraction}.}
  \label{fig:dint-tube}
\end{figure}

\subsection{Inverted Pendulum}
Consider an inverted pendulum system subject to disturbances
\begin{equation}
    \ddot{\theta} = \frac{g_a}{L} \sin\theta + u + d,
\end{equation}
where $g_a = 10~\unit{m/s^2}$ is the acceleration of gravity and $L=1~\unit{m}$ is the length of the pendulum.
The equilibrium manifold is described by
\begin{equation*}
    \bar{x}(v) = \mat{v \\ 0}, \quad \bar{u}(v) = \frac{g_a}{L}\sin v.
\end{equation*}
The state and input constraint sets are
\begin{align*}
  \sX &= \{\bm{x} \in \R^2\mid |\theta| \leq \theta_{\max} = 12~\unit{\deg}\}, \\
  \sU &= \lbrack -u_{\max}, u_{\max} \rbrack = \lbrack -600~\unit{\deg/s^2}, 600~\unit{\deg/s^2} \rbrack,
\end{align*}
and the disturbance set is $\sW = \lbrack -d_{\max}, d_{\max}\rbrack$ with $d_{\max} = 22.9~\unit{deg/s^2}$. The initial state is $\bm{x}_0 = 0 \in \sX$ and the nominal controller is
\begin{equation*}
    \kappa(t,\bm{x}) = \bar{u}\big(r(t)\big) - K_{\kappa}\Big(\bm{x} - \bar{x}\big(r(t)\big)\Big),
\end{equation*}
with gain $K = \lbrack 60 \;\, 4\rbrack$, and where $r(t)$ is a periodic square wave reference signal alternating between $10~\unit{\deg}$ and $-10~\unit{\deg}$.

Considering prestabilizing controller
\begin{equation*}
    \pi(\bm{x}, v) = \bar{u}(v) -K\big(\bm{x}, \bar{x}(v)\big),
\end{equation*}
and defining the error vector $\bm{e} = \bm{x} - \bar{x}(v)$, we have that the prestabilized error dynamics satisfy $\dot{\bm{e}} = A_{\pi}(\xi) \bm{e} + Ed$, where
\begin{equation*}
    A_{\pi}(\xi) = \mat{0 & 1 \\ \frac{g_a}{L} \cos \xi - k_1 & k_2}, \; K = \mat{k_1 & k_2}, \; E = \mat{0 \\ 1}.
\end{equation*}
As shown in \cite{jafarpour2025probabilistic}, since $\cos(\xi)\in \lbrack-1, 1\rbrack$, $A_{\pi}(\xi)$ lies in the convex hull of two extreme vertex matrices
\begin{equation*}
  A_1 = \mat{0 & 1 \\ \frac{g_a}{L}-k_1 & -k_2}, \quad A_2 = \mat{0 & 1 \\ -\frac{g_a}{L} - k_1 & -k_2}.
\end{equation*}
Consider the prestabilized constraint function
\begin{equation*}
  s(\bm{x}, v) = \mat{ \theta + \theta_{\max} \\ \theta_{\max} - \theta \\ \pi(\bm{x}, v) + u_{\max} \\ u_{\max} - \pi(\bm{x}, v) },
\end{equation*}
which satisfies $\sX_v = \{\bm{x} \in \R^2 \mid s(\bm{x}, v) \geq 0\}$.

If $k_1 > g/L$ and $k_2 > 0$, then for any $Q \succ 0$, there exists $P \succ 0$ that satisfies both
\begin{align*}
  &PA_1 + A_1^\top P \prec -Q,\\
  &PA_2 + A_2^\top P \prec -Q.
\end{align*}
Specifically, we pick
\begin{equation*}
  P = \mat{1.72 & 0.24 \\ 0.24 & 0.12}, \quad Q = \mat{4.32 & 0.61 \\ 0.61 & 0.31}.
\end{equation*}
With this, it can be shown that the following is a reference-dependent ISS Lyapunov function satisfying Definition~\ref{def:lyap-iss}
\begin{equation*}
  V(\bm{x}, v) = \big(\bm{x} - \bar{x}(v)\big)^\top P \big(\bm{x} - \bar{x}(v)\big),
\end{equation*}
with $\chi$ and $\Gamma^*$ given as in Section~\ref{ssec:lyap-rdsm}. Moreover, it can also be shown that Assumptions~\ref{ass:contractivity}--\ref{ass:bounded-disturbance-function} are satisfied with
\begin{equation*}
    c = \frac{1}{2}\lambda_{\min}\left(P^{-1/2} Q P^{-1/2}\right) > 0, \quad c_d = \left\|P^{1/2} E\right\|> 0.
\end{equation*}
Using these derivations, we compare the performance of the trajectory DSM-CBF, the Lyapunov RDSM-RCBF, and the trajectory RDSM-RCBF in Fig.~\ref{fig:pend}. The prediction horizon for all approaches that use it is $T=0.5~\unit{s}$ and all class $\cK$ functions used are $\alpha:c \mapsto 50c$.
Both robust approaches achieve safety without the need for slack variables in the RDSM-RCBF safety filter. Moreover, the trajectory RDSM-RCBF exhibits faster rise times than the Lyapunov RDSM-RCBF. The non-robust trajectory DSM-CBF performs well initially, but at around time 6.5, the disturbances cause a spike in the slack variables that derails the simulation and results in safety violations.

\begin{figure}[t]
\centering
  \includegraphics[width=0.90\linewidth]{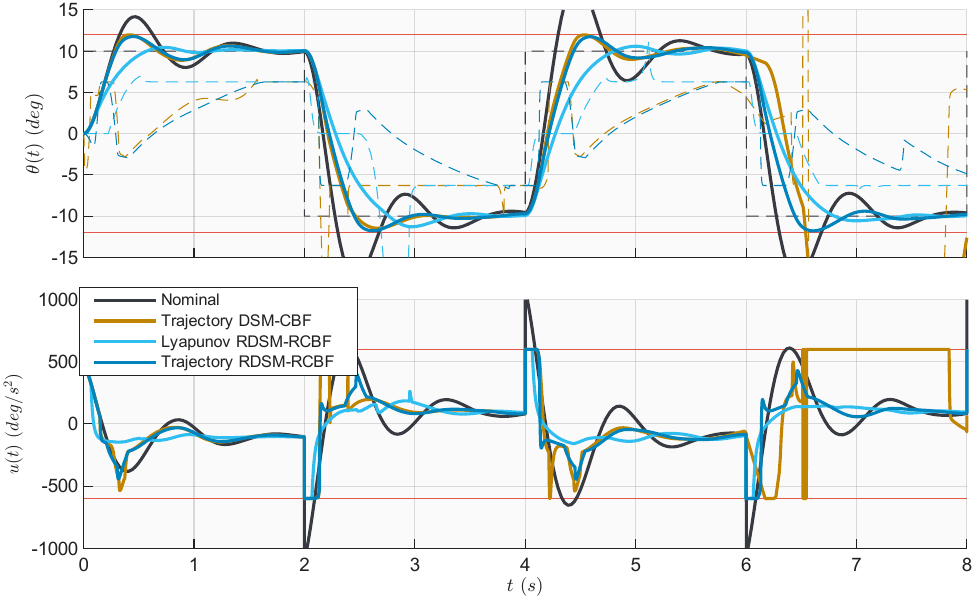}
  \includegraphics[width=0.90\linewidth]{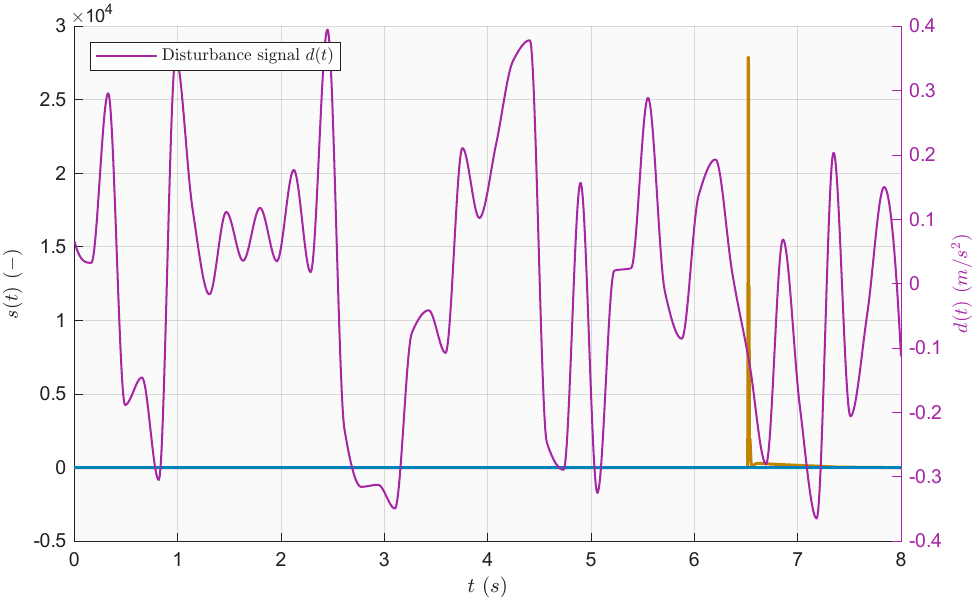}
  \caption{Simulation of the inverted pendulum model subject to disturbances. Both robust approaches successfully enforce all the constraints while the trajectory DSM-CBF fails to do so. In addition, the trajectory DSM-CBF required augmenting the DSM-CBF safety filter with a slack variable to prevent infeasibility, its value is given in the bottom plot, which also shows that the robust approaches did not require the use of their slack variable to maintain feasibility as expected.}
  \label{fig:pend}
\end{figure}

\section{Conclusion} \label{sec:conclusion}
In this paper, we showed that robust dynamic safety margins are robust control barrier functions for an augmented system consisting of the concatenation of the state and a virtual reference. In addition, we formulated an augmented the RDSM-RCBF safety filter to distill safe controllers. Next, we presented two systematic approaches to construct RDSM-RCBFs leveraging, respectively, ISS Lyapunov functions and contractive trajectory predictions. Moreover, provided closed-form derivations for the special case of linear systems. Finally, we highlighted the robustness properties of the presented approaches against their nonrobust counterparts in two simulation examples.

\bibliographystyle{IEEEtran}
\bibliography{references}

@inproceedings{nagumo1942lage,
  title={{\"U}ber die lage der integralkurven gew{\"o}hnlicher differentialgleichungen},
  author={Nagumo, Mitio},
  booktitle={Proc. Phys.-Math. Soc. Jpn.},
  volume={24},
  pages={551--559},
  year={1942},
}

@book{berge1963topological,
  title={Topological spaces: Including a treatment of multi-valued functions, vector spaces and convexity},
  author={Berge, Claude},
  year={1963},
  publisher={Oliver \& Boyd}
}

@article{mangasarian1967fritz,
  title={The {F}ritz {J}ohn necessary optimality conditions in the presence of equality and inequality constraints},
  author={Mangasarian, Olvi L and Fromovitz, Stan},
  journal={J. Math. Anal. Appl.},
  volume={17},
  number={1},
  pages={37--47},
  year={1967},
  publisher={Elsevier}
}

@article{sontag1995characterizations,
  title={On characterizations of the input-to-state stability property},
  author={Sontag, Eduardo D and Wang, Yuan},
  journal={Systems \& Control Letters},
  volume={24},
  number={5},
  pages={351--359},
  year={1995},
  publisher={Elsevier}
}

@article{albertini1999continuous,
  title={Continuous control-Lyapunov functions for asymptotically controllable
         time-varying systems},
  author={Albertini, Francesca and Sontag, Eduardo D},
  journal={International Journal of Control},
  volume={72},
  number={18},
  pages={1630--1641},
  year={1999},
  publisher={Taylor \& Francis}
}

@book{khalil2002nonlinear,
  author = "Khalil, Hassan K.",
  title = "{Nonlinear systems}",
  publisher = "Prentice-Hall",
  address = "Upper Saddle River, NJ",
  year = "2002",
}

@book{blanchini2008set,
  title={Set-theoretic methods in control},
  author={Blanchini, Franco and Miani, Stefano},
  year={2008},
  publisher={Springer}
}

@book{aubin2009set,
  title={Set-valued analysis},
  author={Aubin, Jean-Pierre and Frankowska, H{\'e}l{\`e}ne},
  year={2009},
  publisher={Springer},
  note={Reprint of the 1990 edition}
}

@inproceedings{nicotra2016explicit,
  title={An explicit reference governor for the robust constrained control of nonlinear systems},
  author={Nicotra, Marco M and Garone, Emanuele},
  booktitle={2016 IEEE 55th Conference on Decision and Control (CDC)},
  pages={1502--1507},
  year={2016},
  organization={IEEE}
}

@book{bertsekas2016nonlinear,
  edition = {3rd edition},
  title = {Nonlinear programming},
  isbn = {978-1-886529-05-2},
  publisher = {Athena Scientific},
  author = {Bertsekas, Dimitri P.},
  year = {2016},
}

@article{garone2017reference,
  title={Reference and command governors for systems with constraints: A survey on theory and applications},
  author={Garone, Emanuele and Di Cairano, Stefano and Kolmanovsky, Ilya},
  journal={Automatica},
  volume={75},
  pages={306--328},
  year={2017},
  publisher={Elsevier}
}

@article{xu2018constrained,
  title={Constrained control of input--output linearizable systems using control sharing barrier functions},
  author={Xu, Xiangru},
  journal={Automatica},
  volume={87},
  pages={195--201},
  month={Jan.},
  year={2018},
  publisher={Elsevier}
}

@article{van2024disturbance,
  title={Disturbance-robust backup control barrier functions: Safety under uncertain dynamics},
  author={van Wijk, David EJ and Coogan, Samuel and Molnar, Tamas G and Majji, Manoranjan and Hobbs, Kerianne L},
  journal={IEEE Control Systems Letters},
  volume={8},
  pages={2817--2822},
  year={2024},
  publisher={IEEE}
}

@Book{bullo2026contraction,
  author =    {F. Bullo},
  title =     {Contraction Theory for Dynamical Systems},
  year =      2026,
  publisher = {Kindle Direct Publishing},
  ISBN =      {979-8836646806},
  url =       {https://fbullo.github.io/ctds},
}

@article{dacs2026robust,
  title={Robust Adaptive Backup Control Barrier Functions},
  author={Da{\c{s}}, Ersin and van Wijk, David EJ and Molnar, Tamas G and Ames, Aaron D and Burdick, Joel W},
  journal={arXiv preprint arXiv:2607.20842},
  year={2026}
}

@inproceedings{choi2021robust,
  title = {Robust control barrier--value functions for safety-critical control},
  author = {Choi, Jason J. and Lee, Donggun and Sreenath, Koushil and Tomlin, Claire J. and Herbert, Sylvia L.},
  booktitle = {Proc. 60th IEEE Conf. Decis. Control},
  pages = {6814--6821},
  month = {December},
  year = {2021},
  address = {Austin, TX, USA},
}

@article{wabersich2023data,
  title = {Data-driven safety filters: {H}amilton--{J}acobi reachability, control barrier functions, and predictive methods for uncertain systems},
  author = {Wabersich, Kim P. and Taylor, Andrew J. and Choi, Jason J. and Sreenath, Koushil and Tomlin, Claire J. and Ames, Aaron D. and Zeilinger, Melanie N.},
  journal = {IEEE Control Syst. Mag.},
  volume = {43},
  number = {5},
  pages = {137--177},
  month = {October},
  year = {2023},
}

@inproceedings{wang2018permissive,
  title = {Permissive barrier certificates for safe stabilization using sum-of-squares},
  author = {Wang, Li and Han, Dongkun and Egerstedt, Magnus},
  booktitle = {Proc. Amer. Control Conf.},
  pages = {585--590},
  month = {June},
  year = {2018},
  address = {Milwaukee, WI, USA},
}

@article{clark2025semialgebraic,
  title = {A semialgebraic framework for verification and synthesis of control barrier functions},
  author = {Clark, Andrew},
  journal = {IEEE Trans. Autom. Control},
  volume = {70},
  number = {5},
  pages = {3101--3116},
  month = {May},
  year = {2025},
}

@inproceedings{srinivasan2020synthesis,
  title = {Synthesis of control barrier functions using a supervised machine learning approach},
  author = {Srinivasan, Mohit and Dabholkar, Amogh and Coogan, Samuel and Vela, Patricio A.},
  booktitle = {Proc. IEEE/RSJ Int. Conf. Intell. Robots Syst.},
  pages = {7139--7145},
  month = {October},
  year = {2020},
  address = {Las Vegas, NV, USA},
}

@inproceedings{robey2020learning,
  title = {Learning control barrier functions from expert demonstrations},
  author = {Robey, Alexander and Hu, Haimin and Lindemann, Lars and Zhang, Hanwen and Dimarogonas, Dimos V. and Tu, Stephen and Matni, Nikolai},
  booktitle = {Proc. 59th IEEE Conf. Decis. Control},
  pages = {3717--3724},
  month = {December},
  year = {2020},
  address = {Jeju, Republic of Korea},
}

@article{bajelani2025data,
  title = {Data-driven input-output control barrier functions},
  author = {Bajelani, Mohammad and van Heusden, Klaske},
  journal = {IEEE Control Syst. Lett.},
  volume = {9},
  pages = {216--221},
  month = {May},
  year = {2025},
}

@article{kaviani2026synthesis,
  title = {Synthesis of Data-Driven Discrete-Time Control Barrier Functions Using Maximal Admissible Sets},
  author = {Kaviani, Farzan and Pedari, Yasaman and Ossareh, Hamid R.},
  journal = {IEEE Control Syst. Lett.},
  volume = {10},
  pages = {907--912},
  month = {June},
  year = {2026},
}

@article{gurriet2020scalable,
  title = {A scalable safety critical control framework for nonlinear systems},
  author = {Gurriet, Thomas and Mote, Mark and Singletary, Andrew and Nilsson, Petter and Feron, Eric and Ames, Aaron D.},
  journal = {IEEE Access},
  volume = {8},
  pages = {187249--187275},
  month = {September},
  year = {2020},
}

@inproceedings{chen2021backup,
  title = {Backup control barrier functions: Formulation and comparative study},
  author = {Chen, Yuxiao and Jankovic, Mrdjan and Santillo, Mario and Ames, Aaron D.},
  booktitle = {Proc. 60th IEEE Conf. Decis. Control},
  pages = {6835--6841},
  month = {December},
  year = {2021},
  address = {Austin, TX, USA},
}

@article{van2026generalizations,
  title = {Generalizations of Backup Control Barrier Functions: Expansion and Adaptation for Input-Bounded Safety-Critical Control},
  author = {van Wijk, David E.J. and Lee, Dohyun and Da{\c{s}}, Ersin and Molnar, Tamas G. and Ames, Aaron D. and Burdick, Joel W.},
  journal = {IEEE Control Syst. Lett.},
  volume = {10},
  pages = {1945--1950},
  month = {July},
  year = {2026},
}

@inproceedings{van2026safety,
  title = {Safety-critical control with bounded inputs: A closed-form solution for backup control barrier functions},
  author = {van Wijk, David E.J. and Das, Ersin and Molnar, Tamas G. and Ames, Aaron D. and Burdick, Joel W.},
  booktitle = {Proc. Amer. Control Conf.},
  pages = {},
  note = {to appear},
  month = {May},
  year = {2026},
  address = {New Orleans, LA, USA},
}

@article{jankovic2018robust,
  title = {Robust control barrier functions for constrained stabilization of nonlinear systems},
  author = {Jankovic, Mrdjan},
  journal = {Automatica},
  volume = {96},
  pages = {359--367},
  month = {October},
  year = {2018},
}

@article{kolathaya2019input,
  title = {Input-to-state safety with control barrier functions},
  author = {Kolathaya, Shishir and Ames, Aaron D.},
  journal = {IEEE Control Syst. Lett.},
  volume = {3},
  number = {1},
  pages = {108--113},
  month = {January},
  year = {2019},
}

@article{alan2021safe,
  title = {Safe controller synthesis with tunable input-to-state safe control barrier functions},
  author = {Alan, Anil and Taylor, Andrew J. and He, Chaozhe R. and Orosz, G{\'a}bor and Ames, Aaron D.},
  journal = {IEEE Control Syst. Lett.},
  volume = {6},
  pages = {908--913},
  month = {June},
  year = {2021},
}

@article{nicotra2018explicit,
  title = {The explicit reference governor: A general framework for the closed-form control of constrained nonlinear systems},
  author = {Nicotra, Marco M. and Garone, Emanuele},
  journal = {IEEE Control Syst. Mag.},
  volume = {38},
  number = {4},
  pages = {89--107},
  month = {August},
  year = {2018},
}

@article{freire2023systematic,
  title = {Systematic design of discrete-time control barrier functions using maximal output admissible sets},
  author = {Freire, Victor and Nicotra, Marco M.},
  journal = {IEEE Control Syst. Lett.},
  volume = {7},
  pages = {1891--1896},
  month = {June},
  year = {2023},
  doi = {10.1109/LCSYS.2023.3282888},
}

@article{li2023governor,
  title={Governor-parameterized barrier function for safe output tracking with locally sensed constraints},
  author={Li, Zhichao and Atanasov, Nikolay},
  journal={Automatica},
  volume={152},
  pages={110996},
  month={June},
  year={2023},
}

@book{aubin1991viability,
  title = {Viability Theory},
  author = {Aubin, Jean-Pierre},
  year = {1991},
  publisher = {Birkh{\"a}user},
  address = {Boston, MA, USA},
}

@article{blanchini1999set,
  title = {Set invariance in control},
  author = {Blanchini, Franco},
  journal = {Automatica},
  volume = {35},
  number = {11},
  pages = {1747--1767},
  month = {November},
  year = {1999},
}

@article{jafarpour2025probabilistic,
  title={Probabilistic reachability analysis of stochastic control systems},
  author={Jafarpour, Saber and Liu, Zishun and Chen, Yongxin},
  journal={IEEE Trans. Autom. Control},
  volume={70},
  number={11},
  pages={7080--7094},
  year={2025},
  month={November},
}

@manual{mosek,
  author = {MOSEK~ApS},
  title = {The MOSEK optimization toolbox for MATLAB manual. Version 11.2.2.},
  year = {2026},
}

@inproceedings{freire2026designing,
  author = {Freire, Victor and Nicotra, Marco M.},
  title = {Designing Control Barrier Functions Using a Dynamic Backup Policy},
  booktitle = {Proc. 23rd IFAC World Congr.},
  year = {2026},
  month = {August},
  address = {Busan, Republic of Korea},
}

@article{freire2026using,
  author = {Freire, Victor and Nicotra, Marco M.},
  title = {Using dynamic safety margins as control barrier functions},
  journal = {IEEE Trans. Autom. Control},
  year = {2026},
  month = {September},
  doi = {10.1109/TAC.2026.3686634},
  volume = {71},
  issue = {9},
  pages = {5908--5919},
}

\end{document}